\newtheorem{theorem}{Theorem}[section]
\newtheorem{lemma}[theorem]{Lemma}
\newtheorem{definition}{Definition}[section]
\newtheorem{claim}[theorem]{Claim}
\newtheorem{corollary}[theorem]{Corollary}
\newtheorem{remark}{Remark}[section]
\DeclareMathOperator*{\argmax}{arg\,max}
\DeclareMathOperator{\OPT}{OPT}
\DeclareMathOperator{\ALG}{ALG}
\newcommand{\reals}{{\mathbb R}}
\newcommand{\eps}{\varepsilon}
\newcommand{\E}{\mathbb E}
\newcommand{\N}{N}
\newcommand{\cI}{\mathcal I}
\newcommand{\cD}{\mathcal D}
\newcommand{\cA}{\mathcal{A}}
\title{A Unified Approach to Submodular Maximization Under Noise}
\author{%
  Kshipra Bhawalkar \\
  Google \\
  Mountain View, CA \\
  \texttt{kshipra@google.com} \\
  \And
  Yang Cai \\
  Yale University \\
  New Haven, CT \\
  \texttt{yang.cai@yale.edu}\\
  \AND
  Zhe Feng \\
  Google \\
  Mountain View, CA \\
  \texttt{zhef@google.com} \\
  \And
  Christopher Liaw \\
  Google \\
  Mountain View, CA \\
  \texttt{cvliaw@google.com} \\
  \And
  Tao Lin\thanks{Work done as a student researcher at Google in 2024.} \\
  Harvard University \\
  Cambridge, MA \\
  \texttt{tlin@g.harvard.edu}
}
\begin{document}

\maketitle

\begin{abstract}
We consider the problem of maximizing a submodular function with access to a \emph{noisy} value oracle for the function instead of an exact value oracle.
Similar to prior work \cite{huang_efficient_2022,hassidim2017submodular}, we assume that the noisy oracle is persistent in that multiple calls to the oracle for a specific set always return the same value.
In this model, \citet{hassidim2017submodular} design a $(1-1/e)$-approximation algorithm for monotone submodular maximization subject to a cardinality constraint and \citet{huang_efficient_2022} design a $(1-1/e)/2$-approximation algorithm for monotone submodular maximization subject to any arbitrary matroid constraint.
In this paper, we design a meta-algorithm that allows us to take any ``robust'' algorithm for exact submodular maximization as a black box and transform it into an algorithm for the noisy setting while retaining the approximation guarantee.
By using the meta-algorithm with the measured continuous greedy algorithm, we obtain a $(1-1/e)$-approximation (resp.~$1/e$-approximation) for monotone (resp.~non-monotone) submodular maximization subject to a matroid constraint under noise.
Furthermore, by using the meta-algorithm with the double greedy algorithm, we obtain a $1/2$-approximation for unconstrained (non-monotone) submodular maximization under noise.

\end{abstract}

\section{Introduction}
Submodular maximization is a fundamental problem that frequently appears in various forms in many fields
such as machine learning, combinatorial optimization, and economics. 
Submodular functions are functions that satisfy the \emph{diminishing returns property}.
More formally, a function $f \colon 2^N \to \reals$ on a ground set $N$ is said to be submodular if for any sets $S \subseteq T \subseteq N$ and element $i \in N \setminus T$, we have $f(S \cup \{i\}) - f(S) \geq f(T \cup \{i\}) - f(T)$.
Despite the intuitive nature of submodular functions, many basic problems are NP-hard such as maximizing a monotone submodular function subject to a cardinality constraint or maximizing a non-monotone submodular function 
without any constraints.
Given the inherent intractability of this problem, there has been a vast body of research that aims to develop computationally efficient approximation algorithms for maximizing submodular functions in various settings.

The standard model in submodular maximization assumes that we have \emph{value oracle access} to the submodular function $f$ where, given a set $S$, we can retrieve the exact value of $f(S)$.
However, in many settings it is not realistic to assume the existence of an exact value oracle.
For example, in machine learning, we can never evaluate the true loss function of a model as we do not have access to the true distribution of the population.
At best, we may have access to a noisy version of the loss function.
In this paper, we study a model for noisy submodular maximization introduced by \citet{hassidim2017submodular} where querying a set $S$ returns only an unbiased estimate of $f(S)$.
A notable feature of this model is \emph{persistent noise}, where querying the value of a set twice returns the same value.
Thus, simply querying a submodular function multiple times at the same 
input cannot be used to denoise the function.
Nonetheless, one can still ask whether existing algorithms work in this setting.
However, \citet{hassidim2017submodular} show that this is probably unlikely.
For example, they show that the natural greedy algorithm is too sensitive to noise and only obtains an $o(1)$-approximation despite being the optimal algorithm when there is no noise.
Thus, new algorithms are needed to deal with noise.

For the problem of maximizing a noisy monotone submodular function subject to a cardinality constraint, \citet{hassidim2017submodular} designed an algorithm that achieves a tight $(1-1/e)$-approximation.
At a high-level, their algorithm computes a smooth surrogate function which essentially averages a few queries together to obtain an estimate that may have less noise but potentially be biased.
They then apply the greedy algorithm on this surrogate function to obtain a $(1-1/e)$-approximation.
In a follow-up work, \citet{huang_efficient_2022} proved a $(1-1/e) / 2$-approximation ratio for noisy monotone submodular maximization under an arbitrary matroid. This approximation ratio can be improved to $1-1/e$ if the matroid satisfies a ``strongly base-orderable'' property.
While \citet{huang_efficient_2022} also make use of the smoothing technique, their algorithm is based on a local search algorithm \citep{filmus2014monotone}.

\paragraph{Main question.}
While prior work has established that submodular maximization is feasible in the noisy setting, a notable downside is that the algorithms and analyses are designed for their specific problems.
For example, \citet{hassidim2017submodular} only obtain results for monotone submodular maximization subject to a cardinality constraint and, while \citet{huang_efficient_2022} extend to arbitrary matroids, their analyses have a factor of $2$ gap with the optimal result, despite an algorithm which is provably optimal in the non-noisy setting.
Our main question is whether such specific analyses are necessary.
In particular, we want to answer the following question:
\begin{quote}
    \emph{
    Can we design a framework for noisy submodular maximization that allows us to reuse \emph{existing} algorithms for the \emph{noiseless} submodular maximization problems while \emph{retaining} existing guarantees?
    }
\end{quote}

\paragraph{Our contributions.}
Our main contribution in this paper is an affirmative answer to the above question.
In particular, we show that any algorithm for exact submodular optimization which is sufficiently ``robust'' can be translated to an algorithm for noisy submodular optimization with only an $o(1)$ loss in the approximation ratio.
For many settings, this obviates the need for designing new algorithms and instead relies only on checking whether existing algorithms are sufficiently robust.
Our technique builds on but differs from \citep{hassidim2017submodular}: we propose the use of a random surrogate function, which circumvents the limitation of a deterministic surrogate function in \citep{hassidim2017submodular}. 

For our first application, we instantiate our framework with the continuous greedy algorithm \citep{calinescu_maximizing_2011,feldman_unified_2011}.
Specifically, we use the measured continuous greedy algorithm of \citet{feldman_unified_2011} which gives a $(1-1/e)$-approximation for monotone submodular maximization and a $1/e$-approximation for non-monotone submodular maximization, both subject to matroid constraints.
As our framework inherits the approximation guarantees of these algorithms, we achieve the same approximation ratios for the noisy setting, with only $o(1)$ loss in the approximation ratio.
Our second application is to instantiate our framework with the double greedy algorithm \cite{Buchbinder_Feldman_Seffi_Schwartz_2015} to obtain a $1/2$-approximation for noisy unconstrained submodular maximization, which is tight even without noise. 
To our knowledge, we provide the first set of results for noisy submodular maximization for non-monotone functions, in both unconstrained and matroid constrained settings.
We summarize our results in Table~\ref{table:results}.


\paragraph{Other related works.}

Another work that is closely related to ours is \citet{hassidim2018optimization}, which also studies the setting where the noise is persistent.
In particular, they considered intersection of $P$ matroids and show a $1 / (1+P)$-approximation.
There have also been works that look at the noisy submodular setting but \emph{without} persistent noise: for example, 
\citet{singla2016noisy} consider a model where one only has preference information, and \citet{chen2023threshold} assume that one has access to noisy marginal estimates instead of a value oracle.

Another related line of work is optimizing approximately submodular functions,
where the noise is not unbiased and can be adversarial.  
\citet{horel2016maximization} show that even there is only $1/\sqrt{n}$ noise, the problem becomes intractable and prove exponential lower bounds on the query complexity.
On the other hand, they show that constant approximation algorithms are achievable when the ``curvature'' is bounded or when the noise is small in terms of the rank of the matroid.
\citet{zheng2021maximizing} study maximizing approximately $k$-submodular functions.
\citet{chierichetti2022additive} study functions that satisfy submodularity $\eps$-approximately and convert them to exactly submodular functions that are $\eps n^2$-close to the original functions, establishing a reduction to the exact submodular setting.
A difference with our work is that our noisy value function is far from being $\eps$-approximately submodular. A key contribution of our work is to convert our noisy value function to a surrogate function that is close to the underlying submodular function, which then enables a reduction similar to \cite{chierichetti2022additive}. 

A less related line of work is on \emph{online} submodular maximization \cite{streeter2008online, roughgarden2018optimal, soma2019no, harvey2020improved, niazadeh2021online}.
Here, one is given a sequence of submodular functions $f_1, f_2, \ldots$ and \emph{before} seeing $f_t$, a player must decide on a set to play.
The goal is to minimize ``$\alpha$-regret'' which is the difference between the value obtained by the player and an $\alpha$-approximation of the optimum.
Although the problem setting is different, these works also tend to show that many known algorithms are ``robust'' and develop a way to reuse these algorithms in the online setting.
Such an approach is similar to ours as we show that a number of algorithms are robust and reuse them for the noisy setting.


\begin{table}[]
\renewcommand{\arraystretch}{1.05}
\caption{Our work compared with previous works on noisy submodular maximization. (tight) means optimal approximation ratios achievable by polynomial-time algorithms even without noise.}
\label{table:results}
\centering
\begin{tabular}{@{}llll@{}}
\toprule
\textbf{Submodular function} & \textbf{Constraint} & \textbf{Previous work} & \textbf{Our work} \\
\midrule
\multirow{2}{*}{Monotone} & Cardinality & $1-1/e$ \, (tight) & $1-1/e$ \, (tight) \\
 & & \citep{hassidim2017submodular} & Theorem \ref{thm:matroid-monotone} \\
\cmidrule(l){2-4}
 & Matroid & $(1-1/e)/2$ & $1-1/e$ \, (tight) \\
 &  & \citep{huang_efficient_2022} & Theorem \ref{thm:matroid-monotone} \\
\midrule
\multirow{2}{*}{Non-monotone} & Unconstrained & None & $1/2$ \, (tight) \\
& & & Theorem~\ref{thm:non-monotone-unconstrained} \\
\cmidrule(l){2-4}
 & Matroid & None & $1/e$ \\
 & & & Theorem \ref{thm:matroid-non-monotone} \\
\bottomrule
\end{tabular}
\end{table}

\section{Preliminaries}
\paragraph{Submodular set function.}
Let $N$ be a ground set of size $|N| = n$. For a set $S$ and an element $x$, we denote $S + x = S\cup\{x\}$ and $S-x = S\setminus\{x\}$. 
Let $f: 2^N \to \reals_{\ge 0}$ be a non-negative function defined on subsets of $N$.
Assume $f(\emptyset) = 0$.\footnote{For our applications, this is without loss of generality since one can add a ``dummy'' element $x_0$ and then define $g(\emptyset) = 0$, $g(x_0) = f(\emptyset)$, and $g(S) = g(S+x_0) = f(S)$ for $S\ne \emptyset$. 
The function $g$ remains submodular, and an $\alpha$-approximation for maximizing $g$ gives an $\alpha$-approximation for maximizing $f$. For feasibility, any set that was feasible for the original problem remains feasible after adding $x_0$. This remains a matroid constraint.}
We use $f_S(x) = f(S+x) - f(S)$ to denote the marginal value of element $x \in N$ with respect to set $S\subseteq N$. 
The function $f$ is said to be:
\begin{itemize}[leftmargin=1.5em, itemsep=2pt, topsep=0pt]
    \item \emph{submodular} if for any subsets $A\subseteq B \subseteq N$ and any element $x\in N\setminus B$, $f_A(x) \ge f_B(x)$.  
    \item \emph{monotone} if for any subsets $A\subseteq B\subseteq N$, $f(A) \le f(B)$. 
\end{itemize}
This work considers both monotone and non-monotone (general) submodular functions. 

\paragraph{Noisy value oracle.}
A noisy value oracle for $f$ is denoted by $\tilde f: 2^\N \to \reals_{\ge 0}$. Following previous works \citep{hassidim2017submodular, huang_efficient_2022}, we consider a multiplicative noisy value oracle defined by $\tilde f(S) = \xi_S f(S)$ for all sets $S\subseteq N$, where $\xi_S$ is a non-negative random variable distributed according to some distribution $\cD$.  We call $\xi_S$ the noise multiplier for set $S$.  Following \citep{hassidim2017submodular, huang_efficient_2022}, we assume that the noisy value oracle satisfies the following three properties:

\begin{itemize}[leftmargin=1.5em, itemsep=2pt, topsep=0pt]
\item \emph{Unbiased}: $\E[\tilde f(S)] = f(S)$, namely, $\E[\xi_S] = 1$. 
\item \emph{Persistent}: querying a set $S$ multiple times returns the same value $\tilde f(S)$.
\item \emph{Independent across different sets}: for different sets $S_1, \ldots, S_k$, $\tilde f(S_1), \ldots, \tilde f(S_k)$ are independent.  Namely, $\xi_{S_1}, \ldots, \xi_{S_k}$ are independent. 
\end{itemize} 
If one does not assume persistence, and querying $\tilde f(S)$ multiple times gives independent estimates of $f(S)$, then the problem becomes trivial since one can easily estimate $f(S)$ by repeated sampling.

In this work, we assume that the noise multiplier $\xi_S \sim \cD$ is sub-exponential.
\begin{definition}[see, e.g., \citet{wainwright_high-dimensional_2019}]
A distribution $\cD$ (or a random variable $\xi \sim \cD$) is \emph{sub-exponential with parameters $(\nu, \alpha)$} if
$\; \E[e^{\lambda (\xi - \E[\xi])}] \le e^{\frac{\nu^2 \lambda^2}{2}}$ holds for any $\lambda$ satisfying $|\lambda| \le \frac{1}{\alpha}$. 
\end{definition} 
Sub-exponential distributions are a large class of distributions, including bounded, Gaussian, and exponential distributions.
For example, a random variable $\xi$ bounded in $[0, B]$ is sub-exponential with parameters $(\nu=B, \alpha=0)$.  Moreover, sub-exponential distributions include the generalized exponential tail distributions that have been considered by previous work on monotone submodular maximization under noise \citep{hassidim2017submodular, huang_efficient_2022}.
We assume that the parameters $(\nu, \alpha)$ of the sub-exponential distribution $\cD$ are known while the distribution $\cD$ itself is unknown.

\paragraph{Matroid constraints.}
We aim to maximize the function $f$ using noisy value oracle $\tilde f$ over subsets $S\subseteq N$ that satisfy some constraints. Let $\cI \subseteq 2^N$ be a collection of feasible subsets of $N$.
We assume $\cI$ to be downward-closed: i.e., for $I \in \cI$ and $I'\subseteq I$,  $I' \in \cI$.
We define two types of constraints:
\begin{itemize}[leftmargin=1.5em, itemsep=2pt, topsep=0pt]
    \item \emph{Unconstrained:} $\cI = 2^N$. 
    \item \emph{Matroid constraint:} $\cI$ is called a matroid if, in addition to being downward-closed, the following condition holds: 
    for any $I_1, I_2 \in \cI$ satisfying $|I_1| < |I_2|$, there exists an element $e \in I_2 \setminus I_1$ such that $I_1 \cup \{e\} \in \cI$. Each $I\in \cI$ is called an independent set.  A maximal independent set is called a basis of the matroid.
    It can be shown that all maximal independent sets are of the same size, which is called the \emph{rank} of the matroid.
    We denote the rank by $r = r(\mathcal I)$.
\end{itemize}
An important special case of a matroid constraint is a cardinality constraint, where $I \in \cI$ if and only if $|I| \leq r$.

Given a feasibility set $\cI$, we use $O^*$ to denote an optimal set. In other words, $O^* \in \argmax_{S \in \mathcal I} f(S)$.
We will use $f(O^*)$ to denote the optimal value.
Recall that solving the above optimization problem is generally NP-hard but efficient approximation algorithms are known for many settings.





\section{A Unified Approach to Noisy Submodular Maximization}
\label{sec:unified-approach}
In this section, we present a unified approach to the noisy submodular maximization problem.  This approach is a reduction to the submodular maximization problem with the exact value oracle.



\begin{theorem}[Informal]
\label{thm:informal}
Let $\cA$ be a ``robust'' algorithm that obtains an $\alpha$-approximation ratio to the problem $\max_{S \in \cI} f(S)$ with exact value oracle.
Then, $\cA$ can be converted into an algorithm achieving an $(\alpha - o(1))$-approximation to the problem $\max_{S \in \cI} f(S)$ with the noisy oracle $\tilde{f}$. 
\end{theorem}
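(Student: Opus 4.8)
The plan is to construct a random surrogate function $\hat{f}$ from the noisy oracle $\tilde{f}$ such that, with high probability, $\hat{f}$ is pointwise close to the true submodular function $f$ at every set the algorithm $\cA$ actually queries, and then argue that a ``robust'' algorithm's output on $\hat{f}$ is nearly as good as its output on $f$. The first step is to pin down the right notion of robustness: $\cA$ should be insensitive to small multiplicative (or additive, scaled by $f(O^*)$) perturbations of the oracle values it sees, and should make only polynomially many queries. Since the noise is persistent, we cannot denoise by repeated querying of the \emph{same} set; instead, following the spirit of \citet{hassidim2017submodular} but with a crucial randomization twist, I would define $\hat{f}(S)$ as an average of $\tilde{f}$ over a random collection of sets obtained by perturbing $S$ slightly (e.g., adding or removing a small random subset of elements). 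The averaging reduces variance because the noise multipliers $\xi_{S'}$ are independent across the distinct sets in the collection; sub-exponential concentration (the tail bound implied by the $(\nu,\alpha)$ parameters) then gives that the empirical average concentrates around its mean. Randomizing the choice of perturbation set — rather than using a fixed deterministic neighborhood as in \citep{hassidim2017submodular} — is what lets us control the bias: over the random draw, $\E[\hat f(S)]$ is a smoothed version of $f$, and because $f$ is submodular (hence has bounded marginals controlled by $f(O^*)$ and Lipschitz-type behavior under small symmetric perturbations), this smoothed value is within $o(1)\cdot f(O^*)$ of $f(S)$.

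Next I would carry out the error-propagation argument. Condition on the high-probability event that $|\hat f(S) - f(S)| \le \eps \cdot f(O^*)$ simultaneously for all $S$ in the (polynomially bounded) set of queries that $\cA$ makes when run on $\hat f$; a union bound over $\poly(n)$ queries against the sub-exponential concentration, with the averaging size chosen as a slowly growing function of $n$, drives the per-query failure probability low enough and makes $\eps = o(1)$. On this event, $\hat f$ is an $\eps\, f(O^*)$-additive approximation to $f$ along $\cA$'s trajectory. Invoking the robustness hypothesis, running $\cA$ on $\hat f$ returns a set $S$ with $f(S) \ge (\alpha - O(\eps)) f(O^*)$. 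One subtlety is that $\hat f$ is not itself submodular, so ``robustness'' must be stated for arbitrary oracles that are uniformly close to a submodular function (this is analogous to the reduction of \citet{chierichetti2022additive}); I would phrase the formal theorem so that $\cA$ only needs to tolerate additive $\eps f(O^*)$ errors in its query answers, which is exactly what $\hat f$ provides. For the continuous-greedy and double-greedy instantiations, one additionally checks the multilinear extension / fractional values inherit the same closeness, so $\eps$-closeness of the set oracle transfers to $\eps$-closeness of the gradient estimates used inside those algorithms.

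**The hard part will be** the bias control for the random surrogate: we need the smoothing operation to simultaneously (i) average over enough distinct sets that independence kicks in and the variance shrinks to $o(1)^2 f(O^*)^2$, and (ii) perturb sets so little that the smoothed value $\E_{\text{perturbation}}[f(\cdot)]$ stays within $o(1) f(O^*)$ of $f$ itself. These pull in opposite directions — more averaging needs more perturbation — and making them compatible requires exploiting submodularity quantitatively (bounded marginals, so $f$ changes by at most $O(k/n)\, f(O^*)$-type amounts under a $k$-element symmetric perturbation when $k \ll n$) together with a careful choice of the perturbation size as, say, $n^{1/3}$ or a similar sublinear quantity. A secondary technical point is handling the persistence correctly: the surrogate $\hat f(S)$ must be a deterministic function of the external randomness (the choice of which neighborhoods to average) and of the persistent noise, so that $\cA$ sees a consistent oracle; I would fix the randomness once at the start and define $\hat f$ as a genuine (random but fixed) function on $2^N$. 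Finally, I would verify that the running time remains polynomial: each evaluation of $\hat f$ costs $\poly(n)$ oracle calls, and $\cA$ makes $\poly(n)$ evaluations, so the overall query and time complexity stay polynomial.
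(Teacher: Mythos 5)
Your overall architecture (average the noisy oracle over a random neighborhood of distinct sets so that independence gives sub-exponential concentration, then invoke a robustness property of $\cA$) matches the paper's, and your concentration step is essentially Lemma~\ref{lem:sub-exponential-concentration-tilde-F-2}. But there is a genuine gap in the step you yourself flag as the hard part: you want the smoothed value $\E_{H'}[f(S\cup H')]$ to be within $o(1)\cdot f(O^*)$ of $f(S)$ \emph{pointwise}, justified by a claim that a $k$-element perturbation changes $f$ by only $O(k/n)\,f(O^*)$. That Lipschitz-type bound is false for general submodular functions: a single added element can change $f$ by as much as $f(O^*)$ (e.g.\ one element of value $f(O^*)$ added to $S=\emptyset$), so the surrogate is \emph{not} an $\eps f(O^*)$-approximate oracle for $f$, and no choice of perturbation size fixes this. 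This is precisely why the paper does not attempt pointwise bias control.

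The paper's resolution is structurally different in three ways you would need to adopt. First, the surrogate $F^{H,t}(S)=\E_{H'\sim H[t]}[f(S\cup H')]$ is treated as a new submodular objective in its own right, and the robustness of $\cA$ is invoked against an $\eps$-approximate oracle for $F^{H,t}$, not for $f$. Second, the bias is controlled only at the level of \emph{optima}: the smoothing lemma (Lemma~\ref{lem:smoothing}) shows that $\E_{H\sim B_0[h]}\bigl[\max_{S\in\cI_H}F^{H,t}(S)\bigr]\ge\bigl(1-\tfrac{h}{r-h}-\tfrac{t}{h-t}\bigr)f(O^*)$, and its proof needs the matroid basis-exchange bijection (Lemma~\ref{lem:matroid-bases-exchange}) applied to $B_0$ and a basis containing $O^*$ --- an ingredient absent from your plan. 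Third, the algorithm optimizes over the \emph{contracted} matroid $\cI_H$ and outputs $S_H\cup H'$ for a fresh random $H'\sim H[t]$, so that $\E_{H'}[f(S_H\cup H')]=F^{H,t}(S_H)$ exactly; this is what converts a guarantee on the surrogate value of the solution into a guarantee on its true value without ever needing $F^{H,t}\approx f$ pointwise. Without these three modifications your argument does not close, even though the concentration and robustness-propagation portions are sound.
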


The definition of ``robust'' is in Section~\ref{sec:meta-algorithm} and the formal theorem is Theorem~\ref{theorem:meta}.
We remark that the conversion process above is agnostic to the algorithm $\cA$ and just uses $\cA$ as a black box. 

Our reduction uses an idea proposed by previous work on noisy submodular maximization \citep{hassidim2017submodular, huang_efficient_2022}.
This idea is to estimate the value $f(S)$ of a set $S$ using the noisy value of some \emph{surrogate function} $F(S) = \frac{1}{|\mathcal T_S|} \sum_{T \in \mathcal T_S} f(T)$ where $\mathcal T_S$ is a collection of sets related to $S$. One example is to choose a small set $H$ and let $\mathcal T_S = \{S \cup H' \text{ for $H' \subseteq H$} \}$. 
In previous work, different surrogate functions were constructed for different settings.
Here, we construct a single surrogate function for all the settings we consider (unconstrained non-monotone maximization, monotone and non-monotone maximization under matroid constraints); see Section \ref{sec:surrogate-funciton} for details. 
Then, the reduction is to run the algorithm $\cA$ to maximize the \emph{noisy} surrogate function $\tilde F(S) = \frac{1}{|\mathcal T_S|} \sum_{T \in \mathcal T_S} \tilde f(T) \approx F(S)$. 

The main technical challenge here is to ensure that the surrogate function $F(S)$ approximates 
the true function $f(S)$ 
well.
As shown by \cite{hassidim2017submodular, huang_efficient_2022}, a deterministic surrogate function does not always guarantee a good approximation to the true function. So, they need to use arguments that are specific to their algorithms to show that the surrogate functions work well.
A key insight in our work is the use of a \emph{random} surrogate function. 
We show that such a random surrogate function can approximate the true function in all settings with matroid constraints (see Section \ref{sec:smoothing}).


\subsection{Surrogate Function}
\label{sec:surrogate-funciton}

Let $H \subseteq N$ be a subset of size $|H| = h$ where $h$ is a small integer.  We call $H$ a \emph{smoothing set}.  Let $t < h$ be another integer.  Let $H[t] = \{ H'\subseteq H: |H'| = t \}$ be all the subsets of $H$ of size $t$; there are $\binom{h}{t}$ such subsets.  We use $H' \sim H[t]$ to denote sampling a subset $H'\subseteq H$ of size $t$ uniformly at random. 
Define a \emph{surrogate function} $F^{H, t}$ as follows: 
\begin{equation}
    F^{H, t}(S) = \E_{H' \sim H[t]} \big[ f(S\cup H') \big] = \frac{1}{\binom{h}{t}} \sum_{H' \in H[t]} f(S\cup H'), \quad \quad \forall S\subseteq \N. 
\end{equation}
The surrogate marginal value of an element $x$ with respect to $S$ is 
$F^{H, t}_S(x) = F^{H, t}(S+x) - F^{H, t}(S)$. 

\begin{remark}
Another natural way to define a surrogate function would be to take the expectation over all subsets of $H$ and not just subsets of size $t$.
This also works when the submodular function $f$ is monotone.
Intuitively, this is because adding elements could never hurt.
However, when $f$ is non-monotone, adding elements \emph{may} degrade the value of $f$.
We thus want to ensure that $\binom{h}{t}$ is large enough for denoising but $t$ is small enough to limit the potential degradation.
\end{remark}

\begin{claim}
The surrogate function $F^{H, t}$ is submodular. 
\end{claim}

Because we do not have access to $f$, we cannot query the surrogate function $F^{H, t}$ directly.  Instead, we can query the \emph{noisy surrogate function}: 
\begin{equation}
    \tilde F^{H, t}(S) = \frac{1}{\binom{h}{t}} \sum_{H' \in H[t]} \tilde f(S\cup H'), \quad \quad \forall S\subseteq \N
\end{equation}
and the \emph{noisy surrogate marginal value} $\tilde F_S^{H, t}(x) = \tilde F^{H, t}(S+x)-\tilde F^{H, t}(S)$.
When $\binom{h}{t}$ is large, $\tilde F^{H, t}(S)$ is expensive to compute exactly. Instead, we can approximately compute $\tilde F^{H, t}(S)$ by sampling $m$ sets $H_1, \ldots, H_m \sim H[t]$ and taking the sample average:
\begin{equation}
    \hat F^{H, t, m}(S) = \frac{1}{m} \sum_{i=1}^m \tilde f(S\cup H_i). 
\end{equation}
To guarantee a good concentration property of $\frac{1}{m} \sum_{i=1}^m \tilde f(S\cup H_i)$, we sample $H_1, \ldots, H_m \sim H[t]$ \emph{without replacement} to ensure that they are different sets, so $\tilde f(S\cup H_1), \ldots, \tilde f(S\cup H_m)$ are independent. 
The following lemma shows that, with high probability, the sample average $\hat F^{H, t, m}(S)$ is close to the noisy surrogate value $\tilde F^{H, t}(S)$ and the noisy surrogate value $\tilde F^{H, t}(S)$ is close to the true surrogate value $F^{H, t}(S)$, when the parameters $h, t, m$ satisfy some condition: 
\begin{lemma}
\label{lem:sub-exponential-concentration-tilde-F-2}
Let $f_{\max} \ge \max_{S\subseteq N} f(S)$ be an upper bound on the maximum value of $f$.
Suppose the noise distribution $\cD$ is $(\nu, \alpha)$-sub-exponential. 
Suppose the integers $h, t, m$ satisfy the following: 
\begin{equation} \label{eq:h-t-m-condition}
   m \ge \max\{2, 8\nu^2\} \tfrac{f_{\max}^2}{\eps^2} \big( n + \log \tfrac{4}{\delta} \big), \quad t \ge \log_2(4m), \quad \text{and} \quad h = t^2. 
\end{equation}
Then, for $0 \le \eps \le \frac{2 \nu^2}{\alpha} f_{\max}$, we have: 
\begin{equation}
    \Pr\Big[ \, \forall S\subseteq N\setminus H, ~ \big| \hat F^{H, t, m}(S) - F^{H, t}(S) \big| \le  \eps ~ \Big] \ge 1 - \delta.  
\end{equation}
\end{lemma}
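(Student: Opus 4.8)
The plan is to split the error $|\hat F^{H,t,m}(S) - F^{H,t}(S)|$ into two pieces via the triangle inequality: the \emph{sampling error} $|\hat F^{H,t,m}(S) - \tilde F^{H,t}(S)|$, which comes from estimating the average over all of $H[t]$ by an average over $m$ sampled subsets, and the \emph{denoising error} $|\tilde F^{H,t}(S) - F^{H,t}(S)|$, which comes from the multiplicative noise on the exact surrogate. I would bound each piece by $\eps/2$ with failure probability at most $\delta/2$, for a single fixed $S$, and then union bound over all $S \subseteq N \setminus H$ (of which there are at most $2^n$). The condition $m \geq \max\{2, 8\nu^2\} \frac{f_{\max}^2}{\eps^2}(n + \log\frac{4}{\delta})$ is exactly what is needed to absorb both the $2^n$ union-bound factor (the $n$ term) and the $\delta$ (the $\log\frac{4}{\delta}$ term) after the per-set concentration bounds.

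For the sampling error: fix $S$ and condition on the realized noise multipliers $\{\xi_T\}$. Then $\hat F^{H,t,m}(S) = \frac{1}{m}\sum_{i=1}^m \tilde f(S \cup H_i)$ is an average of $m$ draws \emph{without replacement} from the finite population $\{\tilde f(S \cup H') : H' \in H[t]\}$, whose population mean is exactly $\tilde F^{H,t}(S)$. Each term lies in $[0, \xi_{\max} f_{\max}]$ where $\xi_{\max} = \max_{H'} \xi_{S \cup H'}$; more carefully, since each $\xi_T$ is $(\nu,\alpha)$-sub-exponential and there are $\binom{h}{t}$ of them, a union bound shows $\xi_{\max}$ is bounded by a polylogarithmic factor with high probability, or one works directly with sub-exponential tails. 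Applying a Hoeffding/Bernstein-type bound for sampling without replacement (which is no worse than with replacement) gives $\Pr[|\hat F^{H,t,m}(S) - \tilde F^{H,t}(S)| > \eps/2] \leq 2\exp(-c m \eps^2 / f_{\max}^2)$ for an appropriate constant; the choice of $m$ makes this at most $\delta/(2 \cdot 2^n)$.

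For the denoising error: $\tilde F^{H,t}(S) - F^{H,t}(S) = \frac{1}{\binom{h}{t}} \sum_{H' \in H[t]} (\xi_{S \cup H'} - 1) f(S \cup H')$, a weighted sum of $\binom{h}{t}$ \emph{independent} centered sub-exponential random variables with weights $f(S \cup H')/\binom{h}{t} \in [0, f_{\max}/\binom{h}{t}]$. The sub-exponential tail bound for sums (the standard two-regime Bernstein inequality for sub-exponentials, from Wainwright) gives concentration at scale $\nu f_{\max}/\sqrt{\binom{h}{t}}$ in the Gaussian regime, valid for $\eps$ up to $O(\nu^2 f_{\max}/\alpha)$, which is exactly the stated range $0 \le \eps \le \frac{2\nu^2}{\alpha} f_{\max}$. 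Here is where $h = t^2$ and $t \geq \log_2(4m)$ enter: they guarantee $\binom{h}{t} \geq \binom{t^2}{t} \geq t^t \geq (2 \log_2(4m))^{\log_2(4m)}$, which is comfortably larger than $m$ (indeed super-polynomially so), ensuring $\binom{h}{t}$ is large enough that the denoising error is $\leq \eps/2$ with probability $1 - \delta/(2 \cdot 2^n)$ — intuitively, averaging over exponentially many independent noisy copies crushes the noise.

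The main obstacle I expect is handling the \emph{uniformity over all $S$} cleanly while keeping the parameter dependence tight. The noise multipliers $\xi_{S \cup H'}$ for different $S$ overlap in complicated ways (the same set $S \cup H'$ may be written with different $S$), so one cannot naively treat the $2^n$ events as using disjoint randomness; the union bound still works since each event's probability is bounded unconditionally, but one must be careful that the sampled sets $H_1, \ldots, H_m$ are drawn \emph{once} (independently of $S$) rather than freshly per $S$, and that the concentration bound for sampling without replacement is applied correctly conditional on the noise. A secondary technical point is verifying that the sub-exponential parameters of the relevant sums scale as claimed — in particular tracking the factor $\max\{2, 8\nu^2\}$ through the Bernstein bound — but this is routine given the definition of sub-exponential and the standard tail inequality cited.
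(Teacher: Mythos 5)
Your second piece (the denoising error $|\tilde F^{H,t}(S)-F^{H,t}(S)|$) is fine, but your first piece hides a genuine gap. Conditioning on the noise, $\hat F^{H,t,m}(S)$ is indeed a without-replacement sample mean of the population $\{\tilde f(S\cup H'): H'\in H[t]\}$, but that population is bounded only by $\xi_{\max} f_{\max}$, and for unbounded sub-exponential noise (Gaussian, exponential) $\xi_{\max}$ is of order $\log\binom{h}{t}=\Theta(t\log t)$ with high probability, i.e.\ polylogarithmic in $m$. Hoeffding--Serfling for sampling without replacement then yields a tail of the form $2\exp\bigl(-c\,m\eps^2/(f_{\max}^2\,\mathrm{polylog}(m))\bigr)$, not the $2\exp(-c\,m\eps^2/f_{\max}^2)$ you assert, so with the value of $m$ fixed by \eqref{eq:h-t-m-condition} (and its explicit constant $\max\{2,8\nu^2\}$) the union bound over the $2^n$ sets does not close at accuracy $\eps$. ``Working directly with sub-exponential tails'' does not rescue this either: once you condition on the noise, the only randomness left is the without-replacement sampling, and the available inequalities are stated in terms of the (random, possibly large) population range or variance.

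The fix is to choose the other intermediate quantity, which is what the paper does: write $\hat F^{H,t,m}(S)-F^{H,t}(S)=\bigl(\frac1m\sum_i \tilde f(S\cup H_i)-\frac1m\sum_i f(S\cup H_i)\bigr)+\bigl(\frac1m\sum_i f(S\cup H_i)-F^{H,t}(S)\bigr)$. The second term is a without-replacement sample mean of the \emph{noiseless} values $f(S\cup H')\in[0,f_{\max}]$, so Hoeffding for sampling without replacement applies with range exactly $f_{\max}$ and gives $2\exp(-m\eps^2/2f_{\max}^2)$. The first term is $\frac1m\sum_i X_i$ with $X_i=(\xi_{S\cup H_i}-1)f(S\cup H_i)$, which are independent (the $H_i$ are distinct, hence the sets $S\cup H_i$ are distinct), mean-zero, and $(\nu f_{\max},\alpha f_{\max})$-sub-exponential, so the two-regime sub-exponential bound gives $2\exp(-m\eps^2/8\nu^2 f_{\max}^2)$ in the Gaussian regime $\eps/2\le \nu^2 f_{\max}/\alpha$ --- exactly the stated range of $\eps$ and the source of the constant $\max\{2,8\nu^2\}$. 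The rest of your outline matches the paper: union bound over the two events and the $2^n$ sets, and the conditions $h=t^2$, $t\ge\log_2(4m)$ are used only to guarantee $\binom{h}{t}\ge m$ so that sampling without replacement (and hence the independence of the $X_i$) is possible; with this decomposition your denoising step over all $\binom{h}{t}$ subsets is not needed at all.
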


The proof of this lemma uses a Hoeffding inequality for sampling without replacement and a concentration analysis for sub-exponential distribution. It is given in Appendix~\ref{app:lem:sub-exponential-concentration-tilde-F-2}.

\subsection{A Meta-Algorithm for Noisy Submodular Maximization}
\label{sec:meta-algorithm}
We now present a ``meta-algorithm'' that converts any algorithm $\cA$ for submodular maximization with exact value oracle to an algorithm for noisy value oracle. 
Given a matroid $\cI$ and a set $H \subseteq \N$, we consider the contraction of $\cI$ by $H$ defined as
\begin{equation}
    \mathcal I_H = \Big\{ S\subseteq \N\setminus H ~: ~ S\cup H \in \mathcal I \Big\} \subseteq \mathcal I.
\end{equation}
In other words, $\cI_H$ is the matroid where the independent sets are all sets whose union with $H$ are independent in $\cI$.
The meta-algorithm (Algorithm \ref{alg:matroid-general}) works as follows: pick an arbitrary basis $B_0$ of the original matroid $\cI$, randomly sample a subset $H$ of $B_0$ of size $h$, run $\cA$ to maximize the approximate noisy surrogate function $\hat F^{H, t}(S)$ over the minor matroid $\cI_H$ to obtain a solution $S_H$, and finally return the set $S_H \cup H'$ where $H'$ is a random subset of $H$ of size $t$. 

\begin{algorithm}
\caption{Meta-algorithm for noisy submodular maximization under matroid constraints}
\label{alg:matroid-general}
\SetKwInOut{Input}{Input}
\SetKwInOut{Parameter}{Parameter}
\Input{Noisy oracle $\tilde f$ for a submodular function on ground set $\N$. Matroid $\cI$. }
\Parameter{$h, t, m$.} 
\DontPrintSemicolon
\LinesNumbered
Let $B_0$ be an arbitrary basis of matroid $\cI$, which has size $|B_0| = r$. \; 
Sample a subset $H$ of $B_0$ of size $h$ uniformly at random. \;
Run a submodular maximization algorithm $\cA$ to solve $\max_{S\in \cI_H} F^{H, t}(S)$ using oracle $\hat F^{H, t, m}$, 
obtaining a solution $S_H \in \mathcal I_H$. \;
Sample a subset $H'$ of $H$ of size $t$ uniformly at random. \;
\textbf{Return} $S_H \cup H'$. \;
\end{algorithm}

Before presenting the main result for the meta-algorithm, we define the ``robustness'' of a submodular maximization algorithm. Let $\cA$ be an $\alpha$-approximation algorithm for submodular maximization under constraint $\mathcal I$ using the exact value oracle $f$, namely: the solution $S_\cA$ returned by $\cA$ satisfies $\E[f(S_\cA)] \ge \alpha \cdot \max_{S\in \cI} f(S)$.
We say $\cA$ is robust if its performance degrades only a little if the exact value oracle is replaced by an approximate oracle. This is formalized in the following definition.
\begin{definition}[Robustness]
An \emph{$\eps$-approximate oracle} for submodular function $f$ is a function $\hat f: 2^N \to \reals$ that satisfies $|\hat f(S) - f(S)| \le \eps$ for any queried set $S$. 
Algorithm $\cA$ is \emph{$\beta(\eps)$-robust against $\eps$-approximate oracle} if, when running $\cA$ on $\eps$-approximate oracle $\hat f$, the returned solution $\hat S_\cA$ satisfies $\E[f(\hat S_\cA)] \ge \alpha \cdot \max_{S\in \cI} f(S) - \beta(\eps)$. 
\end{definition}

We are now ready to present our main result regarding the meta-algorithm.
\begin{theorem}
\label{theorem:meta}
Suppose $0 < \eps < \frac{2\nu^2}{\alpha} f_{\max}$ and the parameters $h, t, m$ satisfy \eqref{eq:h-t-m-condition} with $\delta = 1/n$. If algorithm $\cA$ has $\alpha$-approximation ratio and is $\beta(\eps)$-robust against $\eps$-approximate oracle, then the expected value of the solution $\ALG$ returned by Algortihm~\ref{alg:matroid-general} satisfies:
\begin{itemize}[leftmargin=1.5em, itemsep=0pt, topsep=0pt]
\item For non-monotone submodular $f$, \ $\E[ f(\ALG) ] \, \ge \,  \alpha \big( 1 - \frac{h}{r-h} - \frac{t}{h-t} - \frac{1}{n} \big) f(O^*) - \beta(\eps)$. 

\item For monotone submodular $f$, 
\ $\E[ f(\ALG) ] \, \ge \,  \alpha \big( 1 - \frac{h}{r-h} - \frac{1}{n} \big) f(O^*) - \beta(\eps)$. 
\end{itemize}
Let $Q(\cA)$ be the query complexity of $\cA$. 
The number of queries to $\tilde f$ made by Algorithm \ref{alg:matroid-general} is $m \cdot Q(\cA)$. 
\end{theorem}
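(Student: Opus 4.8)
The plan is to run $\cA$ on the submodular function $F^{H,t}$ over the contracted matroid $\cI_H$, accessed through $\hat F^{H,t,m}$, and to pay three separate costs: the noise, the loss from contracting $\cI$ by $H$, and the loss from re-inserting $H'$. I would first fix the (arbitrary) basis $B_0$ and condition on the random smoothing set $H$. Since $h,t,m$ satisfy \eqref{eq:h-t-m-condition} with $\delta = 1/n$, Lemma~\ref{lem:sub-exponential-concentration-tilde-F-2} yields an event $\cE_H$ with $\Pr[\cE_H\mid H]\ge 1-1/n$ on which $|\hat F^{H,t,m}(S)-F^{H,t}(S)|\le\eps$ for every $S\subseteq N\setminus H$, i.e.\ on which $\hat F^{H,t,m}$ acts as an $\eps$-approximate oracle for the submodular function $F^{H,t}$ over the matroid $\cI_H$. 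Conditioning on any realization of $\hat F^{H,t,m}$ lying in $\cE_H$ and invoking $\beta(\eps)$-robustness of $\cA$ gives $\E[F^{H,t}(S_H)\mid\cE_H,H]\ge\alpha\max_{S\in\cI_H}F^{H,t}(S)-\beta(\eps)$ (the maximum and $\alpha,\beta$ depend only on $H$). Since $F^{H,t}\ge0$, discarding $\cE_H^c$ and using $\Pr[\cE_H\mid H]\ge 1-1/n$ gives $\E[F^{H,t}(S_H)\mid H]\ge(1-\tfrac1n)\alpha\max_{S\in\cI_H}F^{H,t}(S)-\beta(\eps)$. Finally, because $H'\sim H[t]$ is drawn independently, $\E[f(S_H\cup H')\mid S_H,H]=F^{H,t}(S_H)$ by the very definition of the surrogate, and $S_H\cup H'\subseteq S_H\cup H\in\cI$ so $\ALG$ is feasible; taking expectations over everything yields
\[
\E[f(\ALG)]\;\ge\;(1-\tfrac1n)\,\alpha\,\E_H\!\left[\max_{S\in\cI_H}F^{H,t}(S)\right]-\beta(\eps).
\]
It then suffices to show $\E_H[\max_{S\in\cI_H}F^{H,t}(S)]\ge(1-\tfrac{h}{r-h})f(O^*)$ in the monotone case and $\ge(1-\tfrac{h}{r-h}-\tfrac{t}{h-t})f(O^*)$ in the non-monotone case; substituting and using $(1-\tfrac1n)(1-c)\ge 1-c-\tfrac1n$ produces exactly the two claimed bounds.

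\noindent\textbf{The matroid/smoothing bound (the crux).} The main work will be this last estimate, which I would prove via two ingredients. \emph{(i) A single-element contraction lemma}: for any matroid $\cM$ of rank $r'$ with basis $B$, any nonnegative submodular $g$ with $g(\emptyset)=0$, and $O\in\argmax_{S\in\cM}g(S)$, a uniformly random $a\in B$ satisfies $\E_a[\max_{S:\,S\cup\{a\}\in\cM}g(S)]\ge(1-\tfrac1{r'})g(O)$. To see this, extend $O$ to a basis $O^+$ of $\cM$ and use the classical symmetric basis exchange theorem to get a bijection $\phi\colon B\to O^+$ fixing $B\cap O^+$ with $O^+-\phi(a)+a\in\cM$ for all $a$; then $(O\setminus\{\phi(a)\})\cup\{a\}\subseteq O^+-\phi(a)+a$ is independent (and $a\notin O\setminus\{\phi(a)\}$, using $\phi(a)=a$ when $a\in O$), so $O\setminus\{\phi(a)\}$ survives the contraction by $a$, giving $\max_{S:\,S\cup\{a\}\in\cM}g(S)\ge g(O\setminus\{\phi(a)\})\ge g(O)-\mathbf{1}[\phi(a)\in O]\,g_{O-\phi(a)}(\phi(a))$; averaging over $a$ (so $\phi(a)$ is uniform on $O^+$) turns the correction into $\tfrac1{r'}\sum_{e\in O}g_{O-e}(e)\le\tfrac1{r'}g(O)$ by the standard inequality $\sum_{e\in O}g_{O-e}(e)\le g(O)-g(\emptyset)$ --- and this argument never uses monotonicity. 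Since $H\sim B_0[h]$ can be generated by contracting $\cI$ one uniformly random basis element at a time for $h$ steps (at each step the remaining part of $B_0$ is still a basis of the current contraction, and contracting $\cI_{H_{i-1}}$ by $a$ gives $\cI_{H_{i-1}\cup\{a\}}$), iterating the lemma and telescoping gives $\E_H[\max_{S\in\cI_H}f(S)]\ge\prod_{j=0}^{h-1}(1-\tfrac1{r-j})f(O^*)=\tfrac{r-h}{r}f(O^*)\ge(1-\tfrac{h}{r-h})f(O^*)$. \emph{(ii) A re-insertion bound}: for any $A\subseteq N\setminus H$, a chain argument and submodularity give $f(A\cup H')-f(A)\ge\sum_{e\in H'}f_{A\cup H-e}(e)$; taking $\E_{H'\sim H[t]}$ (each $e\in H$ lies in $H'$ with probability $t/h$) and using $\sum_{e\in H}f_{A\cup H-e}(e)\ge f(A\cup H)-f(A)\ge-f(A)$ gives $F^{H,t}(A)\ge(1-\tfrac th)f(A)\ge(1-\tfrac{t}{h-t})f(A)$.

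\noindent\textbf{Combining, and the query count.} For monotone $f$, $F^{H,t}(S)\ge f(S)$ for all $S$, so $\max_{S\in\cI_H}F^{H,t}(S)\ge\max_{S\in\cI_H}f(S)$ and ingredient (i) alone gives $\E_H[\max_{S\in\cI_H}F^{H,t}(S)]\ge(1-\tfrac{h}{r-h})f(O^*)$. For non-monotone $f$, apply ingredient (ii) with $A=\argmax_{S\in\cI_H}f(S)$ (disjoint from $H$) to get $\max_{S\in\cI_H}F^{H,t}(S)\ge(1-\tfrac th)\max_{S\in\cI_H}f(S)$, and then the monotonicity-free version of ingredient (i) gives $\E_H[\max_{S\in\cI_H}F^{H,t}(S)]\ge(1-\tfrac th)\tfrac{r-h}{r}f(O^*)\ge(1-\tfrac{t}{h-t}-\tfrac{h}{r-h})f(O^*)$; plugging into the displayed bound for $\E[f(\ALG)]$ finishes both guarantees. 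The query count is immediate: $\cA$ makes $Q(\cA)$ oracle calls, each answered by evaluating $\hat F^{H,t,m}(S)=\tfrac1m\sum_{i=1}^m\tilde f(S\cup H_i)$ at a cost of $m$ calls to $\tilde f$, and Steps~4--5 make no further calls, for a total of $m\cdot Q(\cA)$. I expect the matroid/smoothing bound to be the real obstacle: getting the $h/(r-h)$ loss right hinges on the (non-elementary) symmetric basis exchange theorem combined with the bijective accounting of marginals, and the non-monotone case additionally needs the re-insertion bound plus the observation that the contraction argument does not use monotonicity; the reduction in the first paragraph is, by contrast, routine once Lemma~\ref{lem:sub-exponential-concentration-tilde-F-2} and the robustness definition are available.
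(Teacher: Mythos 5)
Your reduction paragraph (condition on $H$, invoke Lemma~\ref{lem:sub-exponential-concentration-tilde-F-2} to get an $\eps$-approximate oracle event, apply $\beta(\eps)$-robustness, use $\E_{H'}[f(S_H\cup H')\mid S_H,H]=F^{H,t}(S_H)$, and absorb the failure probability) is essentially identical to the paper's proof, and your query-count argument is the same. Where you diverge is the smoothing bound: the paper proves it in one shot, applying the size-$h$ bijective basis-exchange lemma (Lemma~\ref{lem:matroid-bases-exchange}) to $B_0$ and a basis $B_1\supseteq O^*$ to place $O^*-\sigma(H)$ in $\cI_H$, then using Lemma~\ref{lem:removing-elements-submodularity} for the $h/(r-h)$ loss and Lemma~\ref{lem:adding-elements-submodularity} (with feasibility giving $\max f(S')\le f(O^*)$) for the $t/(h-t)$ loss; you instead iterate a single-element exchange/contraction lemma $h$ times to control $\E_H[\max_{S\in\cI_H}f(S)]$ and then compare $F^{H,t}$ with $f$ at the contracted optimum. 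Your ingredient (i) is correct (the telescoping $(r-h)/r$ bound and the monotonicity-free accounting via $\sum_{e\in O}g_{O-e}(e)\le g(O)$ both check out), and this is a legitimate alternative decomposition.

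However, ingredient (ii) as written has a genuine error. From submodularity one gets $\sum_{e\in H} f_{A\cup H-e}(e)\le f(A\cup H)-f(A)$, i.e.\ the \emph{opposite} of the inequality you assert; the sum of last-position marginals lower-bounds nothing of the form $-f(A)$ in general. Concretely, with $A=\emptyset$, $H=\{e_1,e_2\}$, $f(\emptyset)=f(\{e_1,e_2\})=0$, $f(\{e_1\})=f(\{e_2\})=1$ (nonnegative and submodular), one has $\sum_{e\in H}f_{A\cup H-e}(e)=-2<-f(A)=0$, so your chain $\E_{H'}[f(A\cup H')]\ge f(A)+\tfrac{t}{h}\sum_{e\in H}f_{A\cup H-e}(e)\ge(1-\tfrac{t}{h})f(A)$ does not go through as argued. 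The statement you are after, $F^{H,t}(A)\ge(1-\tfrac{t}{h})f(A)$, is in fact true, but it needs a different proof: it is exactly the standard random-subset lemma (for nonnegative submodular $g(T)=f(A\cup T)$ and a random $T$ containing each element of $H$ with probability $t/h$, $\E[g(T)]\ge(1-\tfrac{t}{h})g(\emptyset)$), which is proved by a different induction than the one you sketch. Alternatively, you can stay within the paper's toolkit: apply Lemma~\ref{lem:adding-elements-submodularity} to $A=\argmax_{S\in\cI_H}f(S)$ and note that every $S'$ with $A\subseteq S'\subseteq A\cup H$ is independent (since $A\cup H\in\cI$), hence $f(S')\le f(O^*)$, giving $F^{H,t}(A)\ge f(A)-\tfrac{t}{h-t}f(O^*)$, which combined with your ingredient (i) yields the claimed non-monotone bound. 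With that repair the argument is complete; as written, this one step would fail.
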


\subsection{Proof of Theorem \ref{theorem:meta}}
\label{sec:smoothing}
A key step to prove Theorem~\ref{theorem:meta} is to show that, with a randomly sampled smoothing set $H$, maximizing the surrogate function $F^{H, t}$ is roughly equivalent to maximizing the original function $f$.
This is formalized by the following lemma, which we call a ``smoothing lemma''.
\begin{lemma}[Smoothing lemma]
\label{lem:smoothing}
Let $f$ be a submodular function.
\begin{itemize}[leftmargin=1.5em, itemsep=0pt, topsep=0pt]
    \item For any $f$, we have \ $\E_{H\sim B_0[h]} \big[ \max_{S \in \mathcal I_H} F^{H, t}(S) \big] \ge \big(1 - \frac{h}{r - h} - \frac{t}{h-t}\big) f(O^*)$.
    \item If $f$ is monotone then \ $\E_{H\sim B_0[h]} \big[ \max_{S \in \mathcal I_H} F^{H, t}(S) \big] \ge \big(1 - \frac{h}{r - h} \big) f(O^*)$.
\end{itemize}
\end{lemma}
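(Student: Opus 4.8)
The plan is to fix an optimal basis-extension of $O^*$ and track how much value is lost when we (i) contract by a random smoothing set $H \subseteq B_0$ and (ii) only average over size-$t$ subsets of $H$. Let me set up the first (non-monotone) bound, since the monotone bound will follow by dropping the term coming from step (ii). Let $O^* \in \argmax_{S\in\cI} f(S)$ and, using the matroid exchange property, extend $O^*$ to a basis $O$ of $\cI$ with $O \supseteq O^*$ — actually, more carefully, I want a bijection-type correspondence between $B_0$ and a basis containing $O^*$, so I would instead invoke the strong basis-exchange / matroid union structure to pair up elements of $B_0$ with elements of a basis $O'$ with $O' \supseteq O^*$; the standard fact is that there is a bijection $\phi \colon B_0 \to O'$ fixing $B_0 \cap O'$ such that swapping any subset preserves independence. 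The key point I need from this is: for $H \sim B_0[h]$, the set $O^* \setminus \phi(H)$ (or rather $O'\setminus\phi(H)$ restricted appropriately) together with $H$ forms an independent set, so $O^* \setminus \phi(H) \in \cI_H$, and moreover $O^*\setminus\phi(H)$ is a uniformly random subset of $O^*$ obtained by deleting roughly an $h/r$ fraction of its elements.

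The second step is the core estimate. Condition on $H$. I want to lower bound $\max_{S\in\cI_H} F^{H,t}(S) \ge F^{H,t}(O^* \setminus \phi(H))$. Unfolding the definition, this equals $\E_{H'\sim H[t]}[f((O^*\setminus\phi(H)) \cup H')]$. Now I take a further expectation over $H$. The quantity $(O^*\setminus\phi(H)) \cup H'$ is, after taking expectations over both $H$ and $H'\sim H[t]$, a random set that contains a random subset $R \subseteq O^*$ (missing an $h/r$-fraction in expectation) together with $t$ "junk" elements from $B_0$. To handle the junk, I use submodularity to say $f(R \cup H') \ge f(R) - \sum_{x\in H'} (\text{marginal cost of } x)$; more precisely, $f(A \cup H') \ge f(A \cup H') $ — the clean way is: for submodular $f$, $f(A\cup B) \ge f(A) - \sum_{b\in B}\big(f(\{b\}) - f(\emptyset)\big)^{-}$... this is the part I must be careful about because $f$ is non-monotone. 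The right tool is the standard lemma that for submodular $f$ and any $A, B$, $\E[f(A\cup R_B)] \ge (1-p)\,$-type bounds when $R_B$ includes each element of $B$ independently with probability $p$; here $H'$ is a uniformly random $t$-subset of $H$, and each fixed element of $B_0$ lands in $H$ with probability $h/r$ and then in $H'$ with conditional probability $t/h$, so overall each element of $O^*$ is \emph{deleted} with probability about $h/r$, contributing the $\frac{h}{r-h}$ loss, and each "extra" coordinate is small.

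Concretely, the cleanest route: by submodularity, for any set $A$ and element $x$, $f(A) \ge f(A+x) - f(\{x\}) \ge f(A+x) - f_\emptyset(x)$... I'd rather argue $\E_{H'}[f(S\cup H')] \ge f(S) - t\cdot(\text{something})$ is \emph{not} what's wanted; instead I want to compare $F^{H,t}(O^*\setminus\phi(H))$ directly to $f(O^*)$. So: use the well-known fact (a consequence of submodularity, e.g.\ Feige–Mirrokni–Vondrák) that if $R(p)$ denotes a random subset of a ground set where each element is included independently with probability at most $p$, then $\E[f(R(p))] \ge (1-p) f(\emptyset) = 0$ is too weak; the version I need is: for $T \subseteq O^*$ chosen by including each element of $O^*$ with probability $\ge 1 - q$, $\E[f(T)] \ge (1-q) f(O^*)$ — this holds for submodular (not nec.\ monotone) $f$ when $T$ is a random subset of a \emph{fixed feasible} set. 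Applying this with $q = \frac{h}{r-h}$ handles deletions; then a symmetric argument bounding the effect of the $t$ added junk elements (each present with probability $\le \frac{t}{h-t}$ relative to the conditional law, giving an extra $\frac{t}{h-t}$ loss via the same lemma applied "in reverse" on $B_0 \setminus O^*$) gives the claimed $\big(1 - \frac{h}{r-h} - \frac{t}{h-t}\big) f(O^*)$. For monotone $f$, adding elements never decreases $f$, so the $H'$-averaging only helps and the $\frac{t}{h-t}$ term disappears, leaving $\big(1 - \frac{h}{r-h}\big) f(O^*)$.

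The main obstacle I anticipate is the first step: getting a clean probabilistic coupling between the random smoothing set $H \sim B_0[h]$ and a random sub-basis of $O^*$ that remains independent after contracting by $H$. The matroid exchange property only gives \emph{existence} of exchanges, not a uniform bijection, so I expect to need the stronger "there is a bijection $\phi\colon B_1 \to B_2$ between two bases such that $B_1 - x + \phi(x) \in \cI$ for all $x$" fact (Brualdi's theorem), and even then, extending it so that \emph{simultaneous} swaps of a whole subset $H$ stay independent requires either a strongly base-orderable assumption or a more delicate argument — possibly one only gets that $O^*$ loses a subset of \emph{expected} size $\le \frac{h}{r}|O^*|$ in a way that still lets submodularity through. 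I would therefore spend most of the effort making this exchange/coupling argument airtight, and treat the submodular-averaging inequalities as standard.
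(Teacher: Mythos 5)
Your high-level plan matches the paper's proof: extend $O^*$ to a basis $B_1$, couple the random smoothing set $H\subseteq B_0$ with a random $h$-subset of $B_1$ to delete, observe that the surviving part of $O^*$ lies in $\cI_H$, and then charge the two loss terms $\frac{h}{r-h}$ (deletion of a uniformly random $h$-subset) and $\frac{t}{h-t}$ (insertion of the $t$ junk elements $H'$) to submodular sampling lemmas, with the second term vanishing under monotonicity. Those sampling lemmas are indeed routine (the paper proves them from scratch as Lemmas~\ref{lem:removing-elements-submodularity} and~\ref{lem:adding-elements-submodularity}, telescoping one element at a time), and your bound of the "junk" maximum by $f(O^*)$ is legitimate because every subset of $B_1-\sigma(H)+H$ is independent.

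However, the step you yourself flag as the main obstacle is a genuine gap, and the tool you propose for it does not work. An element-level bijection $\phi\colon B_0\to B_1$ such that swapping \emph{any whole subset} preserves independence is exactly the strongly-base-orderable property, which fails for general matroids (this is precisely why \citet{huang_efficient_2022} only get the clean bound under that extra assumption); Brualdi's theorem only guarantees single-element exchanges. The missing ingredient is the multiple-exchange theorem of Greene/Woodall (cited in the paper as Lemma~\ref{lem:matroid-bases-exchange}): for any two bases and any fixed $h$, there is a bijection $\sigma$ \emph{between the $h$-element subsets themselves} (not induced by an element map) such that $B_1-\sigma(H)+H$ is a basis for every $h$-subset $H\subseteq B_0$. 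Because $\sigma$ is a bijection on $\binom{B_0}{h}\to\binom{B_1}{h}$, pushing forward the uniform distribution on $H$ makes $\sigma(H)$ uniform over $h$-subsets of $B_1$, which is exactly the coupling your argument needs before Lemma~\ref{lem:removing-elements-submodularity} can be applied with the $\frac{h}{r-h}$ loss. Without this (or an equivalent) theorem your proof does not go through for arbitrary matroids, so the proposal as written is incomplete at its central step.
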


\begin{proof}[Proof of Theorem~\ref{theorem:meta}]
According to Lemma~\ref{lem:sub-exponential-concentration-tilde-F-2}, with probability at least $1-\delta$, the function $\hat F^{H, t, m}$ is an $\eps$-approximate oracle for $F^{H, t}$ for all sets $S\in \mathcal I_H$.
We denote this event by $\mathcal E$. Conditioning on $\mathcal E$, the expected value of the solution $\ALG$ returned by Algorithm \ref{alg:matroid-general} satisfies 
\begin{align*}
    \E[ f(\ALG) \mid \mathcal E] & ~ = ~ \E_{H} \E_{\text{randomness of $\cA$}}\E_{H'\sim H[t]}\big[ f(S_H\cup H')  \mid \mathcal E \big]  \\
    & ~ = ~ \E_{H} \E_{\text{randomness of $\cA$}} \big[ F^{H, t}(S_H)  \mid \mathcal E\big] && \text{by the definition of $F^{H, t}$}\\ 
    & ~ \ge ~  \alpha \cdot \E_H\Big[ \max_{S \in \mathcal I_H} F^{H, t}(S) \Big] ~ - ~ \beta(\eps) && \text{by $\beta(\eps)$-robustness of $\cA$}. 
\end{align*}
By the smoothing lemma (Lemma~\ref{lem:smoothing}), for any submodular function $f$, we have 
\begin{align*}
    \E[ f(\ALG) \mid \mathcal E] & ~ \ge ~ \alpha \cdot \Big( 1 - \frac{h}{r-h} - \frac{t}{h-t}\Big) f(O^*) ~ - ~ \beta(\eps). 
\end{align*}
Since the event $\mathcal E$ happens with probability at least $1 - \delta$, we have 
\begin{align*}
    \E[ f(\ALG) ] & ~ \ge ~  (1-\delta) \bigg( \alpha \Big( 1 - \frac{h}{r-h} - \frac{t}{h-t}\Big) f(O^*) ~ - ~ \beta(\eps) \bigg) + \delta \cdot 0 \\
    & ~ \ge ~  \alpha \Big( 1 - \frac{h}{r-h} - \frac{t}{h-t} - \delta \Big) f(O^*) ~ - ~ \beta(\eps). 
\end{align*}
Letting $\delta = \frac{1}{n}$ proves the theorem for the first case.
For the monotone case, we can remove the $\frac{t}{h-t}$ term by Lemma \ref{lem:smoothing}. 
\end{proof}

It remains to prove Lemma \ref{lem:smoothing}.
\begin{proof}[Proof of Lemma \ref{lem:smoothing}]
We prove this lemma for the non-monotone case here. The proof for the monotone case is simpler and given in Appendix~\ref{app:smoothing-lemma-monotone}. 
We will use the following \emph{basis exchange property} for matroids.
\begin{lemma}[\citet{donald_generalised_1991}]
\label{lem:matroid-bases-exchange}
For any two bases $B_1, B_2$ of a matroid, for any integer $h \ge 1$, there exists a bijection $\sigma$ from subsets of $B_1$ with size $h$ to subsets of $B_2$ with size $h$ such that, for every subset $H \subseteq B_1$ with size $h$, $B_2 - \sigma(H) + H$ is a basis. 
\end{lemma}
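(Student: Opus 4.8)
The plan is to split the argument into a reduction to the case of disjoint bases followed by a matching argument for that case. Write $r = |B_1| = |B_2|$ and let $\mathcal B(M)$ denote the set of bases of a matroid $M$. For the reduction I would set $I = B_1 \cap B_2$ and pass to the contracted matroid $M / I$, in which $B_1 \setminus I$ and $B_2 \setminus I$ are \emph{disjoint} bases of rank $r - |I|$. Having proved the disjoint case and applied it to $M/I$ for every size $j$, I would define $\sigma$ on an $h$-subset $H = (H \cap I) \sqcup (H \setminus I)$ of $B_1$ by keeping $H \cap I$ fixed and applying the $M/I$-bijection (of size $|H \setminus I|$) to $H \setminus I$; partitioning the $h$-subsets of $B_1$ and of $B_2$ according to their intersection with $I$ shows this is a bijection $\binom{B_1}{h} \to \binom{B_2}{h}$. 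The verification that $(B_2 \setminus \sigma(H)) \cup H$ is a basis is then a short calculation: using $H \cap I \subseteq B_2$ and $\sigma'(H \setminus I) \cap I = \emptyset$ one obtains $(B_2 \setminus \sigma(H)) \cup H = I \sqcup \big( ((B_2 \setminus I) \setminus \sigma'(H \setminus I)) \cup (H \setminus I) \big)$, and since the parenthesized set is a basis of $M/I$ and $I$ is independent in $M$, the whole set is a basis of $M$.

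For the disjoint case $B_1 \cap B_2 = \emptyset$, I would build the bipartite ``exchange graph'' $\mathcal G$ whose two sides are $\binom{B_1}{h}$ and $\binom{B_2}{h}$, joining $H$ to $K$ exactly when $(B_2 \setminus K) \cup H \in \mathcal B(M)$, and then look for a perfect matching of $\mathcal G$: such a matching is precisely the bijection $\sigma$ required by Lemma~\ref{lem:matroid-bases-exchange}. Since both sides have size $\binom{r}{h}$, by Hall's theorem it suffices to prove $|N_{\mathcal G}(\mathcal A)| \ge |\mathcal A|$ for every family $\mathcal A \subseteq \binom{B_1}{h}$. As a warm-up I would first settle the representable case, where this is clean: taking $B_2$ as the standard basis and $A$ as the invertible change-of-basis matrix expressing the elements of $B_1$ in the basis $B_2$, a pivoting identity shows that $(B_2 \setminus K) \cup H$ is a basis iff a certain $h \times h$ minor of $A$ depending only on $H$ and $K$ is nonzero, and these minors are exactly the entries of the $h$-th compound matrix $C_h(A)$; by the Sylvester--Franke identity $\det C_h(A) = (\det A)^{\binom{r-1}{h-1}} \ne 0$, and an invertible matrix has a nonzero term in its permutation expansion, which is exactly a bijection $\sigma$ realizing the matching.

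The main obstacle is to establish Hall's condition for an \emph{arbitrary} matroid, where there is no compound matroid to appeal to. The plan here is a purely combinatorial estimate: for a single $H$, the neighborhood $N_{\mathcal G}(H)$ is, after contracting $H$, the family of bases of the minor $(M/H)|_{B_2}$, and $|N_{\mathcal G}(\mathcal A)|$ for a family $\mathcal A$ can be bounded from below by using submodularity of the matroid rank function (equivalently, by invoking the matroid union theorem on the relevant restrictions and contractions) to show it never drops below $|\mathcal A|$. I also considered proving the disjoint case by induction on $|B_1 \setminus B_2|$ using Brualdi's single-element symmetric exchange, but the bookkeeping is delicate — an exchange valid for $B_1$ need not remain valid for the intermediate bases produced by the inductive bijection — so the Hall/matching route appears cleaner. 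Combining the perfect matching in the disjoint case with the contraction reduction above then yields the bijection $\sigma$ claimed in the lemma.
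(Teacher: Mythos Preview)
The paper does not prove this lemma at all: it is quoted from \cite{donald_generalised_1991} and used as a black box in the proof of Lemma~\ref{lem:smoothing}. So there is no ``paper's own proof'' to compare against; I will simply assess your attempt on its merits.

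Your reduction to the disjoint case via contraction of $I=B_1\cap B_2$ is correct and cleanly written: the map $\sigma(H)=(H\cap I)\cup\sigma'_{|H\setminus I|}(H\setminus I)$ is indeed a bijection (partition both $\binom{B_1}{h}$ and $\binom{B_2}{h}$ by their intersection with $I$), and your computation $(B_2\setminus\sigma(H))\cup H = I\sqcup\big(((B_2\setminus I)\setminus\sigma'(H\setminus I))\cup(H\setminus I)\big)$ is right. The representable case via the compound matrix and Sylvester--Franke is also a valid and elegant argument.

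The gap is exactly where you flag it: the verification of Hall's condition for a general matroid. You write that $|N_{\mathcal G}(\mathcal A)|\ge|\mathcal A|$ ``can be bounded from below by using submodularity of the matroid rank function (equivalently, by invoking the matroid union theorem \ldots)'', but you never say \emph{how}. There is no obvious submodular quantity whose evaluation on $\mathcal A$ yields $|N_{\mathcal G}(\mathcal A)|$, nor an evident formulation of the neighborhood as the span in a matroid union. The exchange graph $\mathcal G$ is not regular in general (already for $r=2$, $h=1$ with a single pair of parallel elements one gets vertices of different degrees), so no averaging or double-counting shortcut will do the job either. In the representable case your argument secretly constructs a ``compound matroid'' in which $\binom{B_1}{h}$ and $\binom{B_2}{h}$ are bases and then applies the $h=1$ exchange theorem there; for arbitrary matroids no such compound matroid exists, which is precisely why the general case is hard and why \cite{donald_generalised_1991} is a standalone paper. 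As written, the general-matroid step is a plan rather than a proof; you would need either to carry out a concrete Hall-type estimate or to make the inductive route via single-element exchanges (which you set aside as ``delicate'') actually work.
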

Recall that $O^* = \argmax_{O \in \cI} f(O)$ is an optimal solution for $f$ over the original matorid $\cI$. Since $f$ is non-monotone, $O^*$ is not necessarily a basis of matroid $\cI$.  Let $B_1 \supseteq O^*$ be any basis of matroid $\cI$ that contains $O^*$. 
We apply Lemma \ref{lem:matroid-bases-exchange} to bases $B_0$ and $B_1$ to obtain a bijection $\sigma$ between subsets of $B_0$ with size $h$ and subsets of $B_1$ with size $h$, such that $B_1 - \sigma(H) + H$ is a basis of matroid $\cI$, for every subset $H\subseteq B_0$ with size $h$. We note that $\big( B_1 - \sigma(H) \big) \cap H = \emptyset$\footnote{Otherwise, the size $|B_1-\sigma(H) + H| < |B_1 - \sigma(H)| + |H| = r - h + h = r$, contradicting the fact that $B_1 -\sigma(H) + H$ is a basis and should have size $r$.}, so $B_1 - \sigma(H)$ belongs to $\cI_H$. 
Because a matroid is downward-closed, we have $O^* - \sigma(H) \subseteq B_1 - \sigma(H) \in \cI_H$.  Thus, $\max_{S' \in \mathcal I_H} F^{H, t}(S') \ge F^{H, t}(O^* - \sigma(H))$.  Taking expectation over $H \sim B_0[h]$, we get 
\begin{align}
    & \E_{H\sim B_0[h]} \Big[ \max_{S \in \mathcal I_H} F^{H, t}(S) \Big]  ~ \ge ~ \E_{H\sim B_0[h]}\Big[ F^{H, t}(O^* - \sigma(H)) \Big] \nonumber \\
    & ~ = ~ \E_H \E_{H' \sim H[t]} \Big[ f(O^*-\sigma(H)+H') \Big]  \label{eq:matroid-analysis-step-1} \\
    & ~ \ge ~ \E_H \Big[ f(O^* - \sigma(H)) - \frac{|H'|}{|H| - |H'|} \max_{S' \subseteq O^* - \sigma(H) + H} f(S') \Big] && \text{by Lemma~\ref{lem:adding-elements-submodularity}} \nonumber \\
    & ~ \ge ~ \E_H \Big[ f(O^* - \sigma(H)) \Big]  - \frac{t}{h - t} f(O^*). \nonumber
\end{align}



Because $\sigma$ is a bijection between subsets of $B_0$ and subsets of $B_1$ and $H$ is a uniformly random subset of $B_0$ with size $h$, $\sigma(H)$ must be a uniformly random subset of $B_1$ with size $h$.
By Lemma~\ref{lem:removing-elements-submodularity}
\begin{align}
    \E_H \Big[ f(O^* - \sigma(H)) \Big] & ~ \ge ~ f(O^*) - \frac{h}{|B_1| - h} \cdot \max_{S' \subseteq O^* \cap B_1} f(S') ~ = ~ f(O^*) - \frac{h}{r - h} f(O^*).  \label{eq:matroid-analysis-step-2}
\end{align}
This implies
\begin{align*}
    \E_H\Big[ \max_{S \in \mathcal I_H} F^{H, t}(S) \Big] ~ \ge ~ f(O^*) - \frac{h}{r - h} f(O^*) - \frac{t}{h-t} f(O^*),
\end{align*}
which proves the lemma for the non-monotone case. 
\end{proof}

\section{Noisy Submodular Maximization Under Specific Settings}
\label{sec:specific-results}
In this section, we instantiate our meta-algorithm (Algorithm \ref{alg:matroid-general}) with existing algorithms for submodular maximization with exact value oracle. By proving that existing algorithms are robust, we obtain results for noisy submodular maximization under various specific settings. 


\subsection{Matroid Constraints}

\paragraph{Robustness of Measured Continuous Greedy.}
We first consider maximizing monotone and non-monotone submodular functions under matroid constratins.
We prove that the \emph{measured continuous greedy algorithm} of \citet{feldman_unified_2011} is robust.
Here, we recall the algorithm (with exact value oracle).
First, define the multilinear extension of a submodular function $f$ as
\[
    F(x) = \sum_{S \subseteq [n]} f(S) \prod_{i \in S} x_i \prod_{i \notin S} (1-x_i), \quad \forall x\in [0, 1]^n.
\]
Then the algorithm works as follows.
First, define $x(0) = \bm 0$ (the $0$ vector) and let $\delta \in (0,1)$ be such that $1/\delta$ is an integer.
Given a point $x_i(t)$ at time $t$, we first solve
\[
y^*(t) \in \argmax_{\mathcal P} \Big\{ \sum_{i=1}^n \partial_i F(x(t)) y_i \Big\},
\]
where the $\argmax$ is taken over the matroid polytope $\mathcal P$.
Then we update $x_i(t+\delta) = x_i(t) + \delta (1-x_i(t)) y_i^*(t)$.
Finally, we use pipage rounding \citep{calinescu_maximizing_2011} (which is an obvlious rounding scheme) to convert the fractional solution $x(1)$ to a discrete set $S\in\mathcal{I}$.
The pipage rounding technique guarantees that $\E[ f(S) ] \geq F(x(1))$.
In addition, we have that $F(x(1)) / f(\OPT) \geq 1-1/e - O(n^3 \delta)$ when $f$ is monotone and $F(x(1)) / f(\OPT) \geq 1/e - O(n^3 \delta)$ when $f$ is non-monotone.

The following lemma establishes the robustness of the measured continuous greedy algorithm.
\begin{lemma}[Robustness of measured continuous greedy]
    \label{lemma:unified-cts-greedy-robust}
    For submodular function maximization under matroid constraint, the measured continuous greedy algorithm \citep{feldman_unified_2011} obtains a $(1-1/e)$-approximation when $f$ is monotone and a $(1/e)$-approximation when $f$ is non-monotone.
    Moreover, the algorithm is $O(n\eps)$-robust.
\end{lemma}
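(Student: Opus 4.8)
The approximation guarantees $1-1/e$ (monotone) and $1/e$ (non-monotone) are precisely those of \citet{feldman_unified_2011} and have been recalled above, so the only new content is the $O(n\eps)$-robustness claim; the plan is to re-run the standard measured continuous greedy analysis while tracking how an $\eps$-approximate set oracle $\hat f$ for $f$ perturbs each step. The first observation is that the multilinear extension is a convex combination of set values --- the coefficients $\prod_{i\in S}x_i\prod_{i\notin S}(1-x_i)$ are nonnegative and sum to one --- so the multilinear extension $\hat F$ built from $\hat f$ satisfies $|\hat F(x)-F(x)|\le\eps$ for every $x\in[0,1]^n$; writing each partial derivative $\partial_i F(x)$ as a difference of two such convex combinations (the restrictions of $F$ to $x_i=1$ and $x_i=0$) gives $|\partial_i\hat F(x)-\partial_i F(x)|\le 2\eps$ for all $i$ and $x$, and similarly the second partials are perturbed by $O(\eps)$, which only enters the already-present $O(\delta^2\poly(n))$ discretization term.

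Next I would propagate this perturbation through the $1/\delta$ update steps. At step $t$ the algorithm solves a linear optimization over the matroid polytope $\mathcal P$ with the gradient $\nabla\hat F(x(t))$ (appropriately reweighted as in \citet{feldman_unified_2011}) in place of $\nabla F(x(t))$; since every $y\in\mathcal P$ satisfies $\sum_i y_i\le r\le n$ and the two gradients agree up to $2\eps$ coordinatewise, both the optimum value of this program and the true objective value at the returned direction $y^*(t)$ change by at most $O(\eps n)$. Feeding this into the usual one-step progress inequality $F(x(t+\delta))-F(x(t))\ge\delta\,\langle\nabla F(x(t))\odot(\bm 1-x(t)),y^*(t)\rangle-O(\delta^2\poly(n)f_{\max})$ together with the standard bound $\langle\nabla F(x)\odot(\bm 1-x),\bm 1_{O^*}\rangle\ge F(x\vee\bm 1_{O^*})-F(x)$ --- and $F(x(t)\vee\bm 1_{O^*})\ge f(O^*)$ for monotone $f$, resp.\ $F(x(t)\vee\bm 1_{O^*})\ge e^{-t}f(O^*)$ for non-monotone $f$ via the invariant $x_i(t)\le 1-e^{-t}$ --- each step now loses an extra additive $O(\delta\eps n)$. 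Because this extra loss is $O(\delta\eps n)$ \emph{per step} and there are $1/\delta$ steps, it sums to $O(\eps n)$ rather than compounding, so the discrete recursion for $g(t):=F(x(t))$ integrates to $F(x(1))\ge\alpha f(O^*)-O(\eps n)-O(\delta\poly(n)f_{\max})$ with $\alpha\in\{1-1/e,\,1/e\}$; the $\delta$ already chosen small enough to make $O(n^3\delta)$ negligible also kills the last term.

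Finally, pipage rounding is oblivious and guarantees $\E[f(S)]\ge F(x(1))$ with respect to the \emph{true} $f$, so the rounding step introduces no additional error, and combining gives $\E[f(\hat S_\cA)]\ge\alpha f(O^*)-O(\eps n)$, i.e.\ $O(n\eps)$-robustness. I expect the main obstacle to be the bookkeeping in the middle paragraph: one has to be careful that the $O(\eps n)$ gradient-estimation error enters the Gr\"onwall-type recursion $g(t+\delta)\ge(1-\delta)g(t)+\delta\,c(t)f(O^*)-O(\delta\eps n)$ only additively (it does, for the same reason additive errors in fractional greedy analyses do not blow up), and to verify that the non-monotone progress lemma carries the extra $e^{-t}$ factor through cleanly. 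A secondary technical point, if one wants a fully self-contained implementation rather than idealized oracle access to $\hat F$, is that $\hat F$ and $\nabla\hat F$ are themselves estimated by random sampling; a Hoeffding bound shows $\poly(n,1/\delta,\log(1/\delta'))$ samples make this estimation error at most $\eps$ with probability $1-\delta'$, which folds into the same $O(n\eps)$ bound.
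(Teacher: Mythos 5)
Your proposal is correct and follows essentially the same route as the paper: bound the coordinatewise gradient perturbation by $2\eps$ via the convex-combination structure of the multilinear extension (the paper's Lemma~\ref{lem:derivative-approximate}), show the one-step progress inequality loses only an additive $O(\delta\eps n)$ (the paper's Lemma~\ref{lemma:measured_cont_greedy_improvement}), unroll the recursion so the losses sum to $O(\eps n)$, and note that pipage rounding is oblivious. The paper likewise defers the non-monotone recursion (with the $e^{-t}$ factor) to Feldman's thesis and handles the sampling of $\partial_i\hat F$ exactly as in your final remark.
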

The proof can be found in Appendix~\ref{sec:cts-greedy-robust}.
In particular, see Lemma~\ref{lemma:monotone-robust} for the monotone case and Lemma~\ref{lemma:non-monotone-robust} for the non-monotone case.
Note that, technically, Lemma~\ref{lemma:unified-cts-greedy-robust} has a small discretization error but this can be made arbitrarily small so we omit it in the statement.
In addition, we can absorb the discretization error into the error due to noise in our theorem statements below.

\paragraph{Monotone Submodular Functions with Matroid Constraints.}
We apply Theorem~\ref{theorem:meta} to the problem of maximizing monotone submodular functions with matroid constraints under a noisy value oracle. 
Let the algorithm $\cA$ in Algorithm \ref{alg:matroid-general} be the measured continuous greedy algorithm \cite{feldman_unified_2011} mentioned above. 
%
%
%
Fix parameter $\eps \in (0, \frac{2\nu^2}{\alpha})$.
Choose integer $m \ge \max\{2, 8\nu^2\} \frac{n^4}{\eps^2} \big( n + \log (4n) \big) = \tilde O(\frac{n^5}{\eps^2})$, $t \ge \log_2 (4m) = \Theta(\log(\frac{n}{\eps}))$, and $h = t^2 = \Theta(\log^2(\frac{n}{\eps}))$. Then, we apply Theorem~\ref{theorem:meta} and Lemma \ref{lemma:unified-cts-greedy-robust} with parameter $\eps_1 = \frac{\eps f_{\max}}{n^2}$ to obtain:
\begin{align*}
    \E[ f(\ALG) ] & \, \ge \,  \Big(1-\frac{1}{e}\Big) \Big( 1 - \frac{h}{r-h} - \frac{1}{n} \Big) f(O^*) \, - \,  O(n\eps_1) \\
    & \, =\,  \Big(1-\frac{1}{e}\Big) \Big( 1 - \frac{\Theta(\log^2(\frac{n}{\eps}))}{r - \Theta(\log^2(\frac{n}{\eps}))} - \frac{1}{n} \Big) f(O^*) \, - \, O(\frac{\eps f_{\max}}{n}) \\
    & \, \ge \, \Big(1-\frac{1}{e} - \frac{\Theta(\log^2(\frac{n}{\eps}))}{r - \Theta(\log^2(\frac{n}{\eps}))} - \frac{1}{n} - O(\eps) \Big) f(O^*), 
\end{align*}
which immediately leads to the following corollary.
\begin{theorem}\label{thm:matroid-monotone}
Fix $\eps \in (0, \frac{2\nu^2}{\alpha})$. Suppose $n \ge \frac{1}{\eps}$ and the matroid's rank $r \ge \Omega( \frac{1}{\eps} \log^2(\frac{n}{\eps}) )$.
By letting the $\cA$ in Algorithm \ref{alg:matroid-general} be the measured continuous greedy algorithm \citep{feldman_unified_2011}, we obtain a polynomial-time algorithm for maximizing monotone submodular functions under matroid constraints with noisy oracle satisfying $\E[ f(\ALG) ] ~ \ge ~ \big(1 - \frac{1}{e} - O(\eps) \big) f(O^*)$.
\end{theorem}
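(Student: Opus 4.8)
The plan is to obtain Theorem~\ref{thm:matroid-monotone} as a direct consequence of the meta-theorem (Theorem~\ref{theorem:meta}) and the robustness of measured continuous greedy (Lemma~\ref{lemma:unified-cts-greedy-robust}), essentially by making rigorous the chain of inequalities displayed immediately before the theorem. First I would instantiate Algorithm~\ref{alg:matroid-general} with $\cA=$ measured continuous greedy. By Lemma~\ref{lemma:unified-cts-greedy-robust}, on any matroid (in particular on the contracted matroid $\cI_H$ handed to it) this algorithm is a $(1-1/e)$-approximation for monotone $f$ and is $O(n\eps)$-robust against $\eps$-approximate oracles, i.e.\ $\beta(\eps)=O(n\eps)$. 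Then I would fix the internal accuracy parameter $\eps_1:=\eps f_{\max}/n^2$ and choose $m=\max\{2,8\nu^2\}\tfrac{n^4}{\eps^2}(n+\log(4n))=\tilde O(n^5/\eps^2)$, $t=\lceil\log_2(4m)\rceil=\Theta(\log(n/\eps))$, and $h=t^2=\Theta(\log^2(n/\eps))$, and verify that these satisfy condition~\eqref{eq:h-t-m-condition} with $\eps$ there replaced by $\eps_1$ and $\delta=1/n$, and that $0<\eps_1<\tfrac{2\nu^2}{\alpha}f_{\max}$ (this follows from the hypothesis $\eps<\tfrac{2\nu^2}{\alpha}$, where $\alpha$ denotes the sub-exponential parameter, since $n\ge1$).

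Next I would invoke Theorem~\ref{theorem:meta} in the monotone case with these parameters, obtaining
\[
  \E[f(\ALG)] \;\ge\; \Big(1-\tfrac1e\Big)\Big(1-\tfrac{h}{r-h}-\tfrac1n\Big)f(O^*) \;-\; O(n\eps_1).
\]
The remaining work is to bound each of the three loss terms by $O(\eps)\,f(O^*)$. The $1/n$ term is at most $\eps$ because $n\ge1/\eps$. The $h/(r-h)$ term is $O(\eps)$: since $h=\Theta(\log^2(n/\eps))$, the rank hypothesis $r\ge\Omega\big(\tfrac1\eps\log^2(n/\eps)\big)$ forces $r-h=\Omega(\tfrac1\eps)\cdot\Theta(\log^2(n/\eps))$, so $h/(r-h)=O(\eps)$. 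For the noise term, $O(n\eps_1)=O(\eps f_{\max}/n)$; here I would use monotonicity: discarding loops (which lie in no feasible set), every singleton is feasible, so submodularity gives $\max_S f(S)\le\sum_i f(\{i\})\le n\,f(O^*)$, and hence, taking $f_{\max}$ to be this upper bound, $O(n\eps_1)=O(\eps)\,f(O^*)$. Combining the three estimates yields $\E[f(\ALG)]\ge(1-1/e-O(\eps))\,f(O^*)$.

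Finally I would verify polynomial running time. Measured continuous greedy with discretization step $\delta$ uses $1/\delta$ rounds, each needing one linear optimization over the matroid polytope (done in $\poly(n)$ time with an independence oracle) and $\poly(n)$ samples to estimate the gradient of the multilinear extension, so $Q(\cA)=\poly(n,1/\delta)$. By the last line of Theorem~\ref{theorem:meta}, Algorithm~\ref{alg:matroid-general} makes $m\cdot Q(\cA)=\tilde O(n^5/\eps^2)\cdot\poly(n,1/\delta)$ calls to $\tilde f$, plus $\poly(n)$ overhead for sampling $H,H'$ and working over $\cI_H$. Taking $1/\delta=\poly(n)$ large enough absorbs the $O(n^3\delta)$ discretization error of measured continuous greedy into the $O(\eps)$ slack, so the whole procedure runs in $\poly(n,1/\eps)$ time.

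The only step I expect to be more than bookkeeping is converting the additive noise error $O(n\eps_1)=O(\eps f_{\max}/n)$ into the multiplicative error $O(\eps)\,f(O^*)$: this is precisely where monotonicity is used (through $\max_S f(S)\le n\,f(O^*)$), and it explains why $\eps_1$ is scaled down by $n^2$ rather than $n$ — one factor of $n$ to offset the $O(n\cdot)$ robustness loss and a second to pass from $f_{\max}$ to $f(O^*)$. Everything else is checking \eqref{eq:h-t-m-condition}, tracking the asymptotics of $m,t,h$, and noting that the rank hypothesis is calibrated exactly so that $h/(r-h)=O(\eps)$.
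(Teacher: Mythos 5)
Your proposal is correct and follows essentially the same route as the paper: instantiate the meta-algorithm with measured continuous greedy, apply Theorem~\ref{theorem:meta} and Lemma~\ref{lemma:unified-cts-greedy-robust} with $\eps_1=\eps f_{\max}/n^2$ and the stated $m,t,h$, and absorb the three loss terms into $O(\eps)f(O^*)$. Your explicit justification of $f_{\max}\le n\,f(O^*)$ via monotonicity/submodularity (and the polynomial-time check) only fills in a step the paper leaves implicit.
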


Previous work \cite{huang_efficient_2022} gave a polynomial-time algorithm with $(1-1/e)/2 - O(\eps)$ approximation ratio for noisy monotone submodular maximization under matroid constraints, assuming $n \ge \eps^{-4}$ and $r \ge \eps^{-4/3}$. Our Theorem \ref{thm:matroid-monotone} improves \cite{huang_efficient_2022} by increasing the approximation ratio to $1- 1/e - O(\eps)$ (which is the tight ratio even with the exact value oracle) as well as relaxing the condition on $n$ and $r$.

In the special case of cardinality constraints, \cite{huang_efficient_2022}'s algorithm achieves $(1-1/e-O(\eps))$ approximation with $\tilde O(r^2 n^3 / \eps)$ query complexity (their Theorem 4.6).
By letting $\cA$ be the greedy algorithm with query complexity $Q(\cA) = O(rn)$, our algorithm achieves the same approximation ratio with query complexity $m \cdot Q(\cA) = \tilde O(n^5/\eps^2) \cdot O(rn) = \tilde O(rn^6 / \eps^2)$.
While our query complexity might be worse than \cite{huang_efficient_2022}, the advantage of our algorithm lies in its generality and better approximation ratio in the more general matroid constraints case.

\paragraph{Non-Monotone Submodular Functions with Matroid Constraint.}
For non-monotone submodular functions, choosing the parameters as above, by Theorem \ref{theorem:meta} and Lemma~\ref{lemma:unified-cts-greedy-robust} we obtain the following: 
\begin{align*}
    \E[ f(\ALG) ] & \, \ge \,  \frac{1}{e} \cdot \Big( 1 - \frac{h}{r-h} - \frac{t}{h-t} - \frac{1}{n} \Big) f(O^*) \, - \,  O(n\eps_1) \\
    & \, \ge \, \Big(\frac{1}{e} - \frac{\Theta(\log^2(\frac{n}{\eps}))}{r - \Theta(\log^2(\frac{n}{\eps}))} - \frac{1}{\Theta(\log\frac{n}{\eps})} - \frac{1}{n} - O(\eps) \Big) f(O^*). 
\end{align*}
\begin{theorem}\label{thm:matroid-non-monotone}
Fix $\eps \in (0, \frac{2\nu^2}{\alpha})$. Suppose $n \ge \frac{1}{\eps}$ and the matroid's rank $r \ge \Omega( \frac{1}{\eps} \log^2(\frac{n}{\eps}) )$.
By letting the $\cA$ in Algorithm \ref{alg:matroid-general} be the measured continuous greedy algorithm \citep{feldman_unified_2011}, we obtain a polynomial-time algorithm for maximizing non-monotone submodular functions under matroid constraints with noisy oracle satisfying $\E[ f(\ALG) ] ~ \ge ~ \big(\frac{1}{e} - \frac{1}{\Theta(\log\frac{n}{\eps})} - O(\eps) \big) f(O^*)$.
\end{theorem}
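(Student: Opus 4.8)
The plan is to combine Theorem~\ref{theorem:meta} (the non-monotone branch) with Lemma~\ref{lemma:unified-cts-greedy-robust}, exactly as in the monotone case, and then simply bound the additive loss terms under the stated hypotheses on $n$ and $r$. First I would instantiate Algorithm~\ref{alg:matroid-general} with $\cA$ being the measured continuous greedy algorithm of \citet{feldman_unified_2011}, which by Lemma~\ref{lemma:unified-cts-greedy-robust} gives $\alpha = 1/e$ and is $O(n\eps_1)$-robust against an $\eps_1$-approximate oracle. Then I would set the parameters $h, t, m$ as in the monotone proof: choose $\eps_1 = \eps f_{\max}/n^2$, take $m \ge \max\{2, 8\nu^2\}\tfrac{n^4}{\eps^2}(n + \log(4n)) = \tilde O(n^5/\eps^2)$ so that the condition \eqref{eq:h-t-m-condition} holds with $\delta = 1/n$, $t \ge \log_2(4m) = \Theta(\log(n/\eps))$, and $h = t^2 = \Theta(\log^2(n/\eps))$. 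With these choices, Lemma~\ref{lem:sub-exponential-concentration-tilde-F-2} guarantees $\hat F^{H,t,m}$ is an $\eps_1$-approximate oracle for $F^{H,t}$ with probability $\ge 1 - 1/n$, so Theorem~\ref{theorem:meta} applies.

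Next I would plug in the non-monotone bound from Theorem~\ref{theorem:meta}:
\begin{align*}
\E[f(\ALG)] &\ge \frac1e\Big(1 - \frac{h}{r-h} - \frac{t}{h-t} - \frac1n\Big) f(O^*) - O(n\eps_1) \\
&= \frac1e\Big(1 - \frac{\Theta(\log^2(n/\eps))}{r - \Theta(\log^2(n/\eps))} - \frac{\Theta(\log(n/\eps))}{\Theta(\log^2(n/\eps)) - \Theta(\log(n/\eps))} - \frac1n\Big) f(O^*) - O\!\Big(\frac{\eps f_{\max}}{n}\Big).
\end{align*}
The third term simplifies: $\frac{t}{h-t} = \frac{1}{t-1} = \frac{1}{\Theta(\log(n/\eps))}$. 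Since $f_{\max} \ge f(O^*)$ and $n \ge 1/\eps$, the additive noise term $O(\eps f_{\max}/n)$ is absorbed into $O(\eps) f(O^*)$; likewise $1/n \le \eps$. Under the hypothesis $r \ge \Omega(\tfrac1\eps \log^2(n/\eps))$ the term $\tfrac{h}{r-h} = \tfrac{\Theta(\log^2(n/\eps))}{r - \Theta(\log^2(n/\eps))} = O(\eps)$, which can also be folded into the $O(\eps)$ slack. This yields $\E[f(\ALG)] \ge \big(\tfrac1e - \tfrac{1}{\Theta(\log(n/\eps))} - O(\eps)\big) f(O^*)$, as claimed. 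Finally, I would note that the algorithm runs in polynomial time: the query complexity is $m \cdot Q(\cA) = \tilde O(n^5/\eps^2) \cdot \poly(n) = \poly(n, 1/\eps)$, and the discretization error $O(n^3\delta)$ from the measured continuous greedy can be made negligible (absorbed into the $O(\eps)$ term) by taking $\delta$ small enough.

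The only genuinely delicate point — and it is minor — is that, unlike the monotone case, the $\tfrac{t}{h-t}$ term does \emph{not} vanish with growing $r$; it is fixed at $1/\Theta(\log(n/\eps))$, which is why the final ratio is $\tfrac1e - 1/\Theta(\log(n/\eps)) - O(\eps)$ rather than the clean $\tfrac1e - O(\eps)$. This reflects the tension described in the Remark after the surrogate function definition: one needs $t$ large enough ($t \ge \log_2(4m)$) for denoising, but the degradation of a non-monotone $f$ under adding a random $t$-subset scales like $t/(h-t)$, and with $h = t^2$ this is $\Theta(1/t)$. I would not expect any obstacle beyond bookkeeping; the substantive work has all been done in Theorem~\ref{theorem:meta}, Lemma~\ref{lem:smoothing}, Lemma~\ref{lem:sub-exponential-concentration-tilde-F-2}, and Lemma~\ref{lemma:unified-cts-greedy-robust}, and this theorem is a direct corollary obtained by choosing parameters and simplifying.
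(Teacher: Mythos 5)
Your proposal is correct and follows essentially the same route as the paper: instantiate Theorem~\ref{theorem:meta} (non-monotone branch) with the measured continuous greedy via Lemma~\ref{lemma:unified-cts-greedy-robust}, choose $m = \tilde O(n^5/\eps^2)$, $t = \Theta(\log(n/\eps))$, $h = t^2$, $\eps_1 = \eps f_{\max}/n^2$, and simplify, with $\frac{t}{h-t} = \frac{1}{\Theta(\log(n/\eps))}$ surviving as the extra loss term exactly as in the paper. One minor wording slip: absorbing $O(\eps f_{\max}/n)$ into $O(\eps) f(O^*)$ requires $f_{\max} \lesssim n\, f(O^*)$ (which follows from submodularity and non-negativity), not ``$f_{\max} \ge f(O^*)$'' as you wrote --- but this is the same implicit step the paper itself takes, so it does not change the verdict.
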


The $\frac{1}{e} \approx 0.367$ approximation above is not necessarily tight.  \citet{buchbinder2024constrained} design a $0.401$-approximation algorithm for non-monotone submodular maximization under matroid constraint with exact value oracle.
Their algorithm is quite technical, so we leave as an open question whether their algorithm is robust against approximate value oracle. 
If their algorithm is robust, then it can be directly converted to an algorithm for noisy non-monotone submodular maximization under matroid constraint achieving $0.401 - o(1)$ approximation, using our Algorithm \ref{alg:matroid-general}. 



\paragraph{High-Probability Result.}
While our main results (Theorems \ref{theorem:meta}, \ref{thm:matroid-monotone} and \ref{thm:matroid-non-monotone}) are stated in terms of the expected value $\E[f(\ALG)]$, we can obtain a high-probability result for maximizing monotone submodular functions under noise.
The idea is to repeat Algorithm \ref{alg:matroid-general} multiple times and output the best solution. 
The challenge here is to compare two sets $S_1, S_2$ using noisy values, without access to the true values $f(S_1), f(S_2)$.
For monotone functions, one can construct another surrogate function $\tilde f_0(S)$, by randomly removing an element from $S$, to do the comparison. 
Monotonicity combined with submodularity ensures $f(S) \ge f_0(S) \ge (1-\frac{1}{|S|}) f(S)$.
Such an idea was also utilized by  \citep{huang_efficient_2022}.
However, constructing such surrogate functions for non-monotone functions becomes technically challenging. The surrogate function above no longer works since removing an element could actually \emph{improve} the objective. Indeed, an approximate version of monotonicity is not true in general. We leave for future work to obtain high-probability results for maximizing non-monotone functions under noise.  
See Appendix \ref{app:discussion-high-probability} for a detailed discussion. 


\subsection{Unconstrained Submodular Maximization}
Finally, we consider maximizing any (non-monotone) submodular function without constraints.
Here, we instantiate Theorem~\ref{theorem:meta} with the double greedy algorithm \cite{Buchbinder_Feldman_Seffi_Schwartz_2015} which is known to give a $1/2$-approximation for unconstrained submodular maximization.
At a high-level, the double greedy algorithm works as follows. 
We first initialize two sets $X_0 = \emptyset$ and $Y_0 = N$.
Let $N = \{e_1, \ldots, e_n\}$. 
For $t = 1, \ldots, n$, we check the marginal value $a_t$ of adding $e_t$ to $X_{t-1}$ and the marginal value $b_t$ of removing $e_t$ from $X_{t-1}$.
With probability $a_t / (a_t + b_t)$ (with some clipping operations if necessary), we add $e_t$ to $X_{t-1}$ to get $X_t$ and set $Y_t = Y_{t-1}$. Otherwise, we set $X_{t} = X_{t-1}$ and remove $e_t$ from $Y_{t-1}$ to get $Y_{t}$.
A formal description of the algorithm can be found in Appendix~\ref{sec:robustness_double_greedy}.
The key fact that we require is the following lemma, with proof given in Appendix~\ref{sec:robustness_double_greedy}.
\begin{lemma}[Robustness of double greedy] \label{lem:double_greedy_robust}
    The double greedy algorithm obtains $1/2$-approximation for maximizing a submodular function without constraints.
    Moreover, the algorithm is $O(n\eps)$-robust.
\end{lemma}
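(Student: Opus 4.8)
The plan is to adapt the classical analysis of the double greedy algorithm of \cite{Buchbinder_Feldman_Seffi_Schwartz_2015} and track how an $\eps$-approximate oracle perturbs each inequality in that analysis. Recall the standard proof: fix the optimal set $O^*$, and for $t = 0, 1, \ldots, n$ define the hybrid set $O_t = (O^* \cup X_t) \cap Y_t$, which interpolates between $O_0 = O^*$ and $O_n = X_n = Y_n = \ALG$. The key inductive claim is that for every step $t$,
\begin{equation*}
\E[f(X_t) + f(Y_t) - f(X_{t-1}) - f(Y_{t-1})] \;\ge\; \E[f(O_{t-1}) - f(O_t)],
\end{equation*}
and summing over $t$ telescopes to $2\,\E[f(\ALG)] \ge f(O^*) - \E[f(\ALG)] + f(X_0) + f(Y_0) \ge f(O^*)$, using nonnegativity, which gives the $1/2$-approximation. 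The per-step inequality is proved by a case analysis on whether $e_t \in O^*$, and in each case one lower-bounds the left side using the randomized choice probability $a_t/(a_t+b_t)$ together with submodularity (specifically, that the "true" marginals $f_{X_{t-1}}(e_t)$ and $-f_{Y_{t-1}-e_t}(e_t)$ satisfy $a_t + b_t \ge 0$, or rather that at least one of the clipped values is nonnegative).

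The first step I would take is to set up the algorithm precisely with the noisy/approximate oracle: in round $t$, the algorithm computes $\hat a_t = \hat f(X_{t-1}+e_t) - \hat f(X_{t-1})$ and $\hat b_t = \hat f(Y_{t-1}-e_t) - \hat f(Y_{t-1})$, clips them to $[\hat a_t]_+ $ and $[\hat b_t]_+$, and moves with probability $[\hat a_t]_+ / ([\hat a_t]_+ + [\hat b_t]_+)$ (breaking ties arbitrarily when both are zero). Since each $\hat f$ value is within $\eps$ of the truth, we have $|\hat a_t - a_t| \le 2\eps$ and $|\hat b_t - b_t| \le 2\eps$ where $a_t, b_t$ are the true marginals. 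The second step is to re-run the per-step case analysis: in each case, wherever the original proof invokes the exact value of a marginal or the exact probability $a_t/(a_t+b_t)$, substitute the approximate quantity and absorb the discrepancy into an additive error term. Because the probabilities can shift, and because a clipped marginal that "should" be positive might be computed as (slightly) negative, each of the two hybrid-gap inequalities loses at most $O(\eps)$ additively in that round; I would verify this by the same bounded-difference bookkeeping used in analogous robustness proofs for greedy. Summing the $n$ per-step inequalities then yields $2\,\E[f(\ALG)] \ge f(O^*) - O(n\eps)$, i.e.\ $\E[f(\ALG)] \ge \frac12 f(O^*) - O(n\eps)$, which is exactly $O(n\eps)$-robustness with $\alpha = 1/2$.

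The main obstacle is the case where the true marginals and the approximate marginals disagree in sign after clipping — e.g., $a_t$ is a tiny positive number but $\hat a_t$ comes out negative (or vice versa), so the algorithm's move probability is $0$ or $1$ when the "correct" probability is bounded away from $\{0,1\}$. In the noiseless analysis one crucially uses that $a_t + b_t \ge 0$ (a consequence of submodularity) to argue the chosen direction is never too bad; with approximate values this becomes $\hat a_t + \hat b_t \ge -4\eps$, and one has to carefully check that the "wrong" deterministic move in the borderline regime costs only $O(\eps)$ rather than something uncontrolled. I expect this is handled by noting that when clipping forces a deterministic choice, the true marginal in the chosen direction is at least $-2\eps$ (so the move decreases $f(X)+f(Y)$ by at most $O(\eps)$), and the corresponding change in the hybrid term $f(O_{t-1}) - f(O_t)$ is likewise controlled by submodularity up to $O(\eps)$; making this precise in every subcase is the one genuinely delicate part of the argument. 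Everything else is a routine perturbation of the textbook telescoping proof.
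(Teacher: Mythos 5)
Your plan is substantively the same argument as the paper's, but packaged differently: rather than re-deriving the hybrid/telescoping analysis from scratch, the paper invokes a black-box lemma from prior work on double greedy run with arbitrary (possibly wrong) selection probabilities --- namely that $\E[f(X_n)] \ge \tfrac12 f(\OPT) - \E\big[\sum_{i=1}^n r_i\big]$ where $r_i = \max\{\hat p_i b_i, \hat q_i a_i\} - \tfrac12(\hat p_i a_i + \hat q_i b_i)$ measures the per-step violation (Theorem 2.1 of Roughgarden--Wang / Lemma 4.3 of Harvey--Liaw--Soma) --- and then reduces robustness to a three-case computation showing $r_i \le \tfrac32\eps$. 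Your per-step ``absorb an $O(\eps)$ error'' bookkeeping is precisely a bound on this $r_i$, and you correctly isolate the only delicate point: the clipped/sign-flip cases where the approximate marginals force a deterministic move, controlled via $a_i + b_i \ge 0$; this matches the paper's cases $\hat a_i > 0, \hat b_i \le 0$ and $\hat a_i \le 0$. So your route is sound and essentially re-proves the cited lemma instead of citing it; what the paper's modular approach buys is that only the $r_i \le \tfrac32\eps$ case analysis is new work, while the telescoping machinery is untouched (your observation that differences of oracle values give $2\eps$ rather than $\eps$ error only changes constants absorbed in $O(n\eps)$).

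One slip to fix: the per-step inequality you recall is missing the factor $\tfrac12$. The randomized analysis of \cite{Buchbinder_Feldman_Seffi_Schwartz_2015} shows $\E[f(O_{t-1}) - f(O_t)] \le \tfrac12\,\E\big[(f(X_t)-f(X_{t-1})) + (f(Y_t)-f(Y_{t-1}))\big]$, and it is this version, perturbed by the per-step error, that telescopes to $\E[f(\ALG)] \ge \tfrac12 f(O^*) - O(n\eps)$. The inequality as you wrote it (without the $\tfrac12$) is the one underlying the deterministic variant and telescopes only to a $1/3$-approximation; your concluding chain ``$\ge f(O^*)$'' also does not follow as written. This is a misremembered constant in the known noiseless part rather than a flaw in your robustness perturbation, but it must be corrected to obtain the claimed $1/2$.
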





Directly applying Lemma \ref{lem:double_greedy_robust} and the non-monotone case in Theorem \ref{theorem:meta} gives us an algorithm for maximizing unconstrained non-monotone functions with noisy oracle with $\frac{1}{2} - \frac{1}{\Theta(\log\frac{n}{\eps})} - O(\eps)$ approximation ratio. Yet, the unconstrained case allows for a more refined analysis which gives a better approximation ratio that removes the $\frac{1}{\Theta(\log\frac{n}{\eps})}$ term (see Appendix \ref{app:proof:non-monotone-unconstrained} for the proof): 
\begin{theorem}
\label{thm:non-monotone-unconstrained}
Fix $\eps \in (0, \frac{2\nu^2}{\alpha})$. Suppose $n \ge \Omega( \frac{1}{\eps} \log^2(\frac{n}{\eps}) )$.
By letting the $\cA$ in Algorithm \ref{alg:matroid-general} be the double greedy algorithm \cite{Buchbinder_Feldman_Seffi_Schwartz_2015}, we obtain a polynomial-time algorithm for maximizing unconstrained non-monotone submodular functions with noisy oracle satisfying $\E[ f(\ALG) ] ~ \ge ~ \big(\frac{1}{2} - O(\eps) \big) f(O^*)$.
\end{theorem}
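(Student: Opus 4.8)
The plan is to instantiate Algorithm~\ref{alg:matroid-general} with the double greedy algorithm $\cA$, but to re-run the argument behind Theorem~\ref{theorem:meta} with a sharper version of the smoothing lemma that exploits the fact that the constraint is trivial ($\cI = 2^N$). First I would observe that in the unconstrained case the meta-algorithm simplifies drastically: since $\cI = 2^N$, any $B_0$ is a ``basis'' only in a degenerate sense, so instead of picking a basis and contracting, I would take $H$ to be a uniformly random subset of $N$ of size $h$, run double greedy on ground set $N \setminus H$ to maximize $\hat F^{H,t,m}$, and output $S_H \cup H'$ with $H' \sim H[t]$. The concentration guarantee (Lemma~\ref{lem:sub-exponential-concentration-tilde-F-2}) still applies verbatim, so with probability $\ge 1 - \delta$ the oracle $\hat F^{H,t,m}$ is an $\eps_1$-approximate oracle for $F^{H,t}$ on all of $2^{N\setminus H}$, and then $O(n\eps_1)$-robustness of double greedy (Lemma~\ref{lem:double_greedy_robust}) gives $\E[F^{H,t}(S_H) \mid \mathcal E] \ge \tfrac12 \max_{S \subseteq N\setminus H} F^{H,t}(S) - O(n\eps_1)$. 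Unrolling $F^{H,t}(S_H) = \E_{H'\sim H[t]}[f(S_H \cup H')]$ recovers $\E[f(\ALG)\mid\mathcal E] \ge \tfrac12\,\E_H[\max_{S\subseteq N\setminus H} F^{H,t}(S)] - O(n\eps_1)$.

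The heart of the improvement is a refined unconstrained smoothing lemma showing $\E_H[\max_{S \subseteq N\setminus H} F^{H,t}(S)] \ge (1 - o(1)) f(O^*)$ \emph{without} losing the $\frac{t}{h-t}$ term that appears in the general Lemma~\ref{lem:smoothing}. The key point is that when there is no matroid constraint, the competitor set inside the max need not be $O^* - \sigma(H)$ or $O^* - H$; I can simply use $S = O^* \setminus H$ (which is always feasible), so $\max_{S\subseteq N\setminus H} F^{H,t}(S) \ge F^{H,t}(O^*\setminus H) = \E_{H'\sim H[t]}[f((O^*\setminus H)\cup H')]$. Now taking expectation over $H \sim N[h]$ and over $H' \sim H[t]$, the composite random set $(O^*\setminus H)\cup H'$ should be analyzed directly: elements of $O^*$ outside $H$ are always kept; elements of $O^*$ inside $H$ are kept only if they land in $H'$; and elements of $H' \setminus O^*$ are spuriously added. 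Since $H$ is a \emph{uniformly random} size-$h$ subset of $N$, each element of $O^*$ is in $H$ with probability $h/n$, and conditioned on being in $H$ it is in $H'$ with probability $t/h$; so each element of $O^*$ survives with probability $1 - \frac{h}{n} + \frac{h}{n}\cdot\frac{t}{h} = 1 - \frac{h-t}{n}$. With $h = \Theta(\log^2(n/\eps))$ and $t = \Theta(\log(n/\eps))$ this survival probability is $1 - o(1)$, and the expected number of ``junk'' elements $H'\setminus O^*$ is at most $t = \Theta(\log(n/\eps)) = o(n)$. I would then invoke the submodular ``adding/removing few elements'' estimates already used in the proof of Lemma~\ref{lem:smoothing} (Lemmas~\ref{lem:adding-elements-submodularity}, \ref{lem:removing-elements-submodularity}) to conclude $\E_{H,H'}[f((O^*\setminus H)\cup H')] \ge \big(1 - O(\tfrac{h}{n}) - O(\tfrac{t}{n})\big) f(O^*) = (1-o(1)) f(O^*)$; crucially the loss is $O(h/n)$, not the constraint-forced $\frac{t}{h-t}$.

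Combining the two pieces, conditioned on $\mathcal E$ we get $\E[f(\ALG)\mid\mathcal E] \ge \big(\tfrac12 - o(1)\big) f(O^*) - O(n\eps_1)$, and then handling the failure event exactly as in the proof of Theorem~\ref{theorem:meta} (the bad event has probability $\delta = 1/n$ and contributes at most an additive $\tfrac12\cdot\tfrac1n f(O^*)$ loss since $f \ge 0$) yields $\E[f(\ALG)] \ge \big(\tfrac12 - \tfrac{1}{n} - o(1)\big) f(O^*) - O(n\eps_1)$. Finally I would choose the parameters as in Theorems~\ref{thm:matroid-monotone}/\ref{thm:matroid-non-monotone}: $m = \tilde O(n^5/\eps^2)$, $t = \Theta(\log(n/\eps))$, $h = t^2$, and run the reduction with $\eps_1 = \tfrac{\eps f_{\max}}{n^2}$ so that $O(n\eps_1) = O(\eps f_{\max}/n) = O(\eps) f(O^*)$ (using $f_{\max} \le O(1)\cdot f(O^*)$ up to the dummy-element normalization, or more carefully absorbing $f_{\max}$ as in the matroid proofs). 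Under the stated hypothesis $n \ge \Omega(\tfrac1\eps \log^2(\tfrac n\eps))$ we have $h/n = \Theta(\log^2(n/\eps))/n = O(\eps)$ and $t/n = O(\eps)$, so all the $o(1)$ terms are $O(\eps)$, giving $\E[f(\ALG)] \ge (\tfrac12 - O(\eps)) f(O^*)$, and the query complexity is $m \cdot Q(\cA) = \poly(n,1/\eps)$ since double greedy uses $O(n)$ queries.

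I expect the main obstacle to be making the refined smoothing lemma fully rigorous when $f$ is non-monotone: the adding/removing-elements submodularity bounds involve $\max_{S' \subseteq \cdot} f(S')$ terms that must be controlled by $f(O^*)$, and one has to be careful that over the random choice of $H$ and $H'$ these maxima really do stay bounded by (a constant times) $f(O^*)$ rather than some potentially larger value, since for non-monotone $f$ a subset can have larger value than $O^*$ only if it is infeasible — and here everything is feasible, so in fact $f(S') \le f(O^*)$ for all $S'$, which actually makes this step \emph{cleaner} than in the matroid case; the real care is in correctly bundling the two sources of perturbation (dropping $O^* \cap H$ elements that miss $H'$, and adding $H'\setminus O^*$ elements) into a single application of the submodular estimates without double-counting, and in verifying the parameter regime $n \ge \Omega(\tfrac1\eps\log^2(\tfrac n\eps))$ is exactly what is needed to kill the $h/n$ term.
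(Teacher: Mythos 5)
Your proposal is correct and follows essentially the same route as the paper: condition on the concentration event, invoke the $O(n\eps_1)$-robustness of double greedy, lower-bound $\max_{S\subseteq N\setminus H}F^{H,t}(S)$ by $F^{H,t}(O^*\setminus H)$, and control $\E_{H,H'}[f((O^*\setminus H)\cup H')]$ via the adding/removing-elements lemmas, crucially using that in the unconstrained case every $\max_{S'}f(S')$ term is bounded by $f(O^*)$, so the loss is $O(h/n)$ rather than $t/(h-t)$. The paper formalizes exactly this as Lemmas~\ref{lem:F-S-F-S-H} and~\ref{lem:F-S-f-S-H} (add $H'$ first, then remove $H\setminus H'$, conditioning appropriately at each step), which is just the reverse ordering of your ``drop then add'' bundling and yields the same $\big(\tfrac12 - O(\eps)\big)f(O^*)$ bound under $n\ge\Omega(\tfrac1\eps\log^2\tfrac n\eps)$.
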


\section{Simulation Results}
In this section, we present simulation results to compare the performances of our proposed algorithm (Algorithm \ref{alg:matroid-general}) and some heuristic algorithms for the noisy submodular maximization problem.

We focus on the simple yet underexplored case of unconstrained non-monotone noisy submodular maximization. We consider an example where the submodular function is a weighted additive function with quadratic cost in the subset size: $\forall S\subseteq \N, f(S) = \sum_{i\in S} w_i - c |S|^2$, where each element $i$ has weight $w_i \sim \mathrm{Uniform}[0, 20]$, with cost parameter $c = 10/n$, so the ground set $N$ has expected value $0$.
When sampling $w_i$, we ensure that $f$ is non-negative.  The noisy value function is $\tilde f(S) = \xi_S f(S)$ where $\xi_S \sim \mathrm{Normal}(\mu=1, \sigma^2=0.1)$.  We compare four algorithms against the optimal value benchmark $f(O^*) = \max_{S\subseteq \N} f(S)$:
\begin{itemize}[leftmargin=1.5em, itemsep=0pt, topsep=0pt]
    \item \textbf{Double greedy (DG) with exact value oracle} \citep{Buchbinder_Feldman_Seffi_Schwartz_2015}: this is a worst-case optimal polynomial-time algorithm in the noiseless setting. It uses the exact value oracle and is used only for reference. 
    \item \textbf{Double greedy (DG) with noisy value oracle}: the DG algorithm that uses the noisy value oracle directly.  It is a natural algorithm to compare to, given the optimality of noiseless DG.
    \item \textbf{Random subset}: pick a subset of size $n/2$ uniformly at random. 
    \item \textbf{Our algorithm in Theorem \ref{thm:non-monotone-unconstrained}}: our surrogate-value-based meta algorithm (Algorithm \ref{alg:matroid-general}) instantiated with DG.  After simple tuning, we set the parameters to $h=20, t=4, $ and vary $m$. 
\end{itemize}
We run 1000 simulations. In each simulation, we first sample $f$ (namely, the weights $w_i$) and the noisy function $\tilde f$ (the multipliers $\xi_S$), then run each of the above four algorithms once.
Table \ref{table:simulations-1} shows the means $\E\big[\frac{f(\ALG)}{f(O^*)}\big]$ and standard deviations of the true values of the obtained sets, as a fraction of the optimal value.  
We observed that the heuristic algorithm, DG with Noisy Oracle, does not perform well; it is only slightly better than Random Subset.  Our algorithm significantly outperforms the heuristic algorithms, with $\approx 15\%$ improvement with $m=50$ and $\approx 25\%$ with $m=200$. 

\begin{table}[h]
\renewcommand{\arraystretch}{1.12}

\caption{Comparison between ours and other algorithms, in the unconstrained non-monotone noisy submodular maximization setting.}
\label{table:simulations-1}

\begin{tabular}{l|l|llll}
\textbf{\begin{tabular}[c]{@{}l@{}}Ground \\ Set Size \end{tabular}} & \textbf{\begin{tabular}[c]{@{}l@{}}DG with\\ Exact Oracle\end{tabular}} &         \textbf{\begin{tabular}[c]{@{}l@{}}DG with\\ Noisy Oracle\end{tabular}}  & \textbf{\begin{tabular}[c]{@{}l@{}}Random\\ Subset \end{tabular}} & \textbf{\begin{tabular}[c]{@{}l@{}}Our Algorithm\\ ($m=50$)\end{tabular}} & \textbf{\begin{tabular}[c]{@{}l@{}}Our Algorithm\\ ($m=200$)\end{tabular}} \\ \hline
$n=50$  & 0.944 {\scriptsize ($\pm$ 0.028)} & 0.601 {\scriptsize ($\pm$ 0.074)} & 0.550 {\scriptsize($\pm$ 0.079)} & 0.674 {\scriptsize ($\pm$ 0.067)} & 0.735 {\scriptsize ($\pm$ 0.059)} \\
$n=100$  & 0.944 {\scriptsize ($\pm$ 0.019)} & 0.565 {\scriptsize ($\pm$ 0.055)} & 0.536 {\scriptsize($\pm$ 0.057)} & 0.657 {\scriptsize ($\pm$ 0.047)} & 0.731 {\scriptsize ($\pm$ 0.041)} 
\end{tabular}
\end{table}

\section{Conclusion and Future Work}
In this paper, we developed a framework for noisy submodular maximization which allows us to reuse existing robust algorithms for submodular maximization
while obtaining essentially the same approximation guarantee.
As applications, we considered submodular maximization subject to a matroid constraint (both monotone and non-monotone) and unconstrained non-monotone submodular maximization. We conclude this paper with a few open questions.
The first is to obtain high-probability bounds for non-monotone submodular maximization.
The second is to explore whether there is some meta-algorithm that can take \emph{any} existing algorithm and utilize it for the noisy setting.
Finally, an interesting open direction is to explore noisy submodular \emph{minimization}.



\bibliographystyle{plainnat}
\bibliography{bib}

@inproceedings{hassidim2017submodular,
  title={Submodular optimization under noise},
  author={Hassidim, Avinatan and Singer, Yaron},
  booktitle={Conference on Learning Theory},
  pages={1069--1122},
  year={2017},
  organization={PMLR}
}

@inproceedings{roughgarden_optimal_2018,
	title = {An Optimal Learning Algorithm for Online Unconstrained Submodular Maximization},
	url = {https://proceedings.mlr.press/v75/roughgarden18a.html},
	eventtitle = {Conference On Learning Theory},
	pages = {1307--1325},
	booktitle = {Proceedings of the 31st  Conference On Learning Theory},
	publisher = {{PMLR}},
	author = {Roughgarden, Tim and Wang, Joshua R.},
	urldate = {2024-07-16},
	year = {2018},
	langid = {english},
}

@inproceedings{Harvey_Liaw_Soma_2020,
	title = {Improved {Algorithms} for {Online} {Submodular} {Maximization} via {First}-order {Regret} {Bounds}},
	volume = {33},
	booktitle = {Advances in {Neural} {Information} {Processing} {Systems}},
	publisher = {Curran Associates, Inc.},
	author = {Harvey, Nicholas and Liaw, Christopher and Soma, Tasuku},
	year = {2020},
	pages = {123--133},
}

@article{Buchbinder_Feldman_Seffi_Schwartz_2015, title={A Tight Linear Time (1/2)-Approximation for Unconstrained Submodular Maximization}, volume={44}, url={http://epubs.siam.org/doi/10.1137/130929205}, DOI={10.1137/130929205}, number={5}, journal={SIAM Journal on Computing}, author={Buchbinder, Niv and Feldman, Moran and Seffi, Joseph and Schwartz, Roy}, year={2015}, pages={1384–1402}, language={en} }

@article{filmus2014monotone,
  title={Monotone submodular maximization over a matroid via non-oblivious local search},
  author={Filmus, Yuval and Ward, Justin},
  journal={SIAM Journal on Computing},
  volume={43},
  number={2},
  pages={514--542},
  year={2014},
  publisher={SIAM}
}

@book{wainwright_high-dimensional_2019,
	edition = {1},
	title = {High-{Dimensional} {Statistics}: {A} {Non}-{Asymptotic} {Viewpoint}},
	copyright = {https://www.cambridge.org/core/terms},
	isbn = {978-1-108-62777-1 978-1-108-49802-9},
	shorttitle = {High-{Dimensional} {Statistics}},
	url = {https://www.cambridge.org/core/product/identifier/9781108627771/type/book},
	urldate = {2024-09-14},
	publisher = {Cambridge University Press},
	author = {Wainwright, Martin J.},
	month = feb,
	year = {2019},
	doi = {10.1017/9781108627771},
}

@inproceedings{huang_efficient_2022,
	title = {Efficient {Submodular} {Optimization} under {Noise}: {Local} {Search} is {Robust}},
	volume = {35},
	url = {https://proceedings.neurips.cc/paper_files/paper/2022/file/a774503daed55eb53c634847ae071ec7-Paper-Conference.pdf},
	booktitle = {Advances in {Neural} {Information} {Processing} {Systems}},
	publisher = {Curran Associates, Inc.},
	author = {Huang, Lingxiao and Wang, Yuyi and Yang, Chunxue and Zhou, Huanjian},
	year = {2022},
	pages = {26122--26134},
}

@article{calinescu_maximizing_2011,
	title = {Maximizing a {Monotone} {Submodular} {Function} {Subject} to a {Matroid} {Constraint}},
	volume = {40},
	issn = {0097-5397, 1095-7111},
	url = {http://epubs.siam.org/doi/10.1137/080733991},
	doi = {10.1137/080733991},
	language = {en},
	number = {6},
	urldate = {2024-10-11},
	journal = {SIAM Journal on Computing},
	author = {Calinescu, Gruia and Chekuri, Chandra and Pál, Martin and Vondrák, Jan},
	month = jan,
	year = {2011},
	pages = {1740--1766},
}

@inproceedings{feldman_unified_2011,
	address = {Palm Springs, CA},
	title = {A {Unified} {Continuous} {Greedy} {Algorithm} for {Submodular} {Maximization}},
	isbn = {978-1-4577-1843-4 978-0-7695-4571-4},
	url = {https://ieeexplore.ieee.org/document/6108218/},
	doi = {10.1109/FOCS.2011.46},
	urldate = {2024-10-11},
	booktitle = {2011 {IEEE} 52nd {Annual} {Symposium} on {Foundations} of {Computer} {Science}},
	publisher = {IEEE},
	author = {Feldman, Moran and Naor, Joseph and Schwartz, Roy},
	month = oct,
	year = {2011},
	pages = {570--579},
}

@article{donald_generalised_1991,
	title = {A generalised exchange theorem for matroid bases},
	volume = {43},
	copyright = {https://www.cambridge.org/core/terms},
	issn = {0004-9727, 1755-1633},
	url = {https://www.cambridge.org/core/product/identifier/S0004972700028938/type/journal_article},
	doi = {10.1017/S0004972700028938},
	language = {en},
	number = {2},
	urldate = {2024-12-27},
	journal = {Bulletin of the Australian Mathematical Society},
	author = {Donald, John and Tobey, Malcolm},
	month = apr,
	year = {1991},
	pages = {177--180},
}

@misc{vondrak_note_2010,
	title = {A note on concentration of submodular functions},
	url = {https://arxiv.org/abs/1005.2791},
	author = {Vondrak, Jan},
	year = {2010},
}

@phdthesis{FeldmanThesis2010,
  author  = {Moran Feldman},
  title   = {Maximization Problems with Submodular Objective Functions},
  year    = {2013},
  month   = {June},
  school  = {Technion - Israel Institute of Technology}
}

@inproceedings{buchbinder2024constrained,
  title={Constrained submodular maximization via new bounds for dr-submodular functions},
  author={Buchbinder, Niv and Feldman, Moran},
  booktitle={Proceedings of the 56th Annual ACM Symposium on Theory of Computing},
  pages={1820--1831},
  year={2024}
}

@article{chen2023threshold,
  title={A Threshold Greedy Algorithm for Noisy Submodular Maximization},
  author={Chen, Wenjing and Xing, Shuo and Crawford, Victoria G},
  journal={arXiv preprint arXiv:2312.00155},
  year={2023}
}

@inproceedings{singla2016noisy,
author = {Singla, Adish and Tschiatschek, Sebastian and Krause, Andreas},
title = {Noisy submodular maximization via adaptive sampling with applications to crowdsourced image collection summarization},
year = {2016},
booktitle = {Proceedings of the Thirtieth AAAI Conference on Artificial Intelligence},
pages = {2037–2043},
numpages = {7},
location = {Phoenix, Arizona},
series = {AAAI'16}
}

@inproceedings{hassidim2018optimization,
  title={Optimization for approximate submodularity},
  author={Hassidim, Avinatan and Singer, Yaron},
  booktitle={Proceedings of the 32nd International Conference on Neural Information Processing Systems},
  pages={394--405},
  year={2018}
}

@article{horel2016maximization,
  title={Maximization of approximately submodular functions},
  author={Horel, Thibaut and Singer, Yaron},
  journal={Advances in neural information processing systems},
  volume={29},
  year={2016}
}

@inproceedings{zheng2021maximizing,
  title={Maximizing approximately k-submodular functions},
  author={Zheng, Leqian and Chan, Hau and Loukides, Grigorios and Li, Minming},
  booktitle={Proceedings of the 2021 SIAM International Conference on Data Mining (SDM)},
  pages={414--422},
  year={2021},
  organization={SIAM}
}

@article{streeter2008online,
  title={An online algorithm for maximizing submodular functions},
  author={Streeter, Matthew and Golovin, Daniel},
  journal={Advances in Neural Information Processing Systems},
  volume={21},
  year={2008}
}

@article{roughgarden2018optimal,
  title={An optimal algorithm for online unconstrained submodular maximization},
  author={Roughgarden, Tim and Wang, Joshua R},
  journal={arXiv preprint arXiv:1806.03349},
  year={2018}
}

@inproceedings{soma2019no,
  title={No-regret algorithms for online $ k $-submodular maximization},
  author={Soma, Tasuku},
  booktitle={The 22nd International Conference on Artificial Intelligence and Statistics},
  pages={1205--1214},
  year={2019},
  organization={PMLR}
}

@article{harvey2020improved,
  title={Improved algorithms for online submodular maximization via first-order regret bounds},
  author={Harvey, Nicholas and Liaw, Christopher and Soma, Tasuku},
  journal={Advances in Neural Information Processing Systems},
  volume={33},
  pages={123--133},
  year={2020}
}

@inproceedings{niazadeh2021online,
  title={Online learning via offline greedy algorithms: Applications in market design and optimization},
  author={Niazadeh, Rad and Golrezaei, Negin and Wang, Joshua R and Susan, Fransisca and Badanidiyuru, Ashwinkumar},
  booktitle={Proceedings of the 22nd ACM Conference on Economics and Computation},
  pages={737--738},
  year={2021}
}

@article{chierichetti2022additive,
  title={On additive approximate submodularity},
  author={Chierichetti, Flavio and Dasgupta, Anirban and Kumar, Ravi},
  journal={Theoretical Computer Science},
  volume={922},
  pages={346--360},
  year={2022},
  publisher={Elsevier}
}

@article{bardenet_concentration_2015,
	title = {Concentration inequalities for sampling without replacement},
	volume = {21},
	number = {3},
	urldate = {2025-10-22},
	journal = {Bernoulli},
	author = {Bardenet, Rémi and Maillard, Odalric-Ambrym},
	month = aug,
	year = {2015},
}
\newpage
\section*{NeurIPS Paper Checklist}

\begin{enumerate}

\item {\bf Claims}
    \item[] Question: Do the main claims made in the abstract and introduction accurately reflect the paper's contributions and scope?
    \item[] Answer: \answerYes{} 
    \item[] Justification: This is a theoretical paper. The abstract and introduction describe our main results and our techniques.
    \item[] Guidelines:
    \begin{itemize}
        \item The answer NA means that the abstract and introduction do not include the claims made in the paper.
        \item The abstract and/or introduction should clearly state the claims made, including the contributions made in the paper and important assumptions and limitations. A No or NA answer to this question will not be perceived well by the reviewers. 
        \item The claims made should match theoretical and experimental results, and reflect how much the results can be expected to generalize to other settings. 
        \item It is fine to include aspirational goals as motivation as long as it is clear that these goals are not attained by the paper. 
    \end{itemize}

\item {\bf Limitations}
    \item[] Question: Does the paper discuss the limitations of the work performed by the authors?
    \item[] Answer: \answerYes{} 
    \item[] Justification: We have discussed, for example, that our work does not always get the tight result (e.g.~non-monotone submodular maximization subject to a matroid constraint; see Section~\ref{sec:specific-results}) and that we were not able to get high probability bounds for all our results (also see Section~\ref{sec:specific-results}).
    \item[] Guidelines:
    \begin{itemize}
        \item The answer NA means that the paper has no limitation while the answer No means that the paper has limitations, but those are not discussed in the paper. 
        \item The authors are encouraged to create a separate "Limitations" section in their paper.
        \item The paper should point out any strong assumptions and how robust the results are to violations of these assumptions (e.g., independence assumptions, noiseless settings, model well-specification, asymptotic approximations only holding locally). The authors should reflect on how these assumptions might be violated in practice and what the implications would be.
        \item The authors should reflect on the scope of the claims made, e.g., if the approach was only tested on a few datasets or with a few runs. In general, empirical results often depend on implicit assumptions, which should be articulated.
        \item The authors should reflect on the factors that influence the performance of the approach. For example, a facial recognition algorithm may perform poorly when image resolution is low or images are taken in low lighting. Or a speech-to-text system might not be used reliably to provide closed captions for online lectures because it fails to handle technical jargon.
        \item The authors should discuss the computational efficiency of the proposed algorithms and how they scale with dataset size.
        \item If applicable, the authors should discuss possible limitations of their approach to address problems of privacy and fairness.
        \item While the authors might fear that complete honesty about limitations might be used by reviewers as grounds for rejection, a worse outcome might be that reviewers discover limitations that aren't acknowledged in the paper. The authors should use their best judgment and recognize that individual actions in favor of transparency play an important role in developing norms that preserve the integrity of the community. Reviewers will be specifically instructed to not penalize honesty concerning limitations.
    \end{itemize}

\item {\bf Theory assumptions and proofs}
    \item[] Question: For each theoretical result, does the paper provide the full set of assumptions and a complete (and correct) proof?
    \item[] Answer: \answerYes{} 
    \item[] Justification: We aimed to be precise in our theorem statements and to be precise in our model. All our proofs either appear in the main body or in the appendix.
    \item[] Guidelines:
    \begin{itemize}
        \item The answer NA means that the paper does not include theoretical results. 
        \item All the theorems, formulas, and proofs in the paper should be numbered and cross-referenced.
        \item All assumptions should be clearly stated or referenced in the statement of any theorems.
        \item The proofs can either appear in the main paper or the supplemental material, but if they appear in the supplemental material, the authors are encouraged to provide a short proof sketch to provide intuition. 
        \item Inversely, any informal proof provided in the core of the paper should be complemented by formal proofs provided in appendix or supplemental material.
        \item Theorems and Lemmas that the proof relies upon should be properly referenced. 
    \end{itemize}

    \item {\bf Experimental result reproducibility}
    \item[] Question: Does the paper fully disclose all the information needed to reproduce the main experimental results of the paper to the extent that it affects the main claims and/or conclusions of the paper (regardless of whether the code and data are provided or not)?
    \item[] Answer: \answerYes{} 
    \item[] Justification: 
    \item[] Guidelines:
    \begin{itemize}
        \item The answer NA means that the paper does not include experiments.
        \item If the paper includes experiments, a No answer to this question will not be perceived well by the reviewers: Making the paper reproducible is important, regardless of whether the code and data are provided or not.
        \item If the contribution is a dataset and/or model, the authors should describe the steps taken to make their results reproducible or verifiable. 
        \item Depending on the contribution, reproducibility can be accomplished in various ways. For example, if the contribution is a novel architecture, describing the architecture fully might suffice, or if the contribution is a specific model and empirical evaluation, it may be necessary to either make it possible for others to replicate the model with the same dataset, or provide access to the model. In general. releasing code and data is often one good way to accomplish this, but reproducibility can also be provided via detailed instructions for how to replicate the results, access to a hosted model (e.g., in the case of a large language model), releasing of a model checkpoint, or other means that are appropriate to the research performed.
        \item While NeurIPS does not require releasing code, the conference does require all submissions to provide some reasonable avenue for reproducibility, which may depend on the nature of the contribution. For example
        \begin{enumerate}
            \item If the contribution is primarily a new algorithm, the paper should make it clear how to reproduce that algorithm.
            \item If the contribution is primarily a new model architecture, the paper should describe the architecture clearly and fully.
            \item If the contribution is a new model (e.g., a large language model), then there should either be a way to access this model for reproducing the results or a way to reproduce the model (e.g., with an open-source dataset or instructions for how to construct the dataset).
            \item We recognize that reproducibility may be tricky in some cases, in which case authors are welcome to describe the particular way they provide for reproducibility. In the case of closed-source models, it may be that access to the model is limited in some way (e.g., to registered users), but it should be possible for other researchers to have some path to reproducing or verifying the results.
        \end{enumerate}
    \end{itemize}

\item {\bf Open access to data and code}
    \item[] Question: Does the paper provide open access to the data and code, with sufficient instructions to faithfully reproduce the main experimental results, as described in supplemental material?
    \item[] Answer: \answerYes{} 
    \item[] Justification: The code is uploaded. 
    \item[] Guidelines:
    \begin{itemize}
        \item The answer NA means that paper does not include experiments requiring code.
        \item Please see the NeurIPS code and data submission guidelines (\url{https://nips.cc/public/guides/CodeSubmissionPolicy}) for more details.
        \item While we encourage the release of code and data, we understand that this might not be possible, so “No” is an acceptable answer. Papers cannot be rejected simply for not including code, unless this is central to the contribution (e.g., for a new open-source benchmark).
        \item The instructions should contain the exact command and environment needed to run to reproduce the results. See the NeurIPS code and data submission guidelines (\url{https://nips.cc/public/guides/CodeSubmissionPolicy}) for more details.
        \item The authors should provide instructions on data access and preparation, including how to access the raw data, preprocessed data, intermediate data, and generated data, etc.
        \item The authors should provide scripts to reproduce all experimental results for the new proposed method and baselines. If only a subset of experiments are reproducible, they should state which ones are omitted from the script and why.
        \item At submission time, to preserve anonymity, the authors should release anonymized versions (if applicable).
        \item Providing as much information as possible in supplemental material (appended to the paper) is recommended, but including URLs to data and code is permitted.
    \end{itemize}

\item {\bf Experimental setting/details}
    \item[] Question: Does the paper specify all the training and test details (e.g., data splits, hyperparameters, how they were chosen, type of optimizer, etc.) necessary to understand the results?
    \item[] Answer: \answerNA{} 
    \item[] Justification: Our experiments do not involve typical machine learning training.  
    \item[] Guidelines:
    \begin{itemize}
        \item The answer NA means that the paper does not include experiments.
        \item The experimental setting should be presented in the core of the paper to a level of detail that is necessary to appreciate the results and make sense of them.
        \item The full details can be provided either with the code, in appendix, or as supplemental material.
    \end{itemize}

\item {\bf Experiment statistical significance}
    \item[] Question: Does the paper report error bars suitably and correctly defined or other appropriate information about the statistical significance of the experiments?
    \item[] Answer: \answerYes{} 
    \item[] Justification: We presented sample standard deviation. 
    \item[] Guidelines:
    \begin{itemize}
        \item The answer NA means that the paper does not include experiments.
        \item The authors should answer "Yes" if the results are accompanied by error bars, confidence intervals, or statistical significance tests, at least for the experiments that support the main claims of the paper.
        \item The factors of variability that the error bars are capturing should be clearly stated (for example, train/test split, initialization, random drawing of some parameter, or overall run with given experimental conditions).
        \item The method for calculating the error bars should be explained (closed form formula, call to a library function, bootstrap, etc.)
        \item The assumptions made should be given (e.g., Normally distributed errors).
        \item It should be clear whether the error bar is the standard deviation or the standard error of the mean.
        \item It is OK to report 1-sigma error bars, but one should state it. The authors should preferably report a 2-sigma error bar than state that they have a 96\% CI, if the hypothesis of Normality of errors is not verified.
        \item For asymmetric distributions, the authors should be careful not to show in tables or figures symmetric error bars that would yield results that are out of range (e.g. negative error rates).
        \item If error bars are reported in tables or plots, The authors should explain in the text how they were calculated and reference the corresponding figures or tables in the text.
    \end{itemize}

\item {\bf Experiments compute resources}
    \item[] Question: For each experiment, does the paper provide sufficient information on the computer resources (type of compute workers, memory, time of execution) needed to reproduce the experiments?
    \item[] Answer: \answerYes{} 
    \item[] Justification: 
    \item[] Guidelines:
    \begin{itemize}
        \item The answer NA means that the paper does not include experiments.
        \item The paper should indicate the type of compute workers CPU or GPU, internal cluster, or cloud provider, including relevant memory and storage.
        \item The paper should provide the amount of compute required for each of the individual experimental runs as well as estimate the total compute. 
        \item The paper should disclose whether the full research project required more compute than the experiments reported in the paper (e.g., preliminary or failed experiments that didn't make it into the paper). 
    \end{itemize}
    
\item {\bf Code of ethics}
    \item[] Question: Does the research conducted in the paper conform, in every respect, with the NeurIPS Code of Ethics \url{https://neurips.cc/public/EthicsGuidelines}?
    \item[] Answer: \answerNA{} 
    \item[] Justification: We have reviewed the code of ethics and don't think there should be any violation.
    \item[] Guidelines:
    \begin{itemize}
        \item The answer NA means that the authors have not reviewed the NeurIPS Code of Ethics.
        \item If the authors answer No, they should explain the special circumstances that require a deviation from the Code of Ethics.
        \item The authors should make sure to preserve anonymity (e.g., if there is a special consideration due to laws or regulations in their jurisdiction).
    \end{itemize}

\item {\bf Broader impacts}
    \item[] Question: Does the paper discuss both potential positive societal impacts and negative societal impacts of the work performed?
    \item[] Answer: \answerNA{} 
    \item[] Justification: Purely theoretical paper so unlikely to have any direct societal impact.
    \item[] Guidelines:
    \begin{itemize}
        \item The answer NA means that there is no societal impact of the work performed.
        \item If the authors answer NA or No, they should explain why their work has no societal impact or why the paper does not address societal impact.
        \item Examples of negative societal impacts include potential malicious or unintended uses (e.g., disinformation, generating fake profiles, surveillance), fairness considerations (e.g., deployment of technologies that could make decisions that unfairly impact specific groups), privacy considerations, and security considerations.
        \item The conference expects that many papers will be foundational research and not tied to particular applications, let alone deployments. However, if there is a direct path to any negative applications, the authors should point it out. For example, it is legitimate to point out that an improvement in the quality of generative models could be used to generate deepfakes for disinformation. On the other hand, it is not needed to point out that a generic algorithm for optimizing neural networks could enable people to train models that generate Deepfakes faster.
        \item The authors should consider possible harms that could arise when the technology is being used as intended and functioning correctly, harms that could arise when the technology is being used as intended but gives incorrect results, and harms following from (intentional or unintentional) misuse of the technology.
        \item If there are negative societal impacts, the authors could also discuss possible mitigation strategies (e.g., gated release of models, providing defenses in addition to attacks, mechanisms for monitoring misuse, mechanisms to monitor how a system learns from feedback over time, improving the efficiency and accessibility of ML).
    \end{itemize}
    
\item {\bf Safeguards}
    \item[] Question: Does the paper describe safeguards that have been put in place for responsible release of data or models that have a high risk for misuse (e.g., pretrained language models, image generators, or scraped datasets)?
    \item[] Answer: \answerNA{} 
    \item[] Justification: Purely theoretical paper so unlikely to pose such risks.
    \item[] Guidelines:
    \begin{itemize}
        \item The answer NA means that the paper poses no such risks.
        \item Released models that have a high risk for misuse or dual-use should be released with necessary safeguards to allow for controlled use of the model, for example by requiring that users adhere to usage guidelines or restrictions to access the model or implementing safety filters. 
        \item Datasets that have been scraped from the Internet could pose safety risks. The authors should describe how they avoided releasing unsafe images.
        \item We recognize that providing effective safeguards is challenging, and many papers do not require this, but we encourage authors to take this into account and make a best faith effort.
    \end{itemize}

\item {\bf Licenses for existing assets}
    \item[] Question: Are the creators or original owners of assets (e.g., code, data, models), used in the paper, properly credited and are the license and terms of use explicitly mentioned and properly respected?
    \item[] Answer: \answerNA{} 
    \item[] Justification: Purely theoretical paper.
    \item[] Guidelines:
    \begin{itemize}
        \item The answer NA means that the paper does not use existing assets.
        \item The authors should cite the original paper that produced the code package or dataset.
        \item The authors should state which version of the asset is used and, if possible, include a URL.
        \item The name of the license (e.g., CC-BY 4.0) should be included for each asset.
        \item For scraped data from a particular source (e.g., website), the copyright and terms of service of that source should be provided.
        \item If assets are released, the license, copyright information, and terms of use in the package should be provided. For popular datasets, \url{paperswithcode.com/datasets} has curated licenses for some datasets. Their licensing guide can help determine the license of a dataset.
        \item For existing datasets that are re-packaged, both the original license and the license of the derived asset (if it has changed) should be provided.
        \item If this information is not available online, the authors are encouraged to reach out to the asset's creators.
    \end{itemize}

\item {\bf New assets}
    \item[] Question: Are new assets introduced in the paper well documented and is the documentation provided alongside the assets?
    \item[] Answer: \answerNA{} 
    \item[] Justification: Purely theoretical paper.
    \item[] Guidelines:
    \begin{itemize}
        \item The answer NA means that the paper does not release new assets.
        \item Researchers should communicate the details of the dataset/code/model as part of their submissions via structured templates. This includes details about training, license, limitations, etc. 
        \item The paper should discuss whether and how consent was obtained from people whose asset is used.
        \item At submission time, remember to anonymize your assets (if applicable). You can either create an anonymized URL or include an anonymized zip file.
    \end{itemize}

\item {\bf Crowdsourcing and research with human subjects}
    \item[] Question: For crowdsourcing experiments and research with human subjects, does the paper include the full text of instructions given to participants and screenshots, if applicable, as well as details about compensation (if any)? 
    \item[] Answer: \answerNA{} 
    \item[] Justification: No use of human subjects.
    \item[] Guidelines:
    \begin{itemize}
        \item The answer NA means that the paper does not involve crowdsourcing nor research with human subjects.
        \item Including this information in the supplemental material is fine, but if the main contribution of the paper involves human subjects, then as much detail as possible should be included in the main paper. 
        \item According to the NeurIPS Code of Ethics, workers involved in data collection, curation, or other labor should be paid at least the minimum wage in the country of the data collector. 
    \end{itemize}

\item {\bf Institutional review board (IRB) approvals or equivalent for research with human subjects}
    \item[] Question: Does the paper describe potential risks incurred by study participants, whether such risks were disclosed to the subjects, and whether Institutional Review Board (IRB) approvals (or an equivalent approval/review based on the requirements of your country or institution) were obtained?
    \item[] Answer: \answerNA{} 
    \item[] Justification: No use of human objects.
    \item[] Guidelines:
    \begin{itemize}
        \item The answer NA means that the paper does not involve crowdsourcing nor research with human subjects.
        \item Depending on the country in which research is conducted, IRB approval (or equivalent) may be required for any human subjects research. If you obtained IRB approval, you should clearly state this in the paper. 
        \item We recognize that the procedures for this may vary significantly between institutions and locations, and we expect authors to adhere to the NeurIPS Code of Ethics and the guidelines for their institution. 
        \item For initial submissions, do not include any information that would break anonymity (if applicable), such as the institution conducting the review.
    \end{itemize}

\item {\bf Declaration of LLM usage}
    \item[] Question: Does the paper describe the usage of LLMs if it is an important, original, or non-standard component of the core methods in this research? Note that if the LLM is used only for writing, editing, or formatting purposes and does not impact the core methodology, scientific rigorousness, or originality of the research, declaration is not required.
    \item[] Answer: \answerNA{} 
    \item[] Justification: No direct link with LLMs.
    \item[] Guidelines:
    \begin{itemize}
        \item The answer NA means that the core method development in this research does not involve LLMs as any important, original, or non-standard components.
        \item Please refer to our LLM policy (\url{https://neurips.cc/Conferences/2025/LLM}) for what should or should not be described.
    \end{itemize}

\end{enumerate}

\newpage
\appendix
\section{Useful Facts}
\label{sec:random}





\begin{lemma}
\label{lem:removing-one-element-submodularity}
Let $f$ be a non-negative submodular function. Let $S, A \subseteq N$ be two sets.
Let $x$ be a random element from $A$.
Then $\E_{x \sim A} [f (S) - f(S - x)] \leq \frac{1}{|A|} f(S)$.
\end{lemma}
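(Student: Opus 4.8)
The plan is to reduce the claim to a purely combinatorial submodularity inequality and then prove that inequality by a telescoping argument over the relevant elements only. First I would observe that $f(S) - f(S-x) = 0$ whenever $x \notin S$, so $\E_{x \sim A}[f(S) - f(S-x)] = \frac{1}{|A|}\sum_{x \in A \cap S}\big(f(S) - f(S-x)\big)$, and it therefore suffices to show $\sum_{x \in A \cap S}\big(f(S) - f(S-x)\big) \le f(S)$.

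To prove this, I would enumerate $A \cap S = \{a_1, \dots, a_\ell\}$ and peel these elements off $S$ one at a time, forming the chain $S = T_0 \supseteq T_1 \supseteq \dots \supseteq T_\ell = S \setminus A$ with $T_j = S \setminus \{a_1, \dots, a_j\}$. Telescoping gives $f(S) - f(S \setminus A) = \sum_{j=1}^\ell \big(f(T_{j-1}) - f(T_j)\big) = \sum_{j=1}^\ell f_{T_j}(a_j)$, using $T_{j-1} = T_j + a_j$. Since $T_j \subseteq S - a_j$ and $a_j$ lies in neither set, submodularity yields $f_{T_j}(a_j) \ge f_{S - a_j}(a_j) = f(S) - f(S - a_j)$. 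Summing over $j$ gives $f(S) - f(S \setminus A) \ge \sum_{x \in A \cap S}\big(f(S) - f(S-x)\big)$, and since $f$ is non-negative we have $f(S \setminus A) \ge 0$, hence $\sum_{x \in A \cap S}\big(f(S) - f(S-x)\big) \le f(S)$. Dividing by $|A|$ finishes the proof.

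The one place to be careful — and what I expect to be the only real obstacle — is the choice of which chain to telescope along. The naive move of bounding $\sum_{x \in A \cap S}(f(S) - f(S-x))$ by $\sum_{x \in S}(f(S) - f(S-x))$ is \emph{invalid} for non-monotone $f$, because the discarded terms $f(S) - f(S-x)$ for $x \in S \setminus A$ can be negative. Removing exactly the elements of $A \cap S$ and stopping the chain at $S \setminus A$ (whose value is non-negative, which is precisely where non-negativity of $f$ enters) avoids this pitfall and keeps every inequality in the right direction.
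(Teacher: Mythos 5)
Your proof is correct and follows essentially the same route as the paper's: both reduce the expectation to the sum over $x \in A \cap S$, telescope along the chain obtained by removing the elements of $A \cap S$ from $S$ one at a time, bound each marginal $f(S) - f(S-x)$ by the corresponding telescoping term via submodularity, and finish with non-negativity of $f(S \setminus A)$. Your remark about why one cannot simply extend the sum to all of $S$ for non-monotone $f$ is a valid observation but does not change the argument.
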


\begin{proof}
Denote $S\cap A = \{x_1, \ldots, x_{|S\cap A|}\}$.  We have
\begin{align*}
    & \E_{x\sim A}\Big[ f(S) - f(S\setminus\{x\}) \Big] \\
    & ~ = ~ \Pr[x \in S\cap A] \cdot \E_{x\sim S\cap A}\Big[ f(S) - f(S\setminus\{x\}) \Big] \\
    & ~ = ~ \frac{|S\cap A|}{|A|} \cdot \frac{1}{|S\cap A|}\sum_{i=1}^{|S\cap A|} \Big[ f(S) - f(S\setminus\{x_i\}) \Big] \\
    & ~ \le ~ \frac{|S\cap A|}{|A|} \cdot \frac{1}{|S\cap A|}\sum_{i=1}^{|S\cap A|} \Big[ f(S\setminus\{x_1, \ldots, x_{i-1}\}) - f(S\setminus\{x_1, \ldots, x_{i-1}, x_i\}) \Big] \\
    & ~ = ~ \frac{1}{|A|} \big( f(S) - f(S\setminus A) \big) \\
    & ~ \le ~ \frac{1}{|A|} f(S), 
\end{align*}
as desired.
\end{proof}

\begin{lemma}
\label{lem:removing-elements-submodularity}
For any two sets $S, A \subseteq \N$, for integer $k\ge 1$, sampling a subset $B$ of size $k$ from $A$ uniformly at random, we have: 
\begin{equation*}
    \E_{B\sim A[k]}\big[ f(S\setminus B) \big] ~ \ge ~ f(S) - \frac{k}{|A| - k} \cdot \max_{S' \subseteq S\cap A, ~ |S'| \ge |S\cap A| - k}f(S'). 
\end{equation*}
\end{lemma}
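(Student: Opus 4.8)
The plan is to reduce the statement to the special case $S\subseteq A$ and then obtain that case by iterating the single-element bound, Lemma~\ref{lem:removing-one-element-submodularity}. Throughout I assume $|A|>k$, since otherwise the claimed bound is vacuous.

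\emph{Step 1 (peeling off $S\setminus A$).} I would first establish the pointwise inequality
\[
 f(S) - f(S\setminus B) \;\le\; f(S\cap A) - f\big((S\cap A)\setminus B\big) \qquad \text{for every } B\subseteq A .
\]
Fix $B$ and enumerate $B\cap S = \{z_1,\ldots,z_m\}$; note $B\cap S = B\cap(S\cap A)$ since $B\subseteq A$. Telescoping, $f(S) - f(S\setminus B) = \sum_{i=1}^m \big( f(S\setminus\{z_1,\ldots,z_{i-1}\}) - f(S\setminus\{z_1,\ldots,z_i\}) \big)$, and the $i$-th summand equals the marginal of $z_i$ with respect to $S\setminus\{z_1,\ldots,z_i\}$. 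That set contains $S\setminus A$, whereas $(S\cap A)\setminus\{z_1,\ldots,z_i\}$ does not, so submodularity bounds the $i$-th summand by the marginal of $z_i$ with respect to $(S\cap A)\setminus\{z_1,\ldots,z_i\}$; summing and telescoping back gives $f(S\cap A) - f((S\cap A)\setminus B)$. Taking the expectation over $B\sim A[k]$ reduces the lemma to bounding $\E_{B\sim A[k]}[\,f(S\cap A) - f((S\cap A)\setminus B)\,]$, i.e.\ the case where the shrinking set lies inside $A$.

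\emph{Step 2 (the case $S\subseteq A$).} I would realize $B\sim A[k]$ as the first $k$ entries of a uniformly random permutation $x_1,\ldots,x_{|A|}$ of $A$ and set $D_j := (S\cap A)\setminus\{x_1,\ldots,x_j\}$, so $D_k = (S\cap A)\setminus B$. Telescoping $f(S\cap A) - \E[f(D_k)] = \sum_{j=1}^k (\E[f(D_{j-1})] - \E[f(D_j)])$ and applying Lemma~\ref{lem:removing-one-element-submodularity} conditionally on $x_1,\ldots,x_{j-1}$ — with set $D_{j-1}$ and ground set $A\setminus\{x_1,\ldots,x_{j-1}\}$ of size $|A|-j+1$, in which $x_j$ is uniform and $D_j = D_{j-1}-x_j$ — gives $\E[f(D_{j-1})] - \E[f(D_j)] \le \frac{1}{|A|-j+1}\E[f(D_{j-1})]$. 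For $j\le k$ the set $D_{j-1}\subseteq S\cap A$ has size at least $|S\cap A|-k$, hence $f(D_{j-1}) \le M := \max_{S'\subseteq S\cap A,\, |S'|\ge|S\cap A|-k} f(S')$. Summing, $f(S\cap A) - \E[f(D_k)] \le M\sum_{j=1}^k \frac{1}{|A|-j+1} \le \frac{kM}{|A|-k}$; combining with Step 1 and rearranging yields the lemma.

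\emph{Where the difficulty lies.} The delicate step is Step 1. Iterating Lemma~\ref{lem:removing-one-element-submodularity} directly on $S$ would keep the block $S\setminus A$ inside every intermediate set, so the resulting bound would feature $\max_{S'} f((S\setminus A)\cup S')$ rather than $\max_{S'} f(S')$; these can differ by up to $f(S\setminus A)$ and cannot be equated in general. The submodular marginal comparison in Step 1 is exactly what discards the $S\setminus A$ block, after which everything is routine telescoping and the harmonic sum is bounded crudely by $k/(|A|-k)$.
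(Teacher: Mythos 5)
Your proof is correct, and its core (Step 2) is the same engine the paper uses: remove the elements of $B$ one at a time and apply Lemma~\ref{lem:removing-one-element-submodularity} conditionally with the shrinking ground set $A\setminus\{x_1,\dots,x_{j-1}\}$ of size $|A|-j+1$; your bound $\E[f(D_{j-1})-f(D_j)]\le \frac{M}{|A|-j+1}$ summed via $\sum_{j=1}^k\frac{1}{|A|-j+1}\le\frac{k}{|A|-k}$ is just the unrolled form of the paper's induction, which instead keeps the running bound in the shape $\frac{i}{|A|-i}\max_{S'\subseteq S\cap A,\,|S'|\ge|S\cap A|-i}f(S')$. The genuine difference is your Step 1. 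The paper's inductive step "adds the two inequalities," which implicitly requires bounding the intermediate value $f(S\setminus B_{i-1})$ by $\max_{S'\subseteq S\cap A,\,|S'|\ge|S\cap A|-i}f(S')$; this is immediate only when $S\subseteq A$ (as in the smoothing lemma, where $S=O^*\subseteq B_1=A$), because for general $S$ the set $S\setminus B_{i-1}$ still contains the block $S\setminus A$ and need not be a subset of $S\cap A$. Your pointwise comparison $f(S)-f(S\setminus B)\le f(S\cap A)-f((S\cap A)\setminus B)$ is exactly the missing submodularity step that discards this block and delivers the lemma as stated for arbitrary $S,A$ — which is the form actually invoked later (e.g., in Lemma~\ref{lem:F-S-F-S-H}, where the set being shrunk, $S\cup H'$, is not contained in the sampling ground set $N\setminus H'$). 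So your route is the same iterative argument, but with an explicit reduction that makes the general statement airtight where the paper's write-up glosses over it; the restriction to $|A|>k$ you impose is harmless since otherwise the bound is vacuous.
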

\begin{proof}
    The proof is by induction.
    Let $B_{i-1}$ be $i-1$ random elements from $A$ without duplicates.
    Let $b_i \in A \setminus B_{i-1}$ and $B_i = B_{i-1} \cup \{b_i\}$.
    By the induction hypothesis, we have
    \begin{equation*}
        \E_{B_{i-1} \sim A[i-1]}\big[ f(S\setminus B_{i-1}) \big] ~ \ge ~ f(S) - \frac{i-1}{|A| - i+1} \cdot \max_{S' \subseteq S\cap A, ~ |S'| \ge |S\cap A| - i + 1}f(S').
    \end{equation*}
    We now condition on $B_{i-1}$.
    By Lemma~\ref{lem:removing-one-element-submodularity}, we have
    \[
        \E_{b_i}[f(S \setminus B_{i})] - f(S\setminus B_{i-1}) ~ \geq ~ - \frac{1}{|A| - i+1} f(S \setminus B_{i-1}).
    \]
    Adding the last two inequalities gives $\E[f(S\setminus B_i)] \ge f(S) - \frac{i}{|A|-i} \max_{S'\subseteq S\cap A, |S'|\ge |S\cap A|-i} f(S')$. 
    The lemma follows by induction. 
\end{proof}

\begin{lemma}
\label{lem:adding-elements-submodularity}
For any two sets $S, A \subseteq N$, for integer $k\ge 1$, sampling a subset $B$ of size $k$ from $A$ uniformly at random, we have: 
\begin{equation*}
    \E_{B\sim A[k]}\big[ f(S\cup B) \big] ~ \ge ~ f(S) - \frac{k}{|A|-k} \max_{S': S \subseteq S' \subseteq S\cup A} f(S').
\end{equation*}
\end{lemma}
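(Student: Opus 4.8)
The plan is to mirror the proof of Lemma~\ref{lem:removing-elements-submodularity}, replacing deletions by additions. The engine will be an ``adding one element'' analog of Lemma~\ref{lem:removing-one-element-submodularity}: for any $S, A \subseteq N$ with $A \neq \emptyset$,
\[
\E_{x\sim A}\big[ f(S) - f(S\cup\{x\}) \big] \;\le\; \tfrac{1}{|A|} f(S),
\]
equivalently $\E_{x\sim A}[f(S\cup\{x\})] \ge \big(1 - \tfrac{1}{|A|}\big) f(S)$. To prove this, write $A\setminus S = \{x_1,\dots,x_m\}$ and note that the terms with $x \in S\cap A$ vanish, so the left side equals $\tfrac{1}{|A|}\sum_{i=1}^m \big(f(S) - f(S\cup\{x_i\})\big)$. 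By submodularity, $f(S\cup\{x_i\}) - f(S) = f_S(x_i) \ge f_{S\cup\{x_1,\dots,x_{i-1}\}}(x_i)$, so each summand is at most $f(S\cup\{x_1,\dots,x_{i-1}\}) - f(S\cup\{x_1,\dots,x_i\})$; the sum telescopes to $f(S) - f(S\cup A) \le f(S)$, using $f \ge 0$.

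With this in hand, I would prove Lemma~\ref{lem:adding-elements-submodularity} by induction on $k$, exactly as in Lemma~\ref{lem:removing-elements-submodularity}. Let $M := \max_{S': S\subseteq S'\subseteq S\cup A} f(S')$. For the inductive step, let $B_{i-1}$ be $i-1$ distinct random elements of $A$, condition on $B_{i-1}$, and apply the one-element lemma above to the set $S\cup B_{i-1}$ together with the ground set $A\setminus B_{i-1}$ (of size $|A|-i+1$), giving
\[
\E_{b_i}\big[ f(S\cup B_{i-1}\cup\{b_i\}) \big] - f(S\cup B_{i-1}) \;\ge\; -\tfrac{1}{|A|-i+1} f(S\cup B_{i-1}) \;\ge\; -\tfrac{1}{|A|-i+1} M,
\]
where the last step uses $S \subseteq S\cup B_{i-1} \subseteq S\cup A$. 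Adding this to the induction hypothesis $\E[f(S\cup B_{i-1})] \ge f(S) - \tfrac{i-1}{|A|-i+1} M$ and taking expectation over $B_{i-1}$ yields $\E[f(S\cup B_i)] \ge f(S) - \tfrac{i}{|A|-i+1} M \ge f(S) - \tfrac{i}{|A|-i} M$, closing the induction. The base case $k=1$ is immediate from the one-element lemma together with $\tfrac{1}{|A|} \le \tfrac{1}{|A|-1}$.

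I do not expect a genuine obstacle here: the only points requiring care are the direction of the submodularity inequality in the telescoping step (marginals shrink as the set grows, so $-f_S(x_i)$ is the smallest term of the chain, hence an upper bound on each summand), the arithmetic identity $\tfrac{i-1}{|A|-i+1} + \tfrac{1}{|A|-i+1} = \tfrac{i}{|A|-i+1} \le \tfrac{i}{|A|-i}$, and the fact that the bound is stated with $|A|$ rather than $|A\setminus S|$, which is harmless since including elements of $S\cap A$ in $B$ contributes only zero-marginal terms and merely weakens the bound. Also note that, unlike the deletion case, the bounding set $S\cup B_{i-1}$ always lies between $S$ and $S\cup A$, so a single $M$ suffices for all $i$ and no size restriction on $S'$ is needed. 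Everything else is a line-by-line transcription of the proof of Lemma~\ref{lem:removing-elements-submodularity}.
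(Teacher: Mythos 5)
Your proof is correct, but it takes a genuinely different route from the paper. The paper disposes of Lemma~\ref{lem:adding-elements-submodularity} in one line: it applies Lemma~\ref{lem:removing-elements-submodularity} to the complemented function $g(T) = f(N \setminus T)$ (again non-negative submodular), so that adding elements of $A$ to $S$ becomes deleting elements from the complement, and no new induction is needed. You instead rebuild the argument from scratch: you prove an addition analogue of Lemma~\ref{lem:removing-one-element-submodularity}, namely $\E_{x\sim A}\big[f(S)-f(S\cup\{x\})\big] \le f(S)/|A|$ via the same ordering-and-telescoping trick, and then rerun the one-element-at-a-time induction, bounding $f(S\cup B_{i-1})$ by $M=\max_{S\subseteq S'\subseteq S\cup A} f(S')$ at each step; I checked the telescoping direction, the conditioning, and the arithmetic $\tfrac{i-1}{|A|-i+1}+\tfrac{1}{|A|-i+1}=\tfrac{i}{|A|-i+1}\le\tfrac{i}{|A|-i}$, and the induction closes. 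What each approach buys: the paper's complementation is maximally short and reuses the deletion lemma as a black box, but it requires a bit of care about which ground set one complements in — complementing inside all of $N$ turns the max in Lemma~\ref{lem:removing-elements-submodularity} into a max over supersets of $S\cup(N\setminus A)$, and to land exactly on the stated family one should complement inside $S\cup A$ (i.e.\ take $g(T)=f((S\cup A)\setminus T)$). Your direct argument is longer but self-contained and produces the family $\{S': S\subseteq S'\subseteq S\cup A\}$ immediately; as you note, since $|S\cup B_{i-1}|\le |S|+k-1$ it even yields the size-restricted variant that is invoked later in Lemma~\ref{lem:F-S-f-S-H} at no extra cost.
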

\begin{proof}
    Follows from Lemma~\ref{lem:removing-elements-submodularity} by applying it to the submodular function $g(S) = f(N \setminus S)$.
\end{proof}

\section{Missing Proofs from Section~\ref{sec:unified-approach}}

\subsection{Proof of Lemma~\ref{lem:sub-exponential-concentration-tilde-F-2}}
\label{app:lem:sub-exponential-concentration-tilde-F-2}

\begin{lemma}[Hoeffding's inequality for sub-exponential distributions: see, e.g., page 29 in \cite{wainwright_high-dimensional_2019}]
\label{lem:concentration-lemma-sub-exponential}
Let $X_1, \ldots, X_m$ be independent random variables where each $X_i$ is $0$-mean and $(\nu_i, \alpha_i)$-sub-exponential.
Let $\alpha_* = \max_{i=1}^m \alpha_i$ and $\nu_* = \sqrt{\sum_{i=1}^m \nu_i^2}$. 
Then, 
\begin{equation*}
    \Pr\Big[~ \big| \frac{1}{m} \sum_{i=1}^m X_i \big| \ge \eps \Big] ~ \le ~ \begin{cases}
        2 \exp\big(-\frac{m \eps^2}{2 \nu_*^2 / m} \big) & \text{ for $0\le \eps \le \frac{\nu_*^2}{m \alpha_*}$}, \\
        2 \exp\big(-\frac{m \eps}{2 \alpha_*} \big) & \text{ for $\eps > \frac{\nu_*^2}{m\alpha_*}$}. 
    \end{cases}
\end{equation*}
\end{lemma}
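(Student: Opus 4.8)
The plan is to run the standard Chernoff (exponential moment) argument, combined with a union bound over the two tails. Write $S = \sum_{i=1}^m X_i$, so the event in question is $\{|S| \ge m\eps\}$, and split $\Pr[|S| \ge m\eps] \le \Pr[S \ge m\eps] + \Pr[S \le -m\eps]$. Since the sub-exponential condition $\E[e^{\lambda(\xi-\E\xi)}] \le e^{\nu^2\lambda^2/2}$ is symmetric in $\lambda$, each $-X_i$ is again $0$-mean and $(\nu_i,\alpha_i)$-sub-exponential, so $\Pr[S \le -m\eps] = \Pr[\sum_i(-X_i) \ge m\eps]$ obeys the same bound as the upper tail. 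Thus it suffices to bound $\Pr[S \ge m\eps]$ and then double.

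First I would apply Markov's inequality to $e^{\lambda S}$ for $0 \le \lambda \le 1/\alpha_*$: using independence and the sub-exponential bound for each $X_i$ (valid since $\lambda \le 1/\alpha_* \le 1/\alpha_i$),
\[
   \Pr[S \ge m\eps] \le e^{-\lambda m \eps}\,\E[e^{\lambda S}] = e^{-\lambda m \eps} \prod_{i=1}^m \E[e^{\lambda X_i}] \le e^{-\lambda m \eps} \prod_{i=1}^m e^{\nu_i^2 \lambda^2/2} = \exp\!\big(-\lambda m \eps + \tfrac{1}{2}\nu_*^2 \lambda^2\big),
\]
where $\nu_*^2 = \sum_i \nu_i^2$.

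It then remains to optimize the exponent $g(\lambda) = -\lambda m\eps + \tfrac{1}{2} \nu_*^2\lambda^2$ over $\lambda \in [0, 1/\alpha_*]$. Its unconstrained minimizer is $\lambda^\star = m\eps/\nu_*^2$, and $g$ is decreasing on $[0,\lambda^\star]$. If $\eps \le \nu_*^2/(m\alpha_*)$ then $\lambda^\star \le 1/\alpha_*$ is feasible, and $g(\lambda^\star) = -m^2\eps^2/(2\nu_*^2) = -m\eps^2/(2\nu_*^2/m)$, giving the first branch. If $\eps > \nu_*^2/(m\alpha_*)$ then $\lambda^\star > 1/\alpha_*$, so the best feasible choice is $\lambda = 1/\alpha_*$, yielding $g(1/\alpha_*) = -m\eps/\alpha_* + \nu_*^2/(2\alpha_*^2)$; since $\eps \ge \nu_*^2/(m\alpha_*)$ gives $\nu_*^2/(2\alpha_*^2) \le m\eps/(2\alpha_*)$, we get $g(1/\alpha_*) \le -m\eps/(2\alpha_*)$, which is the second branch. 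Doubling for the two tails gives the stated bound. When $\alpha_* = 0$ we read $1/\alpha_* = \infty$, so only the first case occurs and the argument is unchanged.

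There is no real obstacle here — this is a textbook Bernstein-type estimate. The only points requiring care are the case split in the one-dimensional optimization (checking that the constrained optimizer $\lambda = 1/\alpha_*$ produces exactly the claimed linear-in-$\eps$ exponent in the second regime) and the bookkeeping for the degenerate sub-Gaussian case $\alpha_* = 0$.
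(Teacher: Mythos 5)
Your proof is correct and follows exactly the standard Chernoff-bound argument (MGF bound, optimize $\lambda$ over $[0,1/\alpha_*]$ with the two-regime case split, symmetrize for the lower tail), which is the same route as the reference the paper cites; the paper itself does not reprove this lemma but simply invokes it from \citet{wainwright_high-dimensional_2019}. The case split and the handling of $\alpha_*=0$ are both handled correctly, so nothing is missing.
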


\begin{proof}[Proof of Lemma \ref{lem:sub-exponential-concentration-tilde-F-2}]
Fix a set $S \subseteq \N \setminus H$.  Let $\mu = F^{H, t}(S) = \E_{H'\sim H[t]} f(S\cup H') = \frac{1}{\binom{h}{t}} \sum_{H' \in H[t]} f(S\cup H')$. By definition, $H_1, \ldots, H_m$ are sampled from $H[t]$ without replacement.  Since each $f(S\cup H')$ is bounded in $[0, f_{\max}]$, by Hoeffding's inequality for sampling without replacement (see, e.g., Proposition 1.2 in \cite{bardenet_concentration_2015}), we have
\begin{align} \label{eq:sampling-without-replacement-proabiblity}
    \Pr\Big[ \Big| \frac{1}{m} \sum_{i=1}^m f(S\cup H_i)  - \mu \Big| \ge \frac{\eps}{2} \Big] ~ \le ~ 2\exp\Big( - \frac{m \eps^2}{2 f_{\max}^2} \Big) ~ \le ~ \delta'
\end{align}
given $m \ge \frac{2 f_{\max}^2}{\eps^2} \log \frac{2}{\delta'}$.

Assume that $ | \frac{1}{m} \sum_{i=1}^m f(S\cup H_i)  - \mu | \le \frac{\eps}{2}$ holds. 
We then consider the difference
\begin{align*}
    \frac{1}{m} \sum_{i=1}^m \tilde f(S\cup H_i) - \frac{1}{m} \sum_{i=1}^m f(S\cup H_i) = \frac{1}{m} \sum_{H'\subseteq H} X_i
\end{align*}
where
\begin{equation*}
    X_i = \tilde f(S\cup H_i) - f(S\cup H_i) = (\xi_{S\cup H_i} - 1) f(S \cup H_i)
\end{equation*}
is a random variable with mean $\E[X_i] = 0$ (by the unbiased property of noise) and is sub-exponential with parameters 
\begin{equation*}
    ( \nu f_{\max}, \quad \alpha f_{\max}).  
\end{equation*}
Because the sets $H_1, \ldots, H_m$ are sampled without replacement, we have $H_i \ne H_j$ for $i\ne j$, so $S\cup H_i \ne S\cup H_j$.  This means that the noise multipliers $\xi_{S\cup H_i}$ and $\xi_{S\cup H_j}$ are independent, so $X_i$ and $X_j$ are independent. 
Then, we apply Lemma \ref{lem:concentration-lemma-sub-exponential} to $\frac{1}{m} \sum_{i=1}^m X_i$ with $\alpha_* = \alpha f_{\max}$ and $\nu_* = \sqrt{m \nu^2 f^2_{\max}}$ to obtain 
\begin{align}\label{eq:sub-exponential-probability}
    \Pr\Big[ \big| \frac{1}{m} \sum_{i=1}^m X_i \big| > \frac{\eps}{2} \Big] ~ \le ~ 2 \exp\big(-\frac{m \eps^2}{8 \nu_*^2 / m} \big) ~ = ~ 2\exp\Big(- \frac{ m \eps^2}{8 \nu^2 f_{\max}^2 }\Big) ~ \le ~ \delta'
\end{align}
given $m \ge \frac{8 \nu^2 f^2_{\max}}{\eps^2} \log \frac{2}{\delta'}$ and
$0 \le \frac{\eps}{2} \le \frac{\nu_*^2}{m \alpha_*} = \frac{\nu^2 f_{\max}^2}{\alpha f_{\max}} = \frac{\nu^2}{\alpha} f_{\max}$. 

Taking a union bound over \eqref{eq:sampling-without-replacement-proabiblity} and \eqref{eq:sub-exponential-probability} and a union bound over all sets $S\subseteq N \setminus H$, we have with probability at least $1 - 2 \cdot 2^n \delta'$, for all sets $S\subseteq N \setminus H$, we have both
$|\frac{1}{m} \sum_{i=1}^m f(S\cup H_i)  - \mu | \le \frac{\eps}{2}$ and $| \frac{1}{m} \sum_{i=1}^m X_i | \le \frac{\eps}{2}$
hold, which implies
\begin{align*}
    | \hat F^{H, t}(S) - F^{H, t}(S)| ~ = ~\big|  \frac{1}{m} \sum_{i=1}^m \tilde f(S\cup H_i) - \mu \big| ~ \le ~ \eps.  
\end{align*}
Let $\delta = 2\cdot 2^n \delta'$.  The $m$ has to satisfies
\begin{align*}
    m & ~ \ge ~ \max\Big\{\frac{2 f^2_{\max}}{\eps^2} \log \frac{2}{\delta'}, ~ \frac{8 \nu^2 f^2_{\max}}{\eps^2} \log \frac{2}{\delta'}\Big\} \\
    & ~ = ~ \max\{2, 8\nu^2\} \frac{f_{\max}^2}{\eps^2} \log \frac{4\cdot 2^n}{\delta} \\
    & ~ = ~ \max\{2, 8\nu^2\} \frac{f_{\max}^2}{\eps^2} \Big( n \log 2 + \log \frac{4}{\delta} \Big), 
\end{align*}
which is satisfied when $m \ge \max\{2, 8\nu^2\} \frac{f_{\max}^2}{\eps^2} \big( n + \log \frac{4}{\delta} \big)$. 

In order to sample $m$ sets $H_1, \ldots, H_m$ from $H[t]$ without replacement, the $h$ and $t$ have to satisfy
\begin{align*} 
    \binom{h}{t} \ge m. 
\end{align*}
By the inequality $\binom{h}{t} \ge \frac{n^t}{4t!}$ for $t\le \sqrt{h}$ and letting $h = t^2$, we have 
\begin{align*}
    \binom{h}{t} \ge \frac{h^t}{4t!} \ge \frac{h^t}{4t^t} = \frac{t^t}{4} \ge \frac{2^t}{4} \ge m
\end{align*}
when $t \ge \log_2 (4m)$. 
\end{proof}

\subsection{Proof of Lemma \ref{lem:smoothing} for Monotone Submodular Function}
\label{app:smoothing-lemma-monotone}
We follow the proof for the non-monotone case until \eqref{eq:matroid-analysis-step-1}, where we have
\begin{align*}
    \E_H\Big[ \max_{S \in \mathcal I_H} F^{H, t}(S) \Big] 
    ~ = ~ \E_H \E_{H' \sim H[t]} \Big[ f(O^*-\sigma(H)+H') \Big]. 
\end{align*}
Because $f$ is monotone, we immediately have $f(O^*-\sigma(H)+H') \ge f(O^* - \sigma(H))$ and hence
\begin{align*}
    \E_H\Big[ \max_{S \in \mathcal I_H} F^{H, t}(S) \Big] 
    & ~ \ge ~ \E_H \Big[ f(O^*-\sigma(H)) \Big] \\
    \text{by \eqref{eq:matroid-analysis-step-2}} & ~ \ge ~ f(O^*) - \frac{h}{r - h} f(O^*).
\end{align*}

\section{Robustness of Measured Continuous Greedy: Proof of Lemma \ref{lemma:unified-cts-greedy-robust}}
\label{sec:cts-greedy-robust}

\begin{algorithm}[H]
\caption{Measured continuous greedy \citep{feldman_unified_2011} with approximate value oracle}
\label{alg:continuous-greedy}
\SetKwInOut{Input}{Input}
\Input{Approximate value oracle $\hat f$ to a submodular function on ground set $\N$. Matroid $\cI$.}
\DontPrintSemicolon
\LinesNumbered
Let $n = |\N|$, $\delta = n^{-4}$. \;
Initialize $t = 0$, $x(0) = \bm 0 \in [0, 1]^n$. \;
\While{$t < 1$}{
Let $R(t)$ be the random set that contains each element $i \in \N$ independently with probability $x_i(t)$. \;
For each $i \in \N$, let $\hat \omega_i(t)$ be an estimate of the expected marginal value $\E[\hat f_{R(t)}(i)]$, obtained by taking the average of $\frac{10}{\delta^2}\log(2n)$ samples of $\hat f_{R(t)}(i)$. \; 
Let $\hat I(t) = \argmax_{I \in \mathcal I} \sum_{i \in I} \hat \omega_i(t)$ be a maximum-weight independent set. \;
Let $x(t+\delta)$ be the following: for every $i \in \N$, $x_i(t+\delta) \gets x_i(t) + \delta (1 - x_i(t)) \hat I_i(t)$.\; 
$t \gets t + \delta$. \;
}
Use pipage rounding \citep{calinescu_maximizing_2011} (which does not require access to $\hat f$) to convert the fractional solution $x(1)$ to a discrete set $S\in\mathcal{I}$. \;
\Return $S$ 
\end{algorithm}

In this section, we establish the robustness of the measured continuous greedy algorithm \citep{feldman_unified_2011} (the full algorithm is given in Algorithm \ref{alg:continuous-greedy}).
Let $\hat f$ be an $\eps$-approximate value oracle for $f$. 
The proof for the approximation ratios of the measured continuous greedy algorithm with exact value oracle is provided by \citet{FeldmanThesis2010}.
We analyze how the $\eps$-approximate value oracle $\hat f$ will affect the approximation ratios. 
The main step is Lemma~\ref{lemma:measured_cont_greedy_improvement} which provides a replacement of Corollary~3.2.7 of \cite{FeldmanThesis2010}.
Once this is established, we can apply the remaining arguments in Section~3.2.1 and Section~3.2.2 of \cite{FeldmanThesis2010} by replacing the discretization error of $O(n^3 \delta^2)$ with the discretization and approximation error established in Lemma~\ref{lemma:measured_cont_greedy_improvement}.

Here, we use $F$ to denote the multilinear extension of the submodular $f$.
In other words,
\[
F(x) = \sum_{S \subseteq [n]} f(S) \prod_{i \in S} x_i \prod_{i \notin S} (1-x_i), \quad \forall x \in [0, 1]^n.
\]
We use $\hat{F}$ to denote the multilinear extension with $f$ replaced with $\hat{f}$.
\[
\hat{F}(x) = \sum_{S \subseteq [n]} \hat{f}(S) \prod_{i \in S} x_i \prod_{i \notin S} (1-x_i), \quad \forall x \in [0, 1]^n.
\]

We require two well-known properties of the multilinear extension.
\begin{claim}
    \label{claim:submod_derivatives}
    The partial derivatives of the multilinear extension $F$ satisfy: 
    \begin{itemize}
        \item $\partial_i F(x) = F(x \vee 1_i) - F(x \wedge 1_{\bar{i}})$.
        \item $\partial_i \partial_j F(x) = F(x \vee 1_i \vee 1_j) - F(x \vee 1_i \wedge 1_{\bar{j}}) - F(x \wedge 1_{\bar{i}} \vee 1_j) + F(x \wedge 1_{\bar{i}} \wedge 1_{\bar{j}})$.
    \end{itemize}
\end{claim}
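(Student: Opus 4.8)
\textbf{Proof proposal for Claim~\ref{claim:submod_derivatives}.}
The plan is to exploit the one feature that makes the multilinear extension "multilinear": for each coordinate $i$, with all other coordinates held fixed, $F$ is an affine function of $x_i$. Concretely, I would group the terms of $F(x) = \sum_{S \subseteq [n]} f(S) \prod_{k \in S} x_k \prod_{k \notin S}(1-x_k)$ according to whether $i \in S$, which exhibits $F(x) = x_i\, G(x) + (1-x_i)\, H(x)$ with $G$ and $H$ not depending on $x_i$. Evaluating at $x_i = 1$ shows $G(x) = F(x \vee 1_i)$, and at $x_i = 0$ shows $H(x) = F(x \wedge 1_{\bar{i}})$. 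Differentiating in $x_i$ then immediately gives $\partial_i F(x) = G(x) - H(x) = F(x \vee 1_i) - F(x \wedge 1_{\bar{i}})$, which is the first identity. (Equivalently, one differentiates the explicit polynomial directly and reindexes the sum over $S$ into $S \ni i$ versus $S \not\ni i$; I would include that computation only if a fully self-contained derivation is wanted.)

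For the second identity I would apply the first one twice. Fix $i \neq j$; the degenerate case $i = j$ is consistent since $\partial_i\partial_i F \equiv 0$ by linearity in $x_i$ and the right-hand side also collapses to $0$. The key observation is that $x \mapsto F(x \vee 1_i)$ is itself a multilinear extension — after reindexing $S = T \cup \{i\}$ it is the multilinear extension of $T \mapsto f(T \cup \{i\})$ on the ground set $[n] \setminus \{i\}$ — and in particular it does not involve $x_i$. Hence the first identity, applied in coordinate $j$, yields $\partial_j F(x \vee 1_i) = F(x \vee 1_i \vee 1_j) - F(x \vee 1_i \wedge 1_{\bar{j}})$, and similarly $\partial_j F(x \wedge 1_{\bar{i}}) = F(x \wedge 1_{\bar{i}} \vee 1_j) - F(x \wedge 1_{\bar{i}} \wedge 1_{\bar{j}})$. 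Applying $\partial_j$ to $\partial_i F(x) = F(x \vee 1_i) - F(x \wedge 1_{\bar{i}})$ and substituting these two expressions produces exactly $\partial_i\partial_j F(x) = F(x \vee 1_i \vee 1_j) - F(x \vee 1_i \wedge 1_{\bar{j}}) - F(x \wedge 1_{\bar{i}} \vee 1_j) + F(x \wedge 1_{\bar{i}} \wedge 1_{\bar{j}})$.

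There is no real obstacle here: this is a standard fact about multilinear extensions, and the only points requiring a little care are the bookkeeping of the factors $\prod_{k \in S} x_k \prod_{k \notin S}(1-x_k)$ under differentiation, and verifying that $x \vee 1_i$ and $x \wedge 1_{\bar{i}}$ really do correspond to substituting $x_i = 1$ and $x_i = 0$ so that the two-step differentiation is legitimate. I therefore expect the proof to be short.
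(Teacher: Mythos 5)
Your argument is correct. The paper states Claim~\ref{claim:submod_derivatives} without proof, treating it as a well-known property of the multilinear extension, so there is no in-paper argument to compare against; your derivation --- writing $F(x) = x_i\,G(x) + (1-x_i)\,H(x)$ with $G,H$ independent of $x_i$, identifying $G = F(x \vee 1_i)$ and $H = F(x \wedge 1_{\bar{i}})$ by evaluation, and then iterating the first identity in coordinate $j$ to get the second --- is exactly the standard proof and correctly fills the omitted details (including the sanity check that the $i=j$ case degenerates to $0$ on both sides).
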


At a high level, the measured continuous greedy algorithm works as follows. 
Let $\delta \in (0,1)$ be defined such that $1/\delta$ is an integer. Let $\mathcal P$ be the matroid polytope. 
Given a point $x_i(t)$ at time $t$, we use samples to estimate $\partial_i \hat F(x(t))$ and then solve
\[
y^*(t) \in \argmax_{\mathcal P} \Big\{ \sum_{i=1}^n \partial_i \hat{F}(x(t)) y_i \Big\}. 
\]
Then we update $x_i(t+\delta) = x_i(t) + \delta (1-x_i(t)) y_i^*(t)$.

Note that $x(1)$ is feasible since the update at time $t$ is bounded by $\delta y^*(t)$ which is feasible and thus $\delta (y^*(0) + y^*(1/\delta) + \ldots + y^*(1 - 1/\delta))$ is also feasible.
Since the matroid polytope is downward-closed, we conclude that $x(1)$ is feasible.

We use $1_i$ to denote the vector whose $i$th coordinate is $1$ and $0$ otherwise and $1_{\bar{i}} = 1 - 1_{i}$.
Let $\OPT = \argmax_{S\in \cI} f(S)$ be an optimal solution. We let $\alpha = \frac{\eps}{f(\OPT)}$. 
\begin{lemma}
\label{lem:derivative-approximate}
    Suppose $|f(S) - \hat{f}(S)| \leq \eps = \alpha f(\OPT)$.
    Then, for every $i$, we have $|\partial_i F_i(x) - \partial_i \hat{F}_i(x)| \leq 2\alpha f(\OPT) = 2\eps$.  
\end{lemma}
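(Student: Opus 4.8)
The plan is to reduce everything to the linearity of the multilinear extension in the underlying set function, together with the probabilistic (expectation) representation of the extension. First I would observe that the map sending a set function to its multilinear extension is linear, so $F - \hat{F}$ is exactly the multilinear extension of the set function $g := f - \hat{f}$. By hypothesis $|g(S)| \le \eps$ for every $S \subseteq [n]$. Writing $G$ for the multilinear extension of $g$, we have the representation $G(x) = \E_{R \sim x}[g(R)]$, where $R$ contains each $i \in [n]$ independently with probability $x_i$; since $G(x)$ is thus a convex combination of the values $g(S)$, we get $|G(x)| \le \eps$ for every $x \in [0,1]^n$.

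Next I would use the fact that every multilinear extension is affine in each coordinate, which gives the identity $\partial_i G(x) = G(x \vee 1_i) - G(x \wedge 1_{\bar{i}})$. This is precisely the first bullet of Claim~\ref{claim:submod_derivatives}, and I would emphasize that it holds for an arbitrary set function, since its proof only uses multilinearity and not submodularity of $g$ (indeed $g$ need not be submodular). Combining this with the previous step, $\partial_i F(x) - \partial_i \hat{F}(x) = \partial_i G(x) = G(x \vee 1_i) - G(x \wedge 1_{\bar{i}})$, so $|\partial_i F(x) - \partial_i \hat{F}(x)| \le |G(x \vee 1_i)| + |G(x \wedge 1_{\bar{i}})| \le 2\eps = 2\alpha f(\OPT)$, which is the claimed bound.

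There is essentially no hard step here; the proof is a two-line computation once the right viewpoint is set up. The only points that need care are (i) invoking the multilinearity identity for $G$ rather than treating $G$ as a submodular function (it is not, in general), and (ii) noting that the uniform bound $|G(x)| \le \eps$ is immediate from the expectation form of the multilinear extension. I would state both of these facts explicitly before the final estimate.
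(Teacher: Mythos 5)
Your proof is correct and follows essentially the same route as the paper, which simply invokes the derivative identity of Claim~\ref{claim:submod_derivatives} and the fact that the multilinear extension is a convex combination of set-function values; working with the difference function $G = F - \hat F$ is just a clean way of packaging that same two-line estimate.
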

\begin{proof}
    Directly follows from Claim \ref{claim:submod_derivatives}.
\end{proof}

\begin{lemma}
    $\partial_i F(x) = \frac{F(x \vee 1_i) - F(x)}{1-x_i}$.
\end{lemma}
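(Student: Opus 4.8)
The plan is to exploit the fact that $F$ is multilinear, hence affine in the single coordinate $x_i$ once all other coordinates are fixed. First I would group the defining sum $F(x) = \sum_{S\subseteq[n]} f(S)\prod_{j\in S}x_j\prod_{j\notin S}(1-x_j)$ according to whether $i\in S$, which yields the decomposition
\[
F(x) \;=\; x_i\, F(x\vee 1_i) \;+\; (1-x_i)\, F(x\wedge 1_{\bar i}),
\]
where $x\vee 1_i$ is the point obtained from $x$ by setting its $i$-th coordinate to $1$ and $x\wedge 1_{\bar i}$ is the point obtained by setting it to $0$ (both decompositions having all other coordinates unchanged). This is exactly the statement that $F$ restricted to the segment in the $i$-th direction through $x$ is the affine interpolation between its values at the two endpoints.

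Next I would invoke the first bullet of Claim~\ref{claim:submod_derivatives}, namely $\partial_i F(x) = F(x\vee 1_i) - F(x\wedge 1_{\bar i})$ (which is itself the slope of that affine restriction). Combining the two facts:
\[
F(x\vee 1_i) - F(x) \;=\; F(x\vee 1_i) - x_i F(x\vee 1_i) - (1-x_i) F(x\wedge 1_{\bar i}) \;=\; (1-x_i)\big(F(x\vee 1_i) - F(x\wedge 1_{\bar i})\big) \;=\; (1-x_i)\,\partial_i F(x).
\]
Dividing through by $1-x_i$ gives the claimed identity.

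There is essentially no obstacle here: the statement is a one-line consequence of multilinearity plus Claim~\ref{claim:submod_derivatives}. The only point worth a remark is the degenerate case $x_i = 1$, where the right-hand side is a $0/0$ expression; in that case the identity is to be read through the affine decomposition above (or as the limit as $x_i\to 1$), and $\partial_i F(x)$ is in any event still well-defined and equal to $F(x\vee 1_i)-F(x\wedge 1_{\bar i})$.
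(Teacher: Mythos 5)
Your proof is correct and is essentially identical to the paper's: both use the decomposition $F(x) = x_i F(x \vee 1_i) + (1-x_i) F(x \wedge 1_{\bar i})$ from multilinearity together with the first item of Claim~\ref{claim:submod_derivatives}, and rearrange. Your extra remark about the degenerate case $x_i = 1$ is a reasonable clarification but not a substantive difference.
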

\begin{proof}
    This is a simple calculation.
    Indeed,
    \begin{align*}
        \partial_i F(x) = F(x \vee 1_i) - F(x \wedge 1_{\bar{i}})
        & = \frac{F(x \vee 1_i) - x_i F(x \vee 1_i) - (1-x_i) F(x \wedge 1_{\bar{i}})}{1-x_i} \\
        & = \frac{F(x \vee 1_i) - F(x)}{1-x_i},
    \end{align*}
    as desired.
\end{proof}

We now establish the main lemma of this section.
\begin{lemma}
    \label{lemma:measured_cont_greedy_improvement}
    Suppose that $|\hat{f}(S) - f(S)| \leq \alpha \cdot f(\OPT)$ for all $S$.
    Then $F(x(t+\delta)) - F(x(t)) \geq \delta \left( F(x(t) \vee 1_{\OPT}) - F(x(t)) \right) - (4 \delta \alpha n + n^3 \delta^2) f(\OPT)$.
\end{lemma}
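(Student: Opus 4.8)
The plan is to carry over the per-step improvement analysis of measured continuous greedy with an exact oracle (Corollary~3.2.7 of \citet{FeldmanThesis2010}) and to isolate the two points at which the $\eps$-approximate oracle enters, each controlled by Lemma~\ref{lem:derivative-approximate}. Write $x = x(t)$ and $x^+ = x(t+\delta)$. Recall the update is $x^+_i = x_i + \delta(1-x_i)\hat I_i(t)$, where $\hat I(t)\in\mathcal I$ maximizes $\sum_{i\in I}\hat\omega_i(t)$ and $\hat\omega_i(t)$ is the sample average estimating $\E[\hat f_{R(t)}(i)] = (1-x_i)\partial_i\hat F(x)$; thus $x^+ - x = \delta\,(1-x)\odot 1_{\hat I(t)}$ has every coordinate in $[0,\delta]$.

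First I would Taylor-expand the multilinear polynomial $F$ along the segment from $x$ to $x^+$. Since $F$ is multilinear, $\partial_i\partial_i F \equiv 0$, and by Claim~\ref{claim:submod_derivatives}, submodularity, and $0\le f\le f_{\max}$ the cross derivatives satisfy $|\partial_i\partial_j F| = O(f(\OPT))$, so
\[
F(x^+) - F(x) \;\ge\; \big\langle\nabla F(x),\, x^+ - x\big\rangle \;-\; n^3\delta^2 f(\OPT)
\;=\; \delta\sum_i (1-x_i)\partial_i F(x)\,\hat I_i(t) \;-\; n^3\delta^2 f(\OPT),
\]
the $n^3\delta^2 f(\OPT)$ term absorbing both the quadratic remainder and the error from estimating the weights $\hat\omega_i(t)$ via $\tfrac{10}{\delta^2}\log(2n)$ samples (Hoeffding), exactly as in \citet{FeldmanThesis2010} and using $\delta = n^{-4}$.

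Next, in the first-order term I would substitute $\partial_i\hat F(x)$ for $\partial_i F(x)$; by Lemma~\ref{lem:derivative-approximate} this costs at most $2\eps(1-x_i) \le 2\eps$ per coordinate, so at most $2\eps n$ since $|\hat I(t)| \le n$. Because $\hat I(t)$ maximizes $\sum_{i\in I}\hat\omega_i(t) \approx \sum_{i\in I}(1-x_i)\partial_i\hat F(x)$ over $I\in\mathcal I$ and $\OPT \in\mathcal I$, we get $\sum_{i\in\hat I(t)}(1-x_i)\partial_i\hat F(x) \ge \sum_{i\in\OPT}(1-x_i)\partial_i\hat F(x)$, and substituting $\partial_i F$ back for $\partial_i\hat F$ on the right costs another $2\eps n$. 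Combined with the submodularity bound $F(x\vee 1_{\OPT}) - F(x) \le \sum_{i\in\OPT}\big(F(x\vee 1_i) - F(x)\big) = \sum_{i\in\OPT}(1-x_i)\partial_i F(x)$ (telescoping over $\OPT$, using that marginals only shrink as coordinates grow, together with $\partial_i F(x) = \tfrac{F(x\vee 1_i)-F(x)}{1-x_i}$), this gives
\[
\delta\sum_i(1-x_i)\partial_i F(x)\hat I_i(t) \;\ge\; \delta\big(F(x\vee 1_{\OPT}) - F(x)\big) \;-\; 4\delta\eps n .
\]
Plugging this into the first display and writing $\eps = \alpha f(\OPT)$ yields the claimed bound.

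The discretization and sampling bookkeeping that goes into the $n^3\delta^2 f(\OPT)$ term is routine and identical to \citet{FeldmanThesis2010}; the one new point is that the value oracle is queried in exactly two places above — forming the weights that define $\hat I(t)$, and (implicitly) in the first-order term — and Lemma~\ref{lem:derivative-approximate} bounds the resulting per-coordinate slack by $2\eps$, producing the extra $4\delta\alpha n f(\OPT)$ loss. The main thing to verify is that the remaining ingredients (feasibility of $1_{\OPT}$ in the matroid polytope, the telescoping submodularity bound, the second-order estimate) are all oracle-free, so that no further error terms creep in; I expect this to be the most delicate part of the write-up rather than any single hard inequality.
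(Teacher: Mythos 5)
Your proposal is correct and follows essentially the same route as the paper's proof: a Taylor expansion of the multilinear extension with the second-order remainder bounded by $O(n^3\delta^2)f(\OPT)$, two applications of Lemma~\ref{lem:derivative-approximate} (once to pass from $\partial_i F$ to $\partial_i \hat F$ before invoking optimality of the chosen independent set against $\OPT$, and once to pass back), and the telescoping submodularity bound $F(x\vee 1_{\OPT})-F(x)\le\sum_{i\in\OPT}(1-x_i)\partial_i F(x)$, yielding exactly the $4\delta\alpha n\,f(\OPT)$ oracle-error term. The only cosmetic difference is that you explicitly fold the sampling error of the weights $\hat\omega_i(t)$ into the discretization term, which the paper defers to the standard analysis of \citet{FeldmanThesis2010}.
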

\begin{proof}
    Let $z = x(t+\delta) - x(t)$ and consider the univariate function $g(s) = F(x(t) + sz)$.
    By a Taylor expansion, we have
    \[
        F(x(t+\delta)) - F(x(t)) = g(1) - g(0) \geq g'(0) - \frac{1}{2} \max_{s \in [0, 1]} |g''(s)|.
    \]
    Taking derivatives, we have
    \[
        g'(s) = \sum_{i=1}^n \partial_i F(x(t) + sz) z_i
    \]
    and
    \[
        g''(s) = \sum_{i=1}^n \sum_{j=1}^n \partial_i \partial_j F(x(t) + sz) z_i z_j.
    \]
    We have the bound $|\partial_i \partial_j F| \leq 2n \cdot f(\OPT)$ (second item of Claim~\ref{claim:submod_derivatives}) and $|z_i| \leq \delta$ so $|g''(s)| \leq 2n^3 \delta^2 f(\OPT)$.

    Now, we bound $g'(0)$.
    We have
    \begin{align*}
        g'(0)
        & = \sum_{i=1}^n \partial_i F(x(t)) z_i \\
        & = \delta \sum_{i=1}^n \partial_i F(x(t)) (1-x_i(t)) y_i^*(t) \\
        & \geq \delta \sum_{i=1}^n \partial_i \hat{F}_i(x(t)) \cdot (1-x_i(t)) y_i^*(t) - 2\delta \alpha n \cdot f(\OPT) && \text{by Lemma \ref{lem:derivative-approximate}} \\
        & \geq \delta\sum_{i \in \OPT} \partial_i \hat{F}_i(x(t)) \cdot (1-x_i(t)) - 2 \delta\alpha n \cdot f(\OPT) \\
        & \geq \delta\sum_{i \in \OPT} \partial_i F_i(x(t)) \cdot (1-x_i(t)) - 4\delta \alpha n \cdot f(\OPT)  && \text{by Lemma \ref{lem:derivative-approximate}} \\
        & = \delta\sum_{i \in \OPT} (F(x(t) \vee 1_i) - F(x(t))) - 4\delta\alpha n \cdot f(\OPT).
    \end{align*}
    By submodularity, we have $\sum_{i=1}^n (F(x(t) \vee 1_i) - F(x(t))) \geq F(x(t) \vee 1_{\OPT}) - F(x(t))$.
    To see the last inequality, let $\emptyset = S_0\subset S_1 \subset \ldots \subset S_{|\OPT|} = \OPT$ be such that $|S_i \setminus S_{i-1}| = 1$.
    Then
    \begin{align*}
        F(x(t) \vee 1_{\OPT}) - F(x(t))
        & = \sum_{i=1}^{|\OPT|} F(x(t) \vee 1_{S_i}) - F(x(t) \vee 1_{S_{i-1}}) \\
        & = \sum_{i=1}^{|\OPT|} F(x(t) \vee 1_{S_{i-1} + i}) - F(x(t) \vee 1_{S_{i-1}}).
    \end{align*}
    We can then iteratively apply Lemma~\ref{lemma:multilinear_submod} to each summand.
\end{proof}

\begin{lemma}
    \label{lemma:multilinear_submod}
    For any $i \neq j$, we have
    $F(x \vee 1_i \vee 1_j) - F(x \vee 1_i) \leq F(x \vee 1_j) - F(x)$.
\end{lemma}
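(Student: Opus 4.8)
The plan is to reduce this inequality to the ordinary submodularity of $f$ via the probabilistic interpretation of the multilinear extension. Write $F(y) = \E[f(R(y))]$, where $R(y) \subseteq \N$ is the random set containing each element $k$ independently with probability $y_k$. The four points $x$, $x\vee 1_i$, $x\vee 1_j$, $x\vee 1_i\vee 1_j$ agree on every coordinate outside $\{i,j\}$, so the first step is to condition on the common part $S := R(x)\setminus\{i,j\}$, a random subset of $\N\setminus\{i,j\}$ (element $k$ included independently with probability $x_k$) whose distribution is the same under all four points; moreover $S$ is independent of the events $\{i\in R(x)\}$ and $\{j\in R(x)\}$, which are themselves independent of each other.

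Next I would evaluate each of the four terms with $S$ held fixed. By multilinearity, as a function of $(x_i,x_j)$ with $S$ fixed we have $F = (1-x_i)(1-x_j)f(S) + x_i(1-x_j)f(S+i) + (1-x_i)x_j f(S+j) + x_i x_j f(S+i+j)$, and $F(x\vee 1_i)$, $F(x\vee 1_j)$, $F(x\vee 1_i\vee 1_j)$ are obtained by setting $x_i=1$, $x_j=1$, and $x_i=x_j=1$, respectively. A routine bookkeeping of the coefficients of $f(S)$, $f(S+i)$, $f(S+j)$, $f(S+i+j)$ then collapses the whole expression to
\begin{align*}
& F(x\vee 1_i\vee 1_j) - F(x\vee 1_i) - F(x\vee 1_j) + F(x) \\
&\qquad = \E_S\big[(1-x_i)(1-x_j)\big(f(S+i+j) - f(S+i) - f(S+j) + f(S)\big)\big].
\end{align*}

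Finally, for each fixed $S\subseteq\N\setminus\{i,j\}$, applying submodularity of $f$ to $S\subseteq S+j$ and the element $i$ gives $f(S+j+i) - f(S+j) \le f(S+i) - f(S)$, i.e.\ the parenthesized cross-difference is $\le 0$; since also $(1-x_i)(1-x_j)\ge 0$, every summand is non-positive, so the right-hand side is $\le 0$, and rearranging gives the claimed inequality. I do not anticipate a real obstacle here, as this is the standard fact that multilinear extensions of submodular functions have non-positive mixed second partial derivatives; the only points requiring (minor) care are that the same random set $S$ is shared by all four evaluations and that the coefficient bookkeeping is carried out correctly. As an alternative one could instead invoke Claim~\ref{claim:submod_derivatives} to rewrite the cross-difference as $(1-x_i)(1-x_j)\,\partial_i\partial_j F(x)$ and then check $\partial_i\partial_j F\le 0$, but the direct computation above is self-contained.
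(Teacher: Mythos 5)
Your proposal is correct and follows essentially the same route as the paper: the paper also writes all four evaluations as expectations over a single coupled random set $S\sim x$ (namely $F(x\vee 1_i\vee 1_j)=\E_{S\sim x}[f(S+i+j)]$, etc.) and concludes by applying submodularity of $f$ pointwise. Your version merely makes the coupling explicit by conditioning on the coordinates outside $\{i,j\}$ and extracting the $(1-x_i)(1-x_j)$ factor, which is a harmless elaboration of the same argument.
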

\begin{proof}
    The inequality we want to prove can be written as
    \[
        \E_{S \sim x}[ f(S + i + j) - f(S+i) ] \leq E_{S \sim x}[ f(S+j) - f(S)],
    \]
    which is true since $f$ is submodular.
\end{proof}

\subsection{The Monotone Case}
\begin{lemma}
    \label{lemma:monotone-robust}
    $F(x(1)) \geq \left[1-1/e - O(\alpha n + n^3 \delta) \right] \cdot f(\OPT)$.
\end{lemma}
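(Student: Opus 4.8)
The plan is to feed the one-step progress guarantee of Lemma~\ref{lemma:measured_cont_greedy_improvement} into the standard continuous-greedy recursion, using monotonicity to replace the ``gain toward $\OPT$'' term by a clean lower bound, and then to track the additive slack accumulated over the $1/\delta$ iterations. First I would invoke monotonicity of $f$: its multilinear extension $F$ is coordinatewise nondecreasing, and $x(t)\vee 1_{\OPT}\ge 1_{\OPT}$, so $F(x(t)\vee 1_{\OPT})\ge F(1_{\OPT})=f(\OPT)$. Substituting this into Lemma~\ref{lemma:measured_cont_greedy_improvement} gives, for every step,
\[
    F(x(t+\delta)) - F(x(t)) ~\ge~ \delta\big(f(\OPT) - F(x(t))\big) - (4\delta\alpha n + n^3\delta^2)\, f(\OPT).
\]

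Next I would set $\Delta(t) = f(\OPT) - F(x(t))$, so the display above rearranges to $\Delta(t+\delta) \le (1-\delta)\Delta(t) + (4\delta\alpha n + n^3\delta^2) f(\OPT)$. Unrolling this over the $1/\delta$ steps and bounding $\sum_{k\ge 0}(1-\delta)^k \le 1/\delta$ for the error contributions yields $\Delta(1) \le (1-\delta)^{1/\delta}\Delta(0) + (4\alpha n + n^3\delta) f(\OPT)$. Finally, using $\Delta(0) = f(\OPT) - F(\bm 0) = f(\OPT)$ (since $f(\emptyset)=0$) together with $(1-\delta)^{1/\delta}\le 1/e$ gives $\Delta(1) \le \big(1/e + 4\alpha n + n^3\delta\big) f(\OPT)$, i.e.\ $F(x(1)) \ge \big(1 - 1/e - O(\alpha n + n^3\delta)\big) f(\OPT)$, which is the claim. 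The rest is a routine geometric-series calculation.

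The main thing to be careful about is the gap between the idealized update analyzed in Lemma~\ref{lemma:measured_cont_greedy_improvement} — which uses the exact maximizer $y^*(t)$ of $\sum_i \partial_i\hat F(x(t)) y_i$ over the matroid polytope — and the actual Algorithm~\ref{alg:continuous-greedy}, which instead maximizes $\sum_{i\in I}\hat\omega_i(t)$ with $\hat\omega_i(t)$ the empirical mean of $\tfrac{10}{\delta^2}\log(2n)$ samples of $\hat f_{R(t)}(i)$. Here I would argue, via Hoeffding's inequality and a union bound over the $n$ coordinates and $1/\delta$ rounds, that with high probability each $\hat\omega_i(t)$ is within $O(\delta)\cdot f(\OPT)$ of $\partial_i\hat F(x(t))(1-x_i(t))$, so replacing the exact linear-optimization step by the sampled one perturbs the bound on $g'(0)$ inside the proof of Lemma~\ref{lemma:measured_cont_greedy_improvement} by only an additional $O(n\delta) f(\OPT)$ per step; after unrolling this is absorbed into the $O(n^3\delta)$ error term. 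Equivalently, one can simply observe that this is the same discretization/estimation bookkeeping carried out in Sections~3.2.1--3.2.2 of \cite{FeldmanThesis2010}, with the $O(n^3\delta^2)$ discretization error there replaced by the error term of Lemma~\ref{lemma:measured_cont_greedy_improvement}.
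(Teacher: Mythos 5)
Your proposal is correct and matches the paper's proof in essence: both plug the one-step bound of Lemma~\ref{lemma:measured_cont_greedy_improvement} together with monotonicity ($F(x(t)\vee 1_{\OPT})\ge f(\OPT)$) into the standard recursion and unroll a geometric series over the $1/\delta$ steps, with the sampling-estimation bookkeeping deferred to \cite{FeldmanThesis2010} exactly as the paper does. Your only (harmless) deviation is tracking the deficit $f(\OPT)-F(x(t))$ instead of $F(x(t))$ itself, which lets you use $(1-\delta)^{1/\delta}\le 1/e$ directly in place of the paper's Claim~\ref{claim:x_ineq}.
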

\begin{proof}
    From Lemma~\ref{lemma:measured_cont_greedy_improvement}, we have \[
        F(x(t+\delta)) \geq (1-\delta) F(x(t)) + (\delta - 4\delta \alpha n - n^3 \delta^2) f(\OPT).
    \]
    Let $C = (1 - 4\alpha n - n^3 \delta) f(\OPT)$ so that the above equation becomes
    \[
        F(x(t+\delta)) \geq (1-\delta) F(x(t)) + \delta C.
    \]
    Unrolling the recursion, we have
    \[
        F(x(1)) \geq \sum_{i=0}^{1/\delta-1} (1-\delta)^i \delta C
        = \delta C \frac{1 - (1-\delta)^{1/\delta}}{\delta}
        \geq C\left(1 - 1/e - \delta / 2e \right),
    \]
    where we used Claim~\ref{claim:x_ineq} for the last inequality.
    Plugging in $C$ gives the claim.
\end{proof}

\begin{claim}
    \label{claim:x_ineq}
    If $x \leq 0.5$ then $(1-x)^{1/x} \leq 1/e + x/2e$.
\end{claim}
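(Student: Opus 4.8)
The plan is to reduce the inequality to the elementary fact that the logarithm lies below its tangent at $1$, i.e.\ $\ln(1+t)\le t$ for all $t>-1$ (equivalently, the function $g(x)=x+\ln(1-x)$ has $g(0)=0$ and $g'(x)=-x/(1-x)<0$ on $(0,1)$, hence $g<0$ there). Throughout I assume $x\in(0,1/2]$: this is the only regime that matters, since the claim is invoked in Lemma~\ref{lemma:monotone-robust} with $x=\delta=n^{-4}>0$, and positivity of $x$ is genuinely needed (for $x\le 0$ the stated inequality actually fails).

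First I would rewrite the left-hand side, using $1-x>0$, as
\[
(1-x)^{1/x} \;=\; \exp\!\Big(\tfrac{1}{x}\ln(1-x)\Big).
\]
Applying $\ln(1+t)\le t$ with $t=-x$ gives $\ln(1-x)\le -x$, and dividing by $x>0$ (which preserves the direction of the inequality) yields $\tfrac{1}{x}\ln(1-x)\le -1$. Since $\exp$ is increasing, this gives $(1-x)^{1/x}\le e^{-1}=1/e$. Finally, because $x>0$ we have $x/(2e)>0$, so $1/e\le 1/e+x/(2e)$, which is exactly the claim.

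I do not expect any real obstacle: the argument is elementary, and the only points requiring mild care are that $x$ must be strictly positive (so $1/x$ is defined and division by $x$ keeps the inequality pointing the right way) and that $x<1$ (so $\ln(1-x)$ is well defined), both guaranteed by $0<x\le 1/2$. In fact this shows the slightly stronger $(1-x)^{1/x}\le 1/e$ for every $x\in(0,1)$, so the extra $x/(2e)$ slack in the statement is not even needed; if one wanted an estimate that tracks it, one could instead expand $\ln(1-x)=-\sum_{k\ge1}x^k/k$ to get $\tfrac1x\ln(1-x)\le -1-x/2$ and hence $(1-x)^{1/x}\le e^{-1}e^{-x/2}\le \tfrac1e(1+x/2)$, using $e^{-x/2}<1$.
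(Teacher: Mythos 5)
Your proof is correct, and it takes a slightly different (and in fact stronger) route than the paper. The paper's proof keeps track of the $x/(2e)$ slack: it uses the second-order bound $\log(1-x)\le -x+\frac{x^2}{2}$ to get $(1-x)^{1/x}\le e^{-1}e^{x/2}$ and then invokes the numeric inequality $e^{x/2}\le 1+x$ — which, as written, actually only yields $(1-x)^{1/x}\le \frac{1}{e}(1+x)$ rather than the stated $\frac{1}{e}(1+\frac{x}{2})$, a small slip in the paper's last step. You instead use the first-order bound $\ln(1-x)\le -x$ to conclude $(1-x)^{1/x}\le 1/e$ for all $x\in(0,1)$, which makes the additive $x/(2e)$ term superfluous and sidesteps that issue entirely; your parenthetical alternative via $\ln(1-x)\le -x-\frac{x^2}{2}$, giving $(1-x)^{1/x}\le e^{-1}e^{-x/2}\le \frac{1}{e}(1+\frac{x}{2})$, is also a clean repair of the paper's intended argument. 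Your restriction to $x>0$ is appropriate and consistent with the only use of the claim, in Lemma~\ref{lemma:monotone-robust} with $x=\delta=n^{-4}>0$ (and there your stronger bound would even let one drop the $\delta/2e$ term, which is in any case absorbed into the $O(n^3\delta)$ error).
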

\begin{proof}
    First, by a Taylor expansion, we have $\log(1-x) \leq -x + \frac{x^2}{2}$ which is valid for all $x \in (0, 1)$.
    We thus have $\frac{\log(1-x)}{x} \leq -1 + \frac{x}{2}$ so $(1-x)^{1/x} \leq e^{-1} e^{x/2}$.
    Next, we use the numeric inequality $e^{x/2} \leq 1 + x$ which is valid for $x \leq 2$.
    So we conclude that $\frac{\log(1-x)}{x} \leq e^{-1}(1 + x/2)$.
\end{proof}

From Lemma \ref{lemma:monotone-robust}, we obtain
\begin{align*}
    F(x(1)) & \geq \left[ 1-1/e - O(\alpha n + n^3 \delta) \right] \cdot f(\OPT) \\
    & = \left[ 1-1/e - O(n^3 \delta ) \right] \cdot f(\OPT) - O(\eps n) \\
    & = \left[ 1-1/e - O(\tfrac{1}{n}) \right] \cdot f(\OPT) - O(\eps n), 
\end{align*}
with $\delta = n^{-4}$ in Algorithm \ref{alg:continuous-greedy}, which proves Lemma \ref{lemma:unified-cts-greedy-robust} for the monotone case. 

\subsection{The Non-Monotone Case}
The following lemma can be established from following the proofs of Lemma~3.2.8, Lemma~3.2.9, Corollary 3.2.10, and Lemma 3.2.11 from \cite{FeldmanThesis2010} verbatim but replacing $O(n^3 \delta)$ in their argument with $O(\alpha n + n^3 \delta)$ (i.e.~Lemma~\ref{lemma:measured_cont_greedy_improvement}).
\begin{lemma}
    \label{lemma:non-monotone-robust}
    $F(x(1)) \geq \big[ 1/e - O(\alpha n + n^3 \delta) \big] \cdot f(\OPT)$
\end{lemma}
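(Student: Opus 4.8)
The plan is to mirror the structure of the monotone case (Lemma~\ref{lemma:monotone-robust}) and its exact-oracle counterpart in \cite{FeldmanThesis2010}, tracking how the approximate-oracle error $\alpha = \eps / f(\OPT)$ propagates through the non-monotone analysis. The key observation is that the only place the value oracle is used in the measured continuous greedy analysis is in the per-step improvement guarantee, and this has already been isolated and re-proved in Lemma~\ref{lemma:measured_cont_greedy_improvement}: the step improvement
\[
    F(x(t+\delta)) - F(x(t)) \;\ge\; \delta\bigl( F(x(t) \vee 1_{\OPT}) - F(x(t)) \bigr) - (4\delta\alpha n + n^3\delta^2) f(\OPT)
\]
differs from the exact-oracle bound (Corollary~3.2.7 of \cite{FeldmanThesis2010}) only in that the discretization error $O(n^3\delta^2) f(\OPT)$ is replaced by $O(n^3\delta^2 + \delta\alpha n) f(\OPT)$. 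So the strategy is: take the four ingredients from \cite{FeldmanThesis2010} that constitute the non-monotone analysis — Lemma~3.2.8, Lemma~3.2.9, Corollary~3.2.10, and Lemma~3.2.11 — and rerun each one verbatim, substituting our Lemma~\ref{lemma:measured_cont_greedy_improvement} wherever their Corollary~3.2.7 was invoked.

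First I would recall the structure of the non-monotone argument. The key difference from the monotone case is that one also needs a lower bound on $F(x(t) \vee 1_{\OPT})$ in terms of $F(x(t))$ and $f(\OPT)$; in the measured continuous greedy this uses the fact that each coordinate grows slowly, so $x_i(t) \le 1 - (1-\delta)^{t/\delta} \le 1 - e^{-t} + O(\delta)$, which in turn implies (via submodularity / the multilinear extension identity $F(x \vee 1_S) \ge (1 - \max_i x_i) f(S)$-type bounds) that $F(x(t) \vee 1_{\OPT}) \ge (e^{-t} - O(\delta)) f(\OPT)$. Combining this with the per-step improvement gives a differential inequality of the form $\frac{d}{dt} F(x(t)) \gtrsim e^{-t} f(\OPT) - F(x(t)) - (\text{error})$, whose solution at $t = 1$ yields $F(x(1)) \ge (1/e - O(\alpha n + n^3 \delta)) f(\OPT)$. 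Every step in this chain is either oracle-free (the coordinate growth bound, the multilinear identities) or else routes through the per-step improvement, so the substitution is mechanical: each invocation of the exact per-step bound contributes its error term, and summing over the $1/\delta$ steps turns $O(\delta \alpha n)$ per step into $O(\alpha n)$ total, exactly as $O(n^3\delta^2)$ per step became $O(n^3\delta)$.

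The main obstacle — though it is a bookkeeping obstacle rather than a conceptual one — is making sure that the approximate-oracle error does not enter anywhere \emph{other} than the per-step improvement lemma. In particular one should check that the upper bound on coordinate values $x_i(t)$, which drives the $e^{-t}$ factor, depends only on the update rule $x_i(t+\delta) = x_i(t) + \delta(1 - x_i(t)) \hat I_i(t)$ with $\hat I_i(t) \in \{0,1\}$, and not on the quality of the oracle — which is true since $\hat I(t)$ is always a genuine independent set regardless of oracle noise. One should also confirm the sampling error in estimating $\partial_i \hat F(x(t))$ (the $\frac{10}{\delta^2}\log(2n)$ samples in Algorithm~\ref{alg:continuous-greedy}) is already absorbed into the $O(n^3\delta)$ discretization term, as in the exact analysis. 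Granting these checks, the conclusion $F(x(1)) \ge [1/e - O(\alpha n + n^3\delta)] f(\OPT)$ follows, and then combining with pipage rounding ($\E[f(S)] \ge F(x(1))$) and setting $\delta = n^{-4}$, $\alpha = \eps/f(\OPT)$ gives $\E[f(S)] \ge (1/e - O(1/n))f(\OPT) - O(\eps n)$, which establishes $O(n\eps)$-robustness and completes the non-monotone half of Lemma~\ref{lemma:unified-cts-greedy-robust}.
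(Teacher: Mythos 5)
Your proposal is correct and matches the paper's argument: the paper proves this lemma exactly by rerunning Lemma~3.2.8, Lemma~3.2.9, Corollary~3.2.10, and Lemma~3.2.11 of \cite{FeldmanThesis2010} verbatim with Lemma~\ref{lemma:measured_cont_greedy_improvement} substituted for their Corollary~3.2.7, i.e., replacing the $O(n^3\delta)$ error with $O(\alpha n + n^3\delta)$, which is precisely your plan. Your additional checks (that the coordinate-growth bound driving the $e^{-t}$ factor is oracle-free and that the error only enters through the per-step improvement) are sensible bookkeeping that the paper leaves implicit.
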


This implies $F(x(1)) \ge \left[ 1/e - O(\tfrac{1}{n}) \right] \cdot f(\OPT) - O(\eps n)$ and proves Lemma \ref{lemma:unified-cts-greedy-robust} for the non-monotone case.

\section{Robustness of Double Greedy: Proof of Lemma \ref{lem:double_greedy_robust}}
\label{sec:robustness_double_greedy}

\begin{algorithm}[H]
\caption{Double greedy \cite{Buchbinder_Feldman_Seffi_Schwartz_2015} with approximate value oracle}
\label{alg:double-greedy-approximate-oracle}
\SetKwInOut{Input}{Input}
\Input{An approximate value oracle $\hat f$ for a non-negative submodular function $f : 2^\N \to \reals_+$.} 
\DontPrintSemicolon
\LinesNumbered
Initialize $X_0 = \emptyset, ~ Y_0 = \N$. \;
Let $(u_1, \ldots, u_n)$ be an arbitrary order of the elements in $\N$. \; 
\For{$i = 1$ to $n$}{
    Let $\hat a_i$ be the approximate value of $a_i = f_{X_{i-1}}(u_i) = f(X_{i-1} \cup \{u_i\}) - f(X_{i-1})$. \;
    Let $\hat b_i$ be the approxiamte value of $b_i = - f_{Y_{i-1}\setminus\{u_i\}}(u_i) = f(Y_{i-1}\setminus\{u_i\}) - f(Y_{i-1})$. \;
    Let $(\hat p_i, \hat q_i) = \begin{cases}
    (\frac{\hat a_i}{\hat a_i + \hat b_i}, \frac{\hat b_i}{\hat a_i + \hat b_i}) &  \text{ if } \hat a_i > 0 \text{ and } \hat b_i > 0 \\
    (1, 0) & \text{ if } \hat a_i > 0 \text{ and } \hat b_i \le 0 \\
    (0, 1) & \text{ if } \hat a_i \le 0. 
    \end{cases}$ \; 
    With probability $\hat p_i$, let
        $X_i = X_{i-1} \cup \{u_i\}, ~ Y_i = Y_{i-1}$; \\ 
    otherwise, let 
        $X_i = X_{i-1}, ~ Y_i = Y_{i-1}\setminus\{u_i\}$. 
}
\Return $X_n$ (which equals $Y_n$).
\end{algorithm}

This section proves the robustness of the double greedy algorithm \citep{Buchbinder_Feldman_Seffi_Schwartz_2015} (given in Algorithm \ref{alg:double-greedy-approximate-oracle}) against approximate value oracle. 

\begin{lemma}
\label{lem:double-greedy-noisy-estimate-loss}
Suppose $|\hat a_i - a_i | \le \eps$ and $|\hat b_i - b_i| \le \eps$.
Let $\OPT = \argmax_{S\subseteq \N} f(S)$. The expected value of the solution $X_n$ returned by Algorithm~\ref{alg:double-greedy-approximate-oracle} is at least
\begin{equation}
    \E[ f(X_n) ] ~ \ge ~ \frac{1}{2} f(\OPT) - \frac{3}{2}n\eps. 
\end{equation}
\end{lemma}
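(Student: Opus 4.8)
The plan is to mimic the standard analysis of the double greedy algorithm from \cite{Buchbinder_Feldman_Seffi_Schwartz_2015}, carefully tracking the error introduced by the approximate value oracle $\hat f$. Recall the key invariant in the exact analysis: define $\OPT_i = (\OPT \cup X_i) \cap Y_i$, which interpolates between $\OPT_0 = \OPT$ and $\OPT_n = X_n = Y_n$. The exact analysis shows that at each step $i$,
\begin{equation*}
    \E[f(\OPT_{i-1}) - f(\OPT_i)] \le \tfrac{1}{2}\E[(f(X_i) - f(X_{i-1})) + (f(Y_i) - f(Y_{i-1}))],
\end{equation*}
where the expectation at step $i$ is over the random choice of whether to include $u_i$. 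Summing this telescoping inequality over $i = 1, \ldots, n$ and using $f(\OPT_n) = f(X_n) = f(Y_n)$ and non-negativity of $f$ gives $f(\OPT) - f(X_n) \le f(X_n) + f(Y_n) - 2f(\emptyset) = 2f(X_n)$, hence $\E[f(X_n)] \ge \tfrac12 f(\OPT)$.

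The main step is to re-derive the per-step inequality with an additive slack term. First I would verify two facts about the true marginals $a_i, b_i$: by submodularity $a_i + b_i \ge 0$, so at least one of $a_i, b_i$ is non-negative. The algorithm, however, uses $\hat a_i, \hat b_i$ which may disagree with $a_i, b_i$ in sign (but only when $|a_i|, |b_i| \le \eps$, since $|\hat a_i - a_i|, |\hat b_i - b_i| \le \eps$). I would do a case analysis on the sign pattern used by the algorithm (the three branches defining $(\hat p_i, \hat q_i)$). In the ``both positive'' branch, the algorithm randomizes with $\hat p_i = \hat a_i / (\hat a_i + \hat b_i)$ rather than $a_i/(a_i+b_i)$; the contribution to $\E[f(X_i) - f(X_{i-1}) + f(Y_i) - f(Y_{i-1})]$ is $\hat p_i a_i + \hat q_i b_i$, and the decrease $\E[f(\OPT_{i-1}) - f(\OPT_i)]$ is bounded (as in the exact proof, using submodularity) by $\hat q_i a_i + \hat p_i b_i$ up to... actually here is where one must be careful: in the exact proof the clean bound $\E[f(\OPT_{i-1})-f(\OPT_i)] \le \tfrac12 \E[\Delta X_i + \Delta Y_i]$ uses $p_i = a_i/(a_i+b_i)$ exactly. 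With the perturbed $\hat p_i$, I expect the inequality to hold up to an additive $O(\eps)$ term per step, and the ``deterministic'' branches (where one of $\hat p_i, \hat q_i$ is forced to $0$ or $1$) only trigger when the relevant true marginal is within $\eps$ of $0$, which again costs only $O(\eps)$. Accumulating, we get $f(\OPT) - f(X_n) \le 2 f(X_n) + c n \eps$ for some small constant $c$, and chasing the constant should yield $c = 3$, giving $\E[f(X_n)] \ge \tfrac12 f(\OPT) - \tfrac32 n\eps$.

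The hard part will be handling the cases where $\hat a_i$ and $a_i$ (or $\hat b_i$ and $b_i$) have opposite signs, forcing the algorithm into a ``wrong'' deterministic branch, and bounding how much this degrades both the gain term and the $\OPT_i$-decrease term simultaneously; one needs the observation that such a sign flip can only happen when the true marginal has magnitude at most $\eps$, so the ``damage'' is controlled. A secondary subtlety is that the standard proof splits into the case $a_i + b_i \le 0$ (where one shows the decrease is $\le 0$) versus $a_i + b_i > 0$; with noise the algorithm does not see $a_i + b_i$ directly, so I would instead organize the case analysis purely by which branch the algorithm takes and bound each against the true quantities. Once the per-step bound with $O(\eps)$ slack is in hand, the telescoping sum and final rearrangement are routine, and I would pin down the constant $3/2$ at the end rather than optimizing it along the way.
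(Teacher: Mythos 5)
Your plan is essentially the paper's own proof: the per-step slack you propose to bound is exactly the quantity $r_i = \max\{\hat p_i b_i, \hat q_i a_i\} - \tfrac{1}{2}(\hat p_i a_i + \hat q_i b_i)$, and the paper bounds it by $\tfrac{3}{2}\eps$ via the same case analysis you sketch (splitting on the branch the algorithm takes, using $a_i + b_i \ge 0$ and the observation that a ``wrong'' branch can only occur when the true marginal is within $\eps$ of zero). The only difference is presentational: the paper imports the telescoping per-step framework as a black box, namely $\E[f(X_n)] \ge \tfrac{1}{2} f(\OPT) - \E[\sum_{i=1}^n r_i]$ from \cite{roughgarden_optimal_2018} (Theorem 2.1) and \cite{Harvey_Liaw_Soma_2020} (Lemma 4.3), whereas you re-derive it inline from the Buchbinder et al.\ analysis, which is the same underlying argument.
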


\begin{proof}
Define the following quantity:
\begin{equation}
    r_i = \max\{\hat p_i b_i, \hat q_i a_i\} - \tfrac{1}{2} \big(\hat p_i a_i + \hat q_i b_i \big). 
\end{equation}
When the value oracle is exact, the probabilities $\hat p_i, \hat q_i$ are computed from the correct marginal values $a_i$ and $b_i$:
\begin{equation}
    (p_i, q_i) = \begin{cases}
    (\frac{a_i}{a_i + b_i}, \frac{b_i}{a_i + b_i}) &  \text{ if } \hat a_i > 0 \text{ and } b_i > 0 \\
    (1, 0) & \text{ if } a_i > 0 \text{ and } b_i \le 0 \\
    (0, 1) & \text{ if } a_i \le 0. 
\end{cases}
\end{equation} 
In this case, it can be easily verified that $r_i = \max\{p_i b_i, q_i a_i\} - \tfrac{1}{2}(p_i a_i + q_i b_i) \le 0$.  Due to approximate oracle, $r_i = \max\{\hat p_i b_i, \hat q_i a_i\} - \tfrac{1}{2}(\hat p_i a_i + \hat q_i b_i)$ may be positive. \cite{roughgarden_optimal_2018} show that the loss of performance of the double greedy algorithm due to approximate oracle can be upper bounded by $\sum_{i=1}^n r_i$: 
\begin{lemma}[see Theorem 2.1 in \cite{roughgarden_optimal_2018} or Lemma 4.3 in \cite{Harvey_Liaw_Soma_2020}]
\label{lem:loss-by-r}
The expected value of the set $X_n$ returned by Algorithm~\ref{alg:double-greedy-approximate-oracle} is at least: 
\begin{equation}
    \E[ f(X_n) ] ~ \ge ~ \frac{1}{2} f(\OPT) -  \E[ \sum_{i=1}^n r_i ]. 
\end{equation}
\end{lemma}

Then, it remains to upper bound $\E[ \sum_{i=1}^n r_i ]$, which we do in the following lemma:
\begin{lemma} \label{lem:noise-to-r}
Suppose $|\hat a_i - a_i| \le \eps$ and $|\hat b_i - b_i| \le \eps$, then $r_i \le \frac{3}{2} \eps$. 
\end{lemma}
\begin{proof}
To simplify notations, we drop the subscript $i$, so $r = \max\{\hat p b, \hat q a\} - \tfrac{1}{2}(\hat p a + \hat q b)$.
Consider three cases separately: 
\begin{itemize}
    \item $\hat a > 0$ and $\hat b > 0$.  In this case, we have $\hat p = \frac{\hat a}{\hat a + \hat b}$, $\hat q = \frac{\hat b}{\hat a + \hat b}$, and 
    \begin{align*}
        r & = \frac{1}{\hat a + \hat b} \Big( \max\{ \hat a b, \hat b a \} - \tfrac{1}{2}\big( \hat a a + \hat b b \big) \Big).
    \end{align*}
    Using the inequalities  $a \le \hat a + \eps_a$, $b \le \hat b + \eps_b$, and $\hat a > 0$, $\hat b > 0$, 
    \begin{align*}
        r & \le \frac{1}{\hat a + \hat b} \Big( \max\{ \hat a (\hat b + \eps_b), \hat b (\hat a + \eps_a) \} - \tfrac{1}{2} \big( \hat a (\hat a - \eps_a) + \hat b (\hat b - \eps_b) \big) \Big) \\
        & = \frac{1}{\hat a + \hat b} \Big( \hat a \hat b + \max\{ \hat a \eps_b, \hat b \eps_a \} - \tfrac{1}{2} \big( \hat a^2 + \hat b^2 - \hat a \eps_a - \hat b \eps_b \big) \Big) \\
        & = \frac{1}{\hat a + \hat b} \Big( \max\{ \hat a \eps_b, \hat b \eps_a \} + \tfrac{1}{2} \big( \hat a \eps_a + \hat b \eps_b \big) - \tfrac{1}{2} \big(\hat a^2 + \hat b^2 - 2\hat a \hat b\big) \Big) \\
        & \le \frac{1}{\hat a + \hat b} \Big( \max\{ \hat a \eps_b, \hat b \eps_a \} + \tfrac{1}{2} \big( \hat a \eps_a + \hat b \eps_b \big) \Big) \\
        & = \max\{ \hat p \eps_b, \hat q \eps_a \} + \tfrac{1}{2}(\hat p \eps_a + \hat q \eps_b)\\
        & \le \tfrac{3}{2} \eps.  
    \end{align*}

    \item $\hat a > 0$ and $\hat b \le 0$.  In this case, we have $\hat p = 1$, $\hat q = 0$, and
    \begin{align*}
        r = \max\{b, 0\} - \tfrac{1}{2} a
    \end{align*}
    On the one hand, $b \le \hat b + \eps_b \le \eps_b$.  
    On the other hand, because $a + b \ge 0$ holds for any submodular function \citep{Buchbinder_Feldman_Seffi_Schwartz_2015}, we have $a \ge - b \ge - \eps_b$.  Therefore, 
    \begin{align*}
        r \le \max\{\eps_b, 0\} - \tfrac{1}{2} (-\eps_b) = \tfrac{3}{2} \eps_b. 
    \end{align*}

    \item $\hat a \le 0$.  In this case, we have $\hat p = 0$, $\hat q = 1$, and 
    \begin{align*}
        r = \max\{0, a\} - \tfrac{1}{2} b
    \end{align*}
    On the one hand, $a \le \hat a + \eps_a \le \eps_a$.  
    On the other hand, because $a + b \ge 0$ holds for any submodular function \citep{Buchbinder_Feldman_Seffi_Schwartz_2015}, we have $b \ge - a \ge - \eps_a$.  Therefore, 
    \begin{align*}
        r \le \max\{0, \eps_a\} - \tfrac{1}{2} (-\eps_a) = \tfrac{3}{2} \eps_a. 
    \end{align*}
\end{itemize}
All the three cases above give $r \le \tfrac{3}{2} \max\{\eps_a, \eps_b\}$. 
\end{proof}

Using Lemmas~\ref{lem:loss-by-r} and \ref{lem:noise-to-r}, we immediately obtain $\E[ f(X_n) ] ~ \ge ~ \frac{1}{2} f(\OPT) - \frac{3}{2} n \eps$. 
\end{proof}

\section{Proof of Theorem \ref{thm:non-monotone-unconstrained}}
\label{app:proof:non-monotone-unconstrained}

\begin{lemma}
\label{lem:F-S-F-S-H}
Fix any set $S \subseteq N$.  Sample a uniformly random set $H \subseteq \N$ of size $h$.  We have: 
\begin{equation*}
    \E_{H\sim \N[h]}\big[  F^{H, t}(S\setminus H) \big] ~ \ge ~ \E_{H\sim N[h]} \big[ F^{H, t}(S) \big] -  \frac{h}{|\N| - h} \max_{S' \subseteq N: |S'| \le |S|+h} f(S'). 
\end{equation*}
\end{lemma}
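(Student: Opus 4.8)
The plan is to recast both expectations over a single, more convenient two-stage sampling process and then invoke Lemma~\ref{lem:removing-elements-submodularity}. First I would unfold the surrogate function: $F^{H,t}(S\setminus H)=\E_{H'\sim H[t]}[f((S\setminus H)\cup H')]$ and $F^{H,t}(S)=\E_{H'\sim H[t]}[f(S\cup H')]$, so the two sides of the claimed inequality are $\E_{H\sim N[h]}\E_{H'\sim H[t]}[f((S\setminus H)\cup H')]$ and $\E_{H\sim N[h]}\E_{H'\sim H[t]}[f(S\cup H')]$.

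The crucial step is a change in the order of sampling: drawing $H\sim N[h]$ and then $H'\sim H[t]$ yields the same distribution on the pair $(H',H)$ as drawing $H'\sim N[t]$ first and then drawing $G\sim (N\setminus H')[h-t]$ and setting $H=H'\cup G$ (a short count of the pairs $(H',H)$ with $H'\subseteq H$, $|H'|=t$, $|H|=h$ confirms that both routes give the uniform distribution). Under this reparametrization $S\cup H'$ no longer depends on $G$, so the right-hand side equals $\E_{H'\sim N[t]}[f(S\cup H')]$; and using the set identity $(S\setminus H)\cup H'=(S\cup H')\setminus G$ (valid because $H=H'\cup G$ and $G\cap H'=\emptyset$), the left-hand side equals $\E_{H'\sim N[t]}\E_{G\sim (N\setminus H')[h-t]}[f((S\cup H')\setminus G)]$. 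Verifying this identity and the distributional claim is the only part that needs genuine care; the rest is bookkeeping.

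Finally, for each fixed $H'$ I would apply Lemma~\ref{lem:removing-elements-submodularity} with ``$S$'' $=S\cup H'$, ``$A$'' $=N\setminus H'$, and ``$k$'' $=h-t$. Since $|N\setminus H'|=|N|-t$, the coefficient is $\frac{h-t}{(|N|-t)-(h-t)}=\frac{h-t}{|N|-h}\le\frac{h}{|N|-h}$, and since $(S\cup H')\cap(N\setminus H')=S\setminus H'$ has size at most $|S|\le |S|+h$, the $\max$ appearing in that lemma is at most $\max_{S'\subseteq N:\,|S'|\le|S|+h}f(S')$. This gives $\E_{G\sim(N\setminus H')[h-t]}[f((S\cup H')\setminus G)]\ge f(S\cup H')-\frac{h}{|N|-h}\max_{S'\subseteq N:\,|S'|\le|S|+h}f(S')$ for every $H'$; taking $\E_{H'\sim N[t]}$ of both sides and substituting the two identities from the previous paragraph yields the lemma. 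I expect the main obstacle to be purely notational --- keeping the set manipulations $(S\setminus H)\cup H'=(S\cup H')\setminus G$ and $(S\cup H')\cap(N\setminus H')=S\setminus H'$ straight, together with the equivalence of the two sampling orders --- rather than anything requiring a new idea.
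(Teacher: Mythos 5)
Your proposal is correct and follows essentially the same route as the paper: unfold $F^{H,t}$, reverse the two-stage sampling so that conditional on $H'$ the removed set $H\setminus H'$ is uniform over $(N\setminus H')[h-t]$, use the identity $(S\setminus H)\cup H'=(S\cup H')\setminus(H\setminus H')$, and apply Lemma~\ref{lem:removing-elements-submodularity} with coefficient $\frac{h-t}{|N|-h}\le\frac{h}{|N|-h}$. Your bound on the max term (over subsets of $S\setminus H'$) is in fact slightly tighter than the paper's, but the argument is the same.
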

\begin{proof}
Let $H' \sim H[t]$ denote the random sampling of a subset $H'$ of $H$ with size $t$. By definition, $F^{H, t}(S) = \E_{H' \sim H[t]} f(S\cup H')$.  So,  
\begin{align*}
    & \E_{H}[F^{H, t}(S) - F^{H, t}(S\setminus H)] ~ = ~ \E_{H} \E_{H'\sim H[t]} \Big[ f(S\cup H') - f((S\setminus H)\cup H') \Big] \\
    & ~ = ~ \E_{H} \E_{H' \sim H[t]} \Big[ f(S\cup H') - f((S\cup H') \setminus (H\setminus H')) \Big] \\
    & ~ = ~ \E_{H'} \E_{H | H'} \Big[ f(S\cup H') - f((S\cup H') \setminus (H\setminus H')) \Big], 
\end{align*}
where the notation $H | H'$ means sampling $H$ conditioning on $H'$.  We note that, conditioning on $H'$, the distribution of the set $H\setminus H'$ is uniform across all subsets of $\N\setminus H'$ of size $h - t$. So, letting $B = H\setminus H'$, we have 
\begin{align*}
    & \E_{H}[F^H(S) - F^H(S\setminus H)]
    ~ = ~ \E_{H'} \E_{B \sim \N\setminus H'} \Big[ f(S\cup H') - f((S\cup H') \setminus B) \Big] \\
    & ~ \le ~ \E_{H'} \Big[ ~ \frac{|B|}{|N\setminus H'| - |B|} \max_{S' \subseteq S\cup H'} f(S') ~ \Big] \quad\quad \text{by Lemma~\ref{lem:removing-elements-submodularity}} \\
    & ~ \le ~ \frac{h}{|\N| - h} \max_{S' \subseteq N: |S'| \le |S|+h} f(S'). 
\end{align*}
\end{proof}

\begin{lemma}
\label{lem:F-S-f-S-H}
For any set $S \subseteq \N$.  Sample a uniformly random set $H \subseteq \N$ of size $h$, we have
\begin{equation*}
    \E_{H\sim N[h]}\big[ F^{H, t}(S)] ~ \ge ~ f(S) - \frac{h}{|\N| - h} \max_{S': S \subseteq S' \subseteq N, ~ |S'| \le |S| + h} f(S').
\end{equation*}
\end{lemma}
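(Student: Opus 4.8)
The plan is to collapse the two-stage randomness in $\E_{H\sim N[h]}[F^{H,t}(S)]$ into a single draw of a uniformly random size-$t$ subset of $N$, and then apply the submodular ``adding elements'' estimate, keeping track of set sizes so that the maximum on the right-hand side is over sets of size at most $|S|+h$.

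First I would record an elementary distributional fact: if $H$ is a uniformly random size-$h$ subset of $N$ and then $H'$ is a uniformly random size-$t$ subset of $H$, then $H'$ is distributed as a uniformly random size-$t$ subset of $N$. Indeed, for a fixed $T$ with $|T|=t$, $\Pr[H'=T] = \binom{|N|-t}{h-t}\big/\big(\binom{|N|}{h}\binom{h}{t}\big) = 1/\binom{|N|}{t}$, which does not depend on $T$. Since $F^{H,t}(S) = \E_{H'\sim H[t]}[f(S\cup H')]$, this yields
\[
    \E_{H\sim N[h]}\big[F^{H,t}(S)\big] ~=~ \E_{H'\sim N[t]}\big[f(S\cup H')\big].
\]

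Next I would apply Lemma~\ref{lem:adding-elements-submodularity} with $A=N$ and $k=t$ to the right-hand side. Since the stated form of that lemma has an unrestricted maximum, I would instead use its size-aware refinement, obtained by applying Lemma~\ref{lem:removing-elements-submodularity} to the non-negative submodular function $g(T)=f(N\setminus T)$ with the set $N\setminus S$ and $k=t$, and then rewriting $f(S\cup H')=g((N\setminus S)\setminus H')$. In Lemma~\ref{lem:removing-elements-submodularity} the sets $T$ over which the maximum is taken satisfy $|T|\ge|N\setminus S|-t$, so their complements $S'=N\setminus T$ satisfy $S\subseteq S'\subseteq N$ and $|S'|\le|S|+t$. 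This gives
\[
    \E_{H'\sim N[t]}\big[f(S\cup H')\big] ~\ge~ f(S) - \frac{t}{|N|-t}\max_{S':\,S\subseteq S'\subseteq N,\ |S'|\le|S|+t} f(S').
\]

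Finally I would weaken $t$ to $h$: assuming $1\le t<h<|N|$ (needed for the bound to be non-vacuous), the map $x\mapsto x/(|N|-x)$ is increasing on $(0,|N|)$, and since $f\ge 0$ the maximum over sets of size $\le|S|+t$ is at most the maximum over sets of size $\le|S|+h$; multiplying these two inequalities gives $\frac{t}{|N|-t}\max_{|S'|\le|S|+t}f(S') \le \frac{h}{|N|-h}\max_{|S'|\le|S|+h}f(S')$, and the claimed bound follows. The only mildly delicate point is carrying the size constraint on $S'$ through the complementation step; everything else is a direct substitution into results already established.
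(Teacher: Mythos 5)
Your proposal is correct and follows essentially the same route as the paper: collapse the nested sampling into a single uniform size-$t$ subset of $N$, apply the adding-elements bound, and then weaken $t/(|N|-t)$ to $h/(|N|-h)$ using $f \ge 0$. Your explicit detour through Lemma~\ref{lem:removing-elements-submodularity} via complementation merely spells out the size-restricted form of Lemma~\ref{lem:adding-elements-submodularity} that the paper invokes implicitly, so it is a welcome clarification rather than a different argument.
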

\begin{proof}
Let $H' \sim H[t]$ denote a random subset of $H$ with size $t$. 
Then, we have 
\begin{align*}
    \E_{H\sim N[h]}[F^{H, t}(S)] & ~ = ~ \E_{H\sim N[h]} \E_{H'\sim H[t]} f(S\cup H')\\
    & ~ = ~ \E_{H'\sim N[t]} \big[ f(S\cup H') \big] \\
    & ~ \ge ~ f(S) - \frac{t}{|\N| - t} \max_{S': S \subseteq S' \subseteq N, ~ |S'| \le |S| + t} f(S')  \qquad \text{by Lemma~\ref{lem:adding-elements-submodularity}} \\
    & ~ \ge ~ f(S) - \frac{h}{|\N| - h} \max_{S': S \subseteq S' \subseteq N, ~ |S'| \le |S| + h} f(S'). 
\end{align*}
\end{proof}

\begin{proof}[Proof of Theorem \ref{thm:non-monotone-unconstrained}]
The double greedy algorithm has approximation ratio $\alpha = 1/2$ and is $\beta(\eps_1) = O(n\eps_1)$-robust against $\eps_1$-approximate value oracle by 
Lemma~\ref{lem:double_greedy_robust}.  Then, following the proof of Theorem \ref{theorem:meta}, we have with probability at least $1-\delta$ over the randomness of $\tilde f$, the expected value of the final solution $\ALG$ satisfies 
\begin{align*}
    & \E[ f(\ALG) \mid \mathcal E ] ~ \ge ~  \frac{1}{2} \E_{H\sim\N[h]}\Big[ \max_{S\subseteq \N \setminus H} F^{H, t}(S) \Big] ~ - ~ O( n\eps_1) \\
    & ~ \ge ~ \frac{1}{2} \E_{H\sim N[h]}\big[ F^{H, t}(O^* \setminus H) \big] ~ - ~ O(n\eps_1) && \text{(because $O^*\setminus H \subseteq N\setminus H$)}  \\
    & ~ \ge ~  \frac{1}{2} \Big( \E_{H\sim N[h]}\big[ F^{H, t}(O^*)] - \frac{h}{n-h} f(O^*) \Big) ~ - ~ O(n\eps_1)  && \text{by Lemma \ref{lem:F-S-F-S-H}}  \\
    & ~ \ge ~  \frac{1}{2} \Big( f(O^*) - 2\frac{h}{n-h} f(O^*) \Big) ~ - ~ O(n\eps_1)   && \text{by Lemma \ref{lem:F-S-f-S-H}}. 
\end{align*}
Letting $\delta = \tfrac{1}{n}$ and $\eps_1 = \frac{\eps}{n} f(O^*)$, and taking into account the remaining $\delta$ probability, we have 
\begin{align*}
    \E[ f(\ALG) ] & ~ \ge ~ (1-\delta) \E[ f(\ALG) \mid \mathcal E ] + \delta \cdot 0 \\
    & ~ \ge ~ \Big( \frac{1}{2} - \frac{h}{n-h} - \frac{1}{n} - \eps \Big) f(O^*) \\
    & ~ = ~ \Big( \frac{1}{2} - O(\eps)\Big) f(O^*)
\end{align*}
with $h = \Theta(\log^2(\frac{n}{\eps}))$ and $n \ge \Omega( \frac{1}{\eps} \log^2(\frac{n}{\eps}) )$. 
\end{proof}

\section{Discussion on High Probability Results}
\label{app:discussion-high-probability}

Our main results for noisy submodular maximization (Theorems \ref{theorem:meta}, \ref{thm:matroid-monotone}, \ref{thm:matroid-non-monotone}, \ref{thm:non-monotone-unconstrained}) are stated in terms of the expected value $\E[f(\ALG)] \ge (\alpha - o(1)) f(O^*)$. We discuss high-probability results in this section. 

A standard way to obtain high-probability results from expectation results is to repeat the randomized algorithm multiple times and output the best solution.  In our case, this means repeating Algorithm \ref{alg:matroid-general} for $T$ times and picking the best set among the $T$ outputted sets $S_1, \ldots, S_T$. The challenge here is how to \emph{compare two sets}, in order to pick the best one, using noisy values instead of the true values. 

For \textbf{monotone} submodular functions, \citet{huang_efficient_2022} provide a method to do noisy comparison. For any set $S \subseteq \N$, define the following comparison surrogate function $f_0(S)$ and the noisy version $\tilde f_0(S)$: 
\begin{align}
    f_0(S) = \frac{1}{|S|} \sum_{e\in S} f(S - e), \qquad \tilde f_0(S) = \frac{1}{|S|} \sum_{e\in S} \tilde f(S - e). 
\end{align}
To compare two sets $S_1$ and $S_2$, we compare $\tilde f_0(S_1)$ and $\tilde f_0(S_2)$. 
\begin{lemma}[\cite{huang_efficient_2022}]
\label{lem:huang-comparison}
Let $\eps, \delta \in (0, 1/2)$.  Suppose $|S| \ge \frac{\kappa}{\eps} \log(\frac{2}{\delta})$ where $\kappa$ is the sub-exponential norm of the noise multiplier.  Then, 
\begin{align}
    \Pr\Big[ \big| \tilde f_0(S) - f_0(S) \big| > \eps f_0(S) \Big] \le \delta. 
\end{align}
\end{lemma}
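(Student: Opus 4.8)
The plan is to realize $\tilde f_0(S) - f_0(S)$ as an average of $|S|$ independent, mean-zero, sub-exponential contributions, apply a Bernstein-type tail bound, and then convert the resulting absolute-error estimate into the claimed relative-error estimate using submodularity. First I would write
\begin{equation*}
    \tilde f_0(S) - f_0(S) ~=~ \frac{1}{|S|}\sum_{e\in S}\big(\xi_{S-e}-1\big)\,f(S-e).
\end{equation*}
The sets $\{S-e : e\in S\}$ are pairwise distinct, so by the ``independent across different sets'' property of the noisy oracle the multipliers $\{\xi_{S-e}\}_{e\in S}$ are independent; each satisfies $\E[\xi_{S-e}]=1$ and is sub-exponential with the same parameters as $\cD$. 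Hence the right-hand side is an average of $|S|$ independent mean-zero sub-exponential random variables, and the $e$-th summand $(\xi_{S-e}-1)f(S-e)$ has sub-exponential norm $\kappa\cdot f(S-e)$ by the scaling property of that norm.

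Next I would control these scales relative to $f_0(S)$. Monotonicity gives $f(S-e)\le f(S)$ for every $e$, and applying Lemma~\ref{lem:removing-one-element-submodularity} with $A=S$ gives $f(S)-f_0(S)=\E_{x\sim S}[f(S)-f(S-x)]\le \tfrac1{|S|}f(S)$, so $f_0(S)\ge\big(1-\tfrac1{|S|}\big)f(S)\ge\tfrac12 f(S)$ for $|S|\ge2$. Therefore each summand has sub-exponential norm at most $\kappa f(S)\le 2\kappa f_0(S)$, i.e.\ comparable to the denominator $f_0(S)$ of the target error. I would then apply the sub-exponential Bernstein inequality (Lemma~\ref{lem:concentration-lemma-sub-exponential}) to the average above with deviation $\eps f_0(S)$: the variance proxy is $O(\kappa^2 f_0(S)^2/|S|)$ (since $\sum_e f(S-e)^2\le f(S)\sum_e f(S-e)=|S|f(S)f_0(S)$) and the exponential scale is $O(\kappa f_0(S)/|S|)$, so the sub-Gaussian/linear crossover threshold is $\asymp\kappa f_0(S)$; in the regime where $\eps$ is not too small relative to $\kappa$ (the linear-tail branch), the tail probability is at most $2\exp\!\big(-\Omega(\eps|S|/\kappa)\big)$, which is $\le\delta$ once $|S|\ge\tfrac{\kappa}{\eps}\log\tfrac2\delta$.

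The main obstacle is the bookkeeping that turns this absolute-error concentration into the stated relative-error bound with exactly the $\tfrac\kappa\eps$ dependence on $|S|$: one must verify that the factor $|S|$ appearing in the denominator of $f_0$, in the variance proxy, and in the exponential scale all cancel correctly, and in particular that the target deviation $\eps f_0(S)$ lies on the linear-tail branch of the Bernstein bound (which is what yields $\tfrac\kappa\eps$ rather than the $\tfrac{\kappa^2}{\eps^2}$ that the sub-Gaussian branch would force). The submodularity lower bound $f_0(S)\ge(1-\tfrac1{|S|})f(S)$ is precisely what keeps $f_0(S)$ from being much smaller than the per-term fluctuation scale $\kappa f(S)$, so that the error $\eps f_0(S)$ genuinely sits in the favorable regime and the per-term scales can be bounded uniformly by a constant multiple of $f_0(S)$.
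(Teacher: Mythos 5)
First, note that the paper does not prove Lemma~\ref{lem:huang-comparison}: it is imported verbatim from \citet{huang_efficient_2022}, so there is no in-paper argument to compare against, and your attempt has to stand on its own. Your setup is fine as far as it goes: the decomposition $\tilde f_0(S)-f_0(S)=\frac{1}{|S|}\sum_{e\in S}(\xi_{S-e}-1)f(S-e)$, the observation that the sets $S-e$ are distinct so the multipliers are independent, and the use of monotonicity together with Lemma~\ref{lem:removing-one-element-submodularity} to get $f_0(S)\ge(1-\tfrac1{|S|})f(S)$, hence $f(S-e)\le f(S)\le 2f_0(S)$, are all correct (monotonicity is legitimately assumed here, since the paper only uses this comparison device for monotone $f$).

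The genuine gap is precisely the point you flag at the end and never resolve: in the lemma's parameter regime the deviation $\eps f_0(S)$ can \emph{never} lie on the linear-tail branch of the Bernstein bound. By your own estimates the crossover of Lemma~\ref{lem:concentration-lemma-sub-exponential} sits at deviations of order $\kappa f_0(S)$ (variance proxy $\asymp \kappa^2 f_0(S)^2/|S|$ divided by scale $\asymp \kappa f_0(S)/|S|$), while $\kappa$, being the sub-exponential norm of a nonnegative multiplier with mean $1$, satisfies $\kappa\ge 1$ (Jensen: $\E[e^{\xi/t}]\ge e^{1/t}$, so $t\ge 1/\log 2$), whereas the lemma allows any $\eps<1/2$. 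So $\eps f_0(S)$ is always below the crossover, the sub-Gaussian branch applies, and the tail you obtain is $2\exp\big(-\Omega(|S|\eps^2/\kappa^2)\big)$, which forces $|S|\gtrsim \frac{\kappa^2}{\eps^2}\log\frac{2}{\delta}$ rather than the stated $\frac{\kappa}{\eps}\log\frac{2}{\delta}$. This is not mere bookkeeping: if $f(S-e)=f(S)$ for all $e$, then $\tilde f_0(S)-f_0(S)$ is an average of $|S|$ i.i.d.\ centered multipliers, and for Gaussian-like noise of constant scale, $|S|=\frac{\kappa}{\eps}\log\frac{2}{\delta}$ samples give failure probability about $(\delta/2)^{\eps/(2\kappa)}\gg\delta$ for small $\eps$; hence a generic Bernstein argument on this average cannot deliver the $\kappa/\eps$ rate at all, and recovering the statement as written requires the specific definitions and derivation of \citet{huang_efficient_2022} (their generalized-exponential-tail machinery), not Lemma~\ref{lem:concentration-lemma-sub-exponential}. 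What your argument does prove is a weaker version with $\frac{\kappa^2}{\eps^2}\log\frac{2}{\delta}$ in place of $\frac{\kappa}{\eps}\log\frac{2}{\delta}$, which would still suffice for the qualitative use in Appendix~\ref{app:discussion-high-probability}, but it is not the quoted lemma.
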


Now, suppose we repeat Algorithm \ref{alg:matroid-general} for $T$ times and obtain solutions $S_1, \ldots, S_T$. 
Let $X_i = \frac{f(S_i)}{f(O^*)} \in [0, 1]$.
By Theorem \ref{thm:matroid-monotone}, the expected value of each solution satisfies $\E[ f(S_i)] \ge (1-1/e - \eps) f(O^*)$, namely $\E[X_i] \ge 1-1/e-\eps$. 
According to Markov's inequality,
$\Pr[X_i < 1 - 1/e - 2\eps] = \Pr[1-X_i > 1/e + 2\eps] \le \frac{\E[1 - X_i]}{1/e+2\eps} \le \frac{1/e+\eps}{1/e+2\eps} \approx 1 - e\eps$,
hence 
\begin{align*}
    \Pr\Big[ \max_{i\in[T]} X_i < 1 - 1/e - 2\eps \Big] \le (1 - e\eps)^T \le \delta
\end{align*}
given $T \ge \frac{\log(1/\delta)}{e \eps}$.
Namely, with probability at least $1 - \delta$, we have 
\begin{align} \label{eq:markov-result}
    \max_{i \in [T]} f(S_i) ~ \ge ~ (1 - 1/e - 2\eps) f(O^*). 
\end{align}

Let $S_{\tilde i}$ be the best solution according to noisy comparison, namely $\tilde i = \argmax_{i \in [T]} \tilde f_0(S_i)$. 
Let $i^* = \argmax_{i} f(S_i)$. 
The true value of $S_{\tilde i}$ satisfies
\begin{align*}
    f(S_{\tilde i}) & \ge f_0(S_{\tilde i}) && \text{by monotonicity} \\
    & \ge \frac{1}{1+\eps} \tilde f_0(S_{\tilde i}) && \text{by Lemma~\ref{lem:huang-comparison}} \\
    & \ge \frac{1}{1+\eps} \tilde f_0(S_{i^*}) && \text{by the definition of $\tilde i$} \\
    & \ge \frac{1-\eps}{1+\eps} f_0(S_{i^*}) && \text{by Lemma~\ref{lem:huang-comparison}} \\
    & \ge \frac{1-\eps}{1+\eps} \Big(1-\frac{1}{|S_{i^*}|}\Big) f(S_{i^*}) && \text{by Lemma~\ref{lem:removing-one-element-submodularity}} \\
    & \ge  \frac{1-\eps}{1+\eps} \Big(1-\frac{1}{|S_{i^*}|}\Big) \Big( 1 - \frac{1}{e} - 2 \eps \Big) f(O^*) && \text{by \eqref{eq:markov-result}}. 
\end{align*}
For maximizing a monotone submodular function, the size of each solution satisfies $|S_i| = r$. So, for the failure probability of $T$ noisy comparisons to be less than $\delta$, according to Lemma \ref{lem:huang-comparison} we need 
\begin{align*}
    r = |S_i| \ge \Omega\Big( \frac{\kappa}{\eps} \log\big(\frac{2T}{\delta} \big) \Big) = \Omega\Big( \frac{\kappa}{\eps} \log\big( \frac{ 2 \log(1/\delta)}{\delta \eps}\big) \Big). 
\end{align*}
We thus obtain the following high-probability result for noisy submodular maximization for monotone functions under matroid constraints: 
\begin{corollary}\label{cor:matroid-monotone}
Fix $\eps \in (0, \frac{2\nu^2}{\alpha})$. Suppose the matroid's rank $r \ge \Omega( \frac{1}{\eps} \log^2(\frac{n}{\eps}) + \frac{\kappa}{\eps} \log\big( \frac{ 2 \log(1/\delta)}{\delta \eps}\big))$.
By repeating the algorithm in Theorem \ref{thm:matroid-monotone} for $T = \frac{\log(1/\delta)}{e\eps}$ times and outputting $S_{\tilde i}$, we obtain an algorithm for maximizing monotone submodular functions under matroid constraints with noisy value oracle that, with probability at least $1 - \delta$, returns a solution satisfying 
\begin{align*}
     f(S_{\tilde i}) \ge \Big(1 - \frac{1}{e} - O(\eps) \Big) f(O^*). 
\end{align*}
\end{corollary}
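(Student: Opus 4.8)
The plan is to formalize the argument already outlined in Appendix~\ref{app:discussion-high-probability}, which rests on three ingredients: a Markov-type bound guaranteeing that one of $T$ independent runs outputs a near-optimal set, a uniform concentration bound for the comparison surrogate $\tilde f_0$ over the $T$ candidate sets, and a chain of inequalities relating $f(S_{\tilde i})$ to the value of the best candidate.

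First I would run the algorithm of Theorem~\ref{thm:matroid-monotone} independently $T = \Theta(\eps^{-1}\log(1/\delta))$ times (the constant chosen so the bound below gives $\delta/2$), obtaining $S_1,\dots,S_T$; since $f$ is monotone each $S_i$ may be extended to a basis without decreasing its value, so without loss of generality $|S_i| = r$. Writing $X_i = f(S_i)/f(O^*) \in [0,1]$, Theorem~\ref{thm:matroid-monotone} gives $\E[X_i] \ge 1 - 1/e - \eps$, so Markov's inequality applied to $1 - X_i$ yields $\Pr[X_i < 1 - 1/e - 2\eps] \le \frac{1/e+\eps}{1/e+2\eps} = 1 - c\eps$ for an absolute constant $c>0$. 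By independence of the runs, $\Pr[\max_i X_i < 1 - 1/e - 2\eps] \le (1-c\eps)^T \le e^{-c\eps T} \le \delta/2$. On the complementary event $\mathcal E_1$ there is an index $i^*$ with $f(S_{i^*}) \ge (1 - 1/e - 2\eps)f(O^*)$.

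Next I would apply Lemma~\ref{lem:huang-comparison} to the comparison surrogate $\tilde f_0$. Because each $S_i$ has size $r$ and the rank hypothesis gives $r \ge \Omega\big(\frac{\kappa}{\eps}\log\frac{\log(1/\delta)}{\delta\eps}\big) \ge \frac{\kappa}{\eps}\log\frac{4T}{\delta}$, a union bound over the $T$ sets ensures that, with probability at least $1-\delta/2$ (event $\mathcal E_2$), $|\tilde f_0(S_i) - f_0(S_i)| \le \eps\, f_0(S_i)$ for every $i$. Conditioning on $\mathcal E_1 \cap \mathcal E_2$ (probability at least $1-\delta$) and writing $\tilde i = \argmax_i \tilde f_0(S_i)$, I would chain exactly as in the appendix: $f(S_{\tilde i}) \ge f_0(S_{\tilde i}) \ge \frac{1}{1+\eps}\tilde f_0(S_{\tilde i}) \ge \frac{1}{1+\eps}\tilde f_0(S_{i^*}) \ge \frac{1-\eps}{1+\eps}f_0(S_{i^*}) \ge \frac{1-\eps}{1+\eps}\big(1 - \tfrac{1}{r}\big)f(S_{i^*}) \ge \frac{1-\eps}{1+\eps}\big(1-\tfrac1r\big)\big(1-\tfrac1e-2\eps\big)f(O^*)$, invoking respectively monotonicity, $\mathcal E_2$, the definition of $\tilde i$, $\mathcal E_2$ again, Lemma~\ref{lem:removing-one-element-submodularity} with $A = S_{i^*}$, and $\mathcal E_1$. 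Since the rank hypothesis forces $r = \Omega(1/\eps)$, we have $1/r = O(\eps)$ and $\frac{1-\eps}{1+\eps} = 1-O(\eps)$, so the right-hand side is $(1 - 1/e - O(\eps))f(O^*)$, which is the claim.

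I do not expect a genuine obstacle; the work is all bookkeeping. The points to be careful about are: splitting the failure budget $\delta$ between the ``no good run'' event $\mathcal E_1$ and the ``bad comparison'' event $\mathcal E_2$; checking that the stated rank bound simultaneously satisfies the hypothesis of Theorem~\ref{thm:matroid-monotone} (the $\Omega(\eps^{-1}\log^2(n/\eps))$ term) and makes Lemma~\ref{lem:huang-comparison} hold uniformly over all $T = \Theta(\eps^{-1}\log(1/\delta))$ candidates after the union bound (the $\Omega(\kappa\eps^{-1}\log\frac{\log(1/\delta)}{\delta\eps})$ term); choosing the constant in $T$ so that $(1-c\eps)^T \le \delta/2$; and justifying that extending each returned set to a basis (so that $|S_i| = r$ and hence $1/|S_{i^*}| = O(\eps)$) is harmless, which it is by monotonicity.
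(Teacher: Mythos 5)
Your proposal is correct and follows essentially the same route as the paper's argument in Appendix~\ref{app:discussion-high-probability}: Markov's inequality over $T$ independent runs, a union bound for the noisy comparisons via Lemma~\ref{lem:huang-comparison}, and the same chain of inequalities relating $f(S_{\tilde i})$ to $f(S_{i^*})$. Your explicit $\delta/2$ split between the two failure events and the remark that each $S_i$ can be extended to a basis by monotonicity are minor bookkeeping refinements of what the paper does.
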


For \textbf{non-monotone} submodular functions, the above approach to obtaining high-probability result does not work because: (1) the $f(S_i) \ge f_0(S_i)$ step in the noisy comparison analysis no longer holds, and (2) the size of the solution $|S_i|$ can be less than the rank $r$ of the matroid, so the $(1 - \frac{1}{|S_i|})$ factor in the approximation ratio can be small. 
Another attempt on obtaining high-probability results for non-monotone functions could be to use some concentration analyses for submodular functions, such as \citep{vondrak_note_2010}. However, the concentration analyses in prior works usually require the Lipschitz constant (the largest absolute marginal value) of the submodular function $f$ to be small compared to $f(O^*)$.  It remains open how to obtain high-probability results for maximizing general non-monotone submodular functions under noise.



\end{document}